\newcommand\refTheorem[1]{Theorem~\hyperref[#1]{\ref*{#1}}}
\newcommand\refLemma[1]{Lemma~\hyperref[#1]{\ref*{#1}}}
\newcommand\refCorollary[1]{Corollary~\hyperref[#1]{\ref*{#1}}}
\newcommand\refProposition[1]{Proposition~\hyperref[#1]{\ref*{#1}}}
\newcommand\refRemark[1]{Remark~\hyperref[#1]{\ref*{#1}}}
\newcommand\refFigure[1]{Figure~\hyperref[#1]{\ref*{#1}}}
\newcommand\refAlgorithm[1]{Algorithm~\hyperref[#1]{\ref*{#1}}}
\newcommand\refEquation[1]{(\hyperref[#1]{\ref*{#1}})}
\newcommand\refExample[1]{Example~\hyperref[#1]{\ref*{#1}}}
\newcommand\refSection[1]{Section~\hyperref[#1]{\ref*{#1}}}
\newcommand\refDefinition[1]{Definition~\hyperref[#1]{\ref*{#1}}}
\newcommand\prob[1]{\textsf{Pr}\left[#1\right]}
\newcommand\probsub[2]{\textsf{Pr}_{#1}\left[#2\right]}
\newcommand\expected[1]{\mathbb{E}\left[#1\right]}
\newcommand\expectedsub[2]{\mathbb{E}_{#1}\left[#2\right]}
\newcommand\expnobrack{\mathbb{E}}
\newcommand\indicator[1]{\mathbb{I}\left[#1\right]}
\newcommand\SW{\textsf{SW}}
\newcommand\all{\mathrm{all}}
\newcommand\OPT{\textsf{OPT}}
\newcommand\ALG{\textsf{ALG}}
\begin{document}
\title{Approximately Efficient Two-Sided Combinatorial Auctions}












\author{Riccardo Colini-Baldeschi \and Paul Goldberg \and Bart de Keijzer \and Stefano Leonardi  \and Tim Roughgarden \and Stefano Turchetta}


\institute{LUISS Rome, \email{rcolini@luiss.it} \and University of Oxford, \email{paul.goldberg@cs.ox.ac.uk}, \email{stefano.turchetta@cs.ox.ac.uk} \and Centrum Wiskunde \& Informatica (CWI), Amsterdam, \email{keijzer@cwi.nl} \and Sapienza University of Rome, \email{leonardi@diag.uniroma1.it} \and Stanford University, \email{tim@cs.stanford.edu}}



\maketitle
\thispagestyle{empty}
\setcounter{page}{0}

\begin{abstract}
Mechanism design for one-sided markets has been investigated for several decades in economics and  in computer science.  More recently, there has been an increased attention on mechanisms for two-sided markets, in which buyers and sellers act strategically.  For two-sided markets, an impossibility result of Myerson and Satterthwaite states that no mechanism can simultaneously satisfy {\em individual rationality (IR)}, {\em incentive compatibility (IC)}, {\em strong budget-balance (SBB)}, and be efficient. Follow-up work mostly focused on designing mechanisms that trade off among these properties, in settings where there exists one single type of items for sale, the buyers ask for one unit of the items, and the sellers initially hold one unit of the item.

On the other hand, important applications to web advertisement, stock exchange, and frequency spectrum allocation, require us to consider two-sided combinatorial auctions in which buyers have preferences on subsets of items, and sellers may offer multiple heterogeneous items. No efficient mechanism was known so far for such two-sided combinatorial markets. This work provides the first IR, IC and SBB mechanisms that provides an $O(1)$-approximation to the optimal social welfare for two-sided markets. 
An initial construction yields such a mechanism, but exposes a conceptual problem in the traditional SBB notion. This leads us to define the stronger notion of \emph{direct trade strong budget balance (DSBB)}. 

We then proceed to design mechanisms that are IR, IC, DSBB, and again provide an $O(1)$-approximation to the optimal social welfare. 
Our mechanisms work for any number of buyers with XOS valuations  -- a class in between submodular and subadditive functions -- and any number of sellers. We provide a mechanism that is dominant strategy incentive compatible (DSIC) if the sellers each have one item for sale, and one that is bayesian incentive compatible (BIC) if sellers hold multiple items and have additive valuations over them. Finally, we present a DSIC mechanism for the case that the valuation functions of all buyers and sellers are additive. We prove our main results by showing that there exists a variant of a sequential posted price mechanism, generalised to two-sided combinatorial markets, that achieves the desired goals.
\end{abstract}

\section*{Acknowledgement}
Partially supported by the ERC Advanced Grant 788893 AMDROMA ``Algorithmic and Mechanism Design Research in Online Markets'' and MIUR PRIN project ALGADIMAR ``Algorithms, Games, and Digital Markets''.

\newpage

\section{Introduction}
One-sided markets have been studied in economics for several decades and more recently in computer science.
Mechanism Design in one-sided markets aims to find an efficient (high-welfare) allocation of a set of items to a set of agents, while ensuring that truthfully reporting the input data is the best strategy for the agents. The cornerstone method in mechanism design is the  Vickrey-Clarke-Groves (VCG) mechanism \cite{vickrey61, clarke71, groves73} that optimizes the social welfare of the agents while providing the right incentives for truth-telling: VCG mechanisms are \emph{dominant strategy incentive compatible (DSIC)}, and in many mechanism design settings, VCG is also {\em individually rational (IR)}. IR requires that participating in the mechanism is beneficial to each agent. DSIC requires that truthfully reporting one's preferences to the mechanism is a dominant strategy for each agent, independently of what the other agents report. 

Recently, increased attention has been on the problems that arise in two-sided markets, in which the set of agents is partitioned into \emph{buyers} and \emph{sellers}. As opposed to the one-sided setting (where one could say that the mechanism itself initially holds the items), in the two-sided setting the items are initially held by the set of sellers, who express valuations over the items they hold, and who are assumed to act rationally and strategically. The mechanism's task is now to decide which buyers and sellers should trade, and at which prices. There is a growing interest in two-sided markets that can be attributed to various important applications. Examples range from selling display-ads on ad exchange platforms, the US FCC spectrum license reallocation, and stock exchanges. Two-sided markets are usually studied in a Bayesian setting: there is public knowledge of probability distributions, one for each buyer and one for each seller, from which the valuations of the buyers and sellers are drawn.

In two-sided markets, a further important requirement is {\em strong budget-balance (SBB)}, which states that monetary transfers happen only among the agents in the market, i.e., the buyers and the sellers are allowed to trade without leaving to the mechanism any share of the payments, and without the mechanism adding money into the market. 
A weaker version of SBB often considered in the literature is \emph{weak budget-balance (WBB)}, which only requires the mechanism not to inject money into the market.  

Unfortunately, Myerson and Satterthwaite \cite{ms83} proved that it is impossible for an IR, Bayesian incentive compatible (BIC),\footnote{Bayesian incentive compatibility is a form of incentive compatibility that is less restrictive than DSIC. It only requires that reporting truthfully is in expectation the best strategy for a player when everyone else also does so, i.e., truthful reporting is a Bayes-Nash equilibrium.}
and WBB mechanism to maximize social welfare in such a market, even in the \emph{bilateral trade} setting, i.e., when there is just one seller and one buyer.\footnote{The VCG mechanism can also be applied to two-sided markets; however, in this setting, VCG is either not IR or it does not satisfy WBB.}



Despite the  numerous above mentioned practical contexts that need the application of \emph{combinatorial} two-sided  market mechanisms, we are not aware of any mechanism that approximates the social welfare while meeting the IR, IC and SBB requirements. 
The purpose of this paper is to provide mechanisms that satisfy these requirements and achieve an $O(1)$-approximation to the social welfare for a broad class of agents' valuation functions. We do, in fact, design mechanisms that work under the assumption of the valuations being {\em fractionally subadditive (XOS)}, a generalisation of submodular functions that are contained in the class of subadditive functions. 

Our work builds upon previous work on an important special case of a two-sided market that is the one in which each seller holds a single item, items are identical, and each agent is only interested in holding a single item. In this setting, the valuations of the agents are thus given by a single number, representing the agent's appreciation for holding an item. A mechanism for this setting is known in the literature as a \emph{double auction}.
The goal of several works on double auctions \cite{mcafee92, SW89, sw02} has been that of trading off the achievable social welfare with the strength of the incentive compatibility and budget balance constraints. 
A recent work addressed the problem of approximating social welfare in double auctions and related problems under the WBB requirement. D\"{u}tting, Talgam-Cohen, and Roughgarden \cite{dtr14} indeed proposed a greedy strategy that combines the one-sided VCG mechanism, independently applied to buyers and to sellers with the trade-reduction mechanism of McAfee \cite{mcafee92}. They obtain IR, DSIC, WBB mechanisms with a good approximation of the social welfare, for knapsack, matching and matroid allocation constraints.  
More recently, Colini-Baldeschi et al.~\cite{cdlt16} presented the first double auction that satisfies IR, DSIC, and SBB, and approximates the optimal (expected) social welfare up to a constant factor. These results hold for any number of buyers and sellers with arbitrary, independent distributions on valuations. The mechanisms are also extended to the setting where there is an additional matroid constraint on the set of buyers who can purchase an item.

\subsection{The Model}
As stated above, the set of agents is partitioned into a set of {\em sellers}, each of which is initially endowed with a set of heterogeneous items, and a set of {\em buyers}, having no items initially. Buyers have money that can be used to pay for items. Every agent has its own, private valuation function, which maps subsets of the items to numbers, and agents are assumed to optimize their (quasi-linear) utility, which is given by the valuation of the set of items that the mechanism allocates to the agents, minus the payment that the mechanism asks from the agents. A seller will typically receive money instead of pay money, which we model by a negative payment. 

For each agent we are given a (publicly known) probability distribution over a set of valuation functions, from which we assume her valuation function is drawn. The mechanism and the other agents have no knowledge of the actual valuation function of the $i$-th agent, but only of her probability distribution. The general aim of the mechanism is to reallocate the items so as to maximize the expected social welfare (the sum of the agents' valuations of the resulting allocation). 

Let \OPT\ be the expected social welfare of an optimal allocation of the items. Note that this is a well-defined quantity, even though computing an optimal allocation may be computationally hard, and even though there might not exist an appropriate mechanism that is guaranteed to always output an optimal allocation. 

We are interested in mechanisms that satisfy \emph{individual rationality (IR)}, \emph{dominant strategy incentive compatibility (DSIC)} (or failing that, the weaker notion of \emph{Bayesian incentive compatibility (BIC)}), and \emph{strong budget-balance (SBB)}, and that reallocate the items in such a way that the expected social welfare is within some constant fraction of \OPT, where expectation is taken over the given probability distributions of the agents' valuations, and over the randomness of the allocation that the mechanism outputs. The obstacle of interest to us is not the computational one, but the requirement of achieving mechanisms with such approximation guarantees, that are also DSIC, IR, and SBB. 
The main focus of this paper is the existence of mechanisms that obtain a constant approximation to the optimal social welfare and not the computational issues related to them.
We remark, however, that our mechanisms can be implemented in polynomial time at the cost of an additional welfare loss of a factor $c$, where $c$ is the best-known approximation factor for the problem of optimising buyers' social welfare, if both a poly-time approximation algorithm and an approximation for the query oracle exist (see e.g., \cite{fgl15}).


\subsection{Overview of the Results}

The present paper starts off by showing that there is a straightforward trick that one may apply to turn any WBB mechanism into an SBB one, with a small loss in approximation factor. The technique is to pre-select one random agent that is taken out of the market a priori whose role is to receive all leftover money, and it can be applied in order to obtain from \cite{bd14} a $O(1)$-approximate DSIC, IR, and SBB mechanism for a very broad class of markets. This therefore demonstrates a weakness in the current notion of SBB, which motivates the introduction of the strengthened notion \emph{direct trade strong budget balance (DSBB)}, that requires that a monetary transfer between two agents in the market is only possible in case the agents trade items. 



The goal of our proposed mechanisms is twofold: (i) achieve a constant approximation to the optimal social welfare, and (ii) design mechanisms that respect the stronger notion of DSBB. Note that with non-unit-supply sellers, a constant approximation is not known even in the context of WBB or standard SBB.\footnote{The mechanism proposed in \cite{bd14} achieves a constant approximation if the size of the initial endowment of each agent is bounded by a constant. Otherwise the mechanism achieves a logarithmic approximation to the optimal social welfare.}
We present the following mechanisms:
\begin{itemize}
\item A $6$-approximate DSIC mechanism for buyers with XOS-valuations and sellers with one item at their disposal (i.e., \emph{unit-supply sellers});
\item a $6$-approximate BIC mechanism for buyers with XOS-valuations and non-unit supply sellers with additive valuations;
\item a $6$-approximate DSIC mechanism for buyers with additive valuations and sellers with additive valuations.
\end{itemize} 
XOS functions lie in between the class of monotone submodular functions and subadditive functions in terms of generality.
Additive and XOS functions are frequently used in the mechanism design literature. To our knowledge, these are the first mechanisms for these two-sided market settings that are simultaneously incentive compatible, (D)SBB, IR, and approximate the optimal social welfare to within a constant factor.

A first ingredient needed to obtain our results is the extension of {\em two-sided sequential posted price mechanisms (SPMs)} \citep{cdlt16} for double auctions to two-sided markets. SPMs are a particularly elegant and well-studied class of mechanisms for one-sided markets. All the presented SPMs do not require any assumption on the arrival order of the agents.
A second ingredient of our result is to use the \emph{expected marginal contribution of an item to the social welfare} as the price of the item in a sequential posted price mechanism for buyers with XOS valuations \cite{fgl15}. 

\subsection{Related Work}\label{sec:related}
Due to the impossibility result of \citep{ms83}, no two-sided mechanism can simultaneously satisfy BIC, IR, WBB and be socially efficient, even in the simple bilateral trade setting.
Follow-up work thus had to focus on designing mechanisms that trade off among these properties.

Some paper of the Economics literature studied the convergence rate to social efficiency as a function of the number of agents when all sellers' and buyers' valuations are independently respectively drawn from identical regular distributions, while satisfying IR and WBB. \citet{gs89} showed that duplicating the number of agents by $\tau$ results in a market where the optimal IR, IC, WBB mechanism's inefficiency goes down as a function of $O(\log{\tau} / \tau^2)$. \citep{rsw94, sw02} investigated a family of non-truthful double auction mechanisms, parameterized by a value $c \in [0,1]$. We remark that the results mentioned above only hold for unit-demand buyers and unit-supply sellers, identical valuation distributions, and the hidden constants in these asymptotic result depend on the particular distribution. In contrast, our interest is in finding \emph{universal} constant approximation guarantees for combinatorial settings.

In \citet{mcafee92}, an IC, WBB, IR double auction is proposed that extracts at least a $(1 - 1/\ell)$ fraction of the maximum social welfare, where $\ell$ is the number of traders in the optimal solution. 

Optimal revenue-maximizing Bayesian auctions were characterized in \citep{myerson81}, which provides an elegant tool applicable to single-parameter, one-sided auctions. Various subsequent articles
dealt with extending these results. Related to our work is \citep{dgtz14}, which studied maximizing the auctioneer's revenue in Bayesian double auctions. The same objective was studied in \citep{dghk02} yet in the \emph{prior-free} model.
Recently, \citet{dtr14} provided black-box reductions from WBB double auctions to one-sided mechanisms. They are for a prior-free setting and can be applied with matroid, knapsack, and matching feasibility constraints on the allocations. More recently, \cite{cdlt16} presented the first IR, IC and SBB mechanism for double auctions that $O(1)$-approximates the optimal social welfare. 
In \cite{DBLP:journals/corr/Segal-HaleviHA16}, mechanisms for some special cases of two-sided markets are presented that work by a combination of random sampling and random serial dictatorship.
The mechanism is IR, SBB and DSIC and its \emph{gain from trade} approaches the optimum when the market is sufficiently large.
Mechanisms that are IC, IR, and SBB have been given for bilateral trade in \citep{bd14}. In addition to this, the authors proposed a WBB mechanism for a general class of combinatorial exchange markets. We will use this result to construct our initial mechanism. 

Sequential posted price mechanisms (SPMs) in one-sided markets have been introduced in \citep{sandholm2004} and have since gained attention due to their simplicity, robustness to collusion, and their easy
implementability in practical applications. One of the first theoretical results concerning SPMs is an asymptotic comparison among three different types of single-parameter mechanisms \citep{bh08}. They were later studied for the objective of revenue maximisation in \citep{chms10}. Additionally, \cite{kw12, dk15} strengthen these results further. 
Very relevant to our work is the paper of Feldman et al. \cite{fgl15} showing that sequential posted price mechanisms can approximate social welfare up to a constant factor of $1/2$ for XOS valuation functions if the published price for an item is equal to the expected additive contribution of the item to the social welfare.

\section{Preliminaries}\label{sec:prelims}
As a general convention, we use boldface notation for vectors and use $[a]$ to denote the set $\{1, \ldots, a\}$. We will use $\mathbb{I}(X)$ to denote the indicator function that maps to $1$ if and only if event/fact $X$ holds. 
\smallbreak
\noindent \textbf{Markets.}
A \emph{two-sided market} comprises a set of two distinct types of agents: the \emph{sellers}, who initially hold items for sale, and the \emph{buyers}, who are interested in buying the sellers' items. All agents possess a monotone and normalized valuation function, mapping subsets of items to $\mathbb{R}_{\geq 0}$. \footnote{By \emph{monotone} we mean that getting more items cannot decrease an agent's overall valuation, and with normalized we mean that the empty set is mapped to $0$.} Formally, we represent a two-sided market as a tuple $(n, m, k,\bm{I},\bm{G},\bm{F})$,  where $[n]$ denotes the set of buyers, $[m]$ denotes the set of sellers, $[k]$ denotes the set of all items for sale, $\bm{I} := (I_1, \ldots, I_m)$ is a vector of (mutually disjoint) sets of items initially held by each seller, called the \emph{initial endowment}, where it holds that $\bigcup_{j=1}^m I_j = [k]$. Vectors $\bm{G} = (G_1, \ldots, G_n)$ and $\bm{F} = (F_1, \ldots, F_m)$ are vectors of probability distributions, from which the buyers' and sellers' valuation functions are assumed to be drawn. 
A \emph{(combinatorial) exchange market} is a more general version of the above defined two-sided market where an agent can act as both a buyer and a seller. Thus, everyone may initially own items and may both sell and buy items 
As a result, in this setting, we override the notation and simply use $n$ to denote the total number of agents.
Formally, an exchange market is thus a tuple $(n, k, \bm{I}, \bm{F})$. 

In two-sided markets sellers are assumed to only value items in their initial bundle and are therefore not interested in buying from other sellers, i.e., $\forall j \in [m]$ and $\forall S \subseteq [k]$, $w_{j}(S) = w_{j}(S \cap I_{j})$. On the contrary, in exchange markets, no such restriction on the valuation functions exists.

Throughout the paper, we reserve the usage of the letter $i$ to denote a single buyer, the letter $j$ to denote a single seller, and the letter $\ell$ to denote a single item. Moreover, we use $v_i$ to denote buyer $i$'s valuation function and $w_j$ to denote seller $j$'s valuation function. 

\smallbreak 
\noindent \textbf{Mechanism Design Goals.}
In order to avoid overloading this section, the following content will only be described for two-sided markets (which are the main focus of the paper). Nonetheless, any of the subsequent concepts can naturally be extended to combinatorial exchange markets.

Given a two-sided market, our aim is to redistribute the items among the agents so as to maximize the \emph{social welfare} (the sum of the agents' valuations). An \emph{allocation} for a given two-sided market $(n, m, k, \bm{I}, \bm{G}, \bm{F})$ is a pair of vectors $(\bm{X},\bm{Y}) = ((X_1, \ldots, X_n),(Y_1,\ldots,Y_m))$ such that the union of $X_1, \ldots, X_n, Y_1, \ldots Y_m$ is $[k]$, and $X_1,\ldots X_n,Y_1, \ldots, Y_m$ are mutually non-intersecting. 

Redistribution of the items is done by running a \emph{mechanism} $\mathbb{M}$. A mechanism interacts with and receives input from the buyers, and outputs an \emph{outcome}, consisting of an allocation $(\bm{X},\bm{Y})$ and a payment vector $(\bm{\rho}^B, \bm{\rho}^S) \in \mathbb{R}^n \times \mathbb{R}^m$, where $\bm{\rho}^B$ refers to the buyers' vector of payments and $\bm{\rho}^S$ to the sellers' one. 
The outcome of a mechanism $\mathbb{M}$ is represented by a tuple $(\bm{X},\bm{Y},\bm{\rho}^B,\bm{\rho}^S)$.

Agents are assumed to maximize their \emph{utility}, which is defined as the valuation for the bundle of items that they get allocated, minus the payment charged by the mechanism. In particular, the utility $u_i^B(v_i,(\bm{X},\bm{Y},\bm{\rho}^B,\bm{\rho}^S))$ of a buyer $i \in [n]$ is $v_i(X_i) - \rho^B_i$, whereas for a seller $j \in [m]$ it is $u_j^S(w_j,(\bm{X},\bm{Y},\bm{\rho}^B,\bm{\rho}^S)) = w_j(Y_j) - \rho^S_j$.

Since agents are assumed to maximize their utility, they will strategically interact with the mechanism so as to achieve this. Our goal is to design mechanism such that there is a dominant strategy or Bayes-Nash equilibrium for the agents under which the mechanism returns an allocation with a high social welfare. For a allocation $(\bm{X}, \bm{Y})$, the social welfare $\SW(\bm{X}, \bm{Y})$ is defined as $\textsf{SW}(\bm{X}, \bm{Y}) = \sum_{i \in [n]} v_i(X_i) + \sum_{j \in [m]} w_j(Y_j)$.

The objective function we want to maximize is the above defined social welfare. We now describe three main constraints our mechanisms must adhere to. For each of these constraints we first introduce the strictest version and then a more relaxed one. Our mechanisms aim to satisfy the strictest versions, whenever possible.

\begin{itemize}
\item \emph{Dominant strategy incentive compatibility (DSIC):} It is a dominant strategy for every agent to report her true valuation sincerely, no matter what the others do, i.e., no agent can increase her utility by misreporting her true valuation.
		\item \emph{Bayesian incentive compatibility (BIC):} No agent can obtain a gain in her expected utility by declaring a valuation different from her true one, where the expectation is taken w.r.t. the others' valuations.\footnote{Technically, as can be inferred, the DSIC properties are reserved for \emph{direct revelation mechansims}, i.e., where the buyer solely interacts with the mechanism by reporting his valuation function. It is well-known that mechanisms admitting a dominant strategy can be transformed into DSIC direct revelation mechanisms, and those with a Bayes-Nash equilibrium can be transformed into BIC direct revelation mechanisms. This way, we extend the DSIC and BIC definitions to non-direct revelation mechanisms.}

		\item \emph{Ex-post individual rationality (ex-post IR):} It is not harmful for any agent to participate in the mechanism, i.e., there is guaranteed to be a strategy for an agent that yields the agent an increase in utility.
		\item \emph{Interim individual rationality (interim IR):} There is a strategy for each agent that yields her an expected increase in utility (where expectation is over the valuations of the other agents and the internal randomness of the mechanism).

		\item \emph{Strong Budget Balance (SBB):} The sum of all agents' payments output by the mechanism is equal to zero. Conceptually, this means that no money is burnt or ends up at an external party, and no external party needs to subsidise the mechanism.
		\item \emph{Weak Budget Balance (WBB):} The sum of all payments is at least zero. In two sided-markets this translates in having the buyers' payments being at least as large as the payments received by the sellers. 
\end{itemize}

For valuation profiles $(\bm{v}, \bm{w})$, $\textsf{OPT}(\bm{v}, \bm{w}) := \max_{\bm{X}, \bm{Y}}\{\textsf{SW}(\bm{X}, \bm{Y})\}$ denotes the \emph{optimal social welfare}, where the max goes over all feasible allocations $\bm{X}, \bm{Y}$. The \emph{expected optimal social welfare} is the value $\textsf{OPT} = \mathbb{E}_{\bm{v}, \bm{w}}[\textsf{OPT}(\bm{v}, \bm{w})]$. We say that a mechanism $\mathbb{M}$ \emph{$\alpha$-approximates the optimal social welfare} for some $\alpha > 1$ iff $\textsf{OPT} \leq \alpha \mathbb{E}_{\bm{v}, \bm{w}}[\mathsf{SW}(\mathbb{M}(\bm{v}, \bm{w}))]$. Our goal is to find mechanisms that $\alpha$-approximates the optimal social welfare for a low $\alpha$, are DSIC or BIC, SBB, and ex-post IR or interim IR.

\noindent \textbf{Classes of Valuation Functions.}
We will consider probability distributions over the following classes of valuation functions. Let $v : 2^{[k]} \rightarrow \mathbb{R}_{\geq 0}$ be a valuation function. Then,
\begin{itemize}
	\item $v$ is \emph{additive} iff there exists numbers $\alpha_1, \ldots \alpha_k \in \mathbb{R}_{\geq 0}$ such that $v(S) = \sum_{j \in S} \alpha_j$ for all $S \subseteq [k]$.
	\item $v$ is \emph{fractionally subadditive (or XOS)} iff there exists a collection of additive functions $a_1, \ldots, a_{d}$ such that for every bundle $S \subseteq [k]$ it holds that $v(S) = \max_{i \in [d]} a_i(S)$.
\end{itemize}
It is easy to see that every additive function is a XOS function. Further, it is well-known that the class of submodular functions are contained in the class of XOS functions. Due to space constraints, all proofs in the remainder of this paper have been deferred to the appendix.

\section{An Initial Mechanism and Direct Trade Strong Budget Balance}\label{sec:dsbb}
In \cite{bd14}, Blumrosen and Dobzinski present a mechanism for exchange markets with subadditive valuation functions. They prove the following for this mechanism, which we name $\mathbb{M}_{\text{bd}}$.
\begin{theorem}[Blumrosen and Dobzinski \cite{bd14}]
Mechanism $\mathbb{M}_{\text{bd}}$ is a DSIC, WBB, ex-post IR randomized direct revelation mechanism that $4H(s)$-approximates the optimal social welfare for combinatorial exchange markets $(n,k,\bm{I},\bm{F})$ with subadditive valuation functions, where $s = \min\{n, |I_i| : i \in [n]\}$ is the minimum of the number of agents and the number of items in an agent's initial endowment, and $H(\cdot)$ denotes the harmonic numbers.
\end{theorem}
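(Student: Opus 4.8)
Since $\mathbb{M}_{\text{bd}}$ is the mechanism of \cite{bd14}, the plan is to recall its construction and then discharge the four assertions — DSIC, WBB, ex-post IR, and the $4H(s)$-approximation — one by one. Two ingredients of the construction do the work. First, a random sampling step decouples each agent's simultaneous buyer/seller role: the agents are randomly split into a selling side that offers (parts of) its endowments and a buying side that is served by a one-sided allocation subroutine, turning the exchange market into a family of one-sided problems. Second, a recursive trade-reduction in the spirit of McAfee \cite{mcafee92} peels the endowments off in levels to restore budget balance; the recursion has depth $O(s)$, with $s = \min\{n, |I_i| : i \in [n]\}$, since each level removes at least one item from every still-active agent and there are only $n$ agents.

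For DSIC and ex-post IR I would observe that in every one-sided subproblem the mechanism only ever presents an agent with a take-it-or-leave-it offer — a bundle at a price — whose bundle and price depend solely on the internal randomness and on the \emph{other} agents' reports. Conditioned on the randomness, a truthful report lets the agent accept exactly the offers that raise her utility, so truthfulness is dominant; and declining every offer (and, on the selling side, keeping one's own endowment) guarantees non-negative utility, giving ex-post IR. Where an agent sits on the bundle-valued side I would additionally invoke the standard fact that a monotone allocation rule with threshold payments is dominant-strategy truthful. For WBB I would track the money level by level: the trade-reduction step is designed so that within each level the total charged to the receiving agents is at least the total paid to the agents who give up items, hence the net transfer out of the mechanism is non-negative.

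The approximation bound is the real work. I would fix a welfare-optimal reallocation, regard it as a collection of trades, each moving a set of items from its original owner to a new owner, and bound the two sources of loss separately. The random role assignment together with the one-sided subroutine costs only an $O(1)$ factor: a given optimal trade is \emph{recoverable} — its source lands on the selling side and its sink on the buying side — with constant probability, and conditioned on this, the subroutine that reallocates the selling side's items among the buying side recovers a constant fraction of that trade's marginal contribution to the welfare; summing over the optimal trades and using monotonicity gives a constant-factor bound before budget balance is imposed. Restoring budget balance is where the $H(s)$ enters: the recursive trade-reduction discards at most a $\Theta(1/r)$ fraction of the not-yet-secured welfare at level $r$, so with depth $O(s)$ the surviving fraction is $\Omega(1/H(s))$ by the harmonic sum, and this is the step in which subadditivity is used, to bound the value of the bundles removed at each level against the value retained via $w(S) \le \sum_{\ell \in S} w(\{\ell\})$. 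Combining the constant from the role assignment with the $\Omega(1/H(s))$ from balancing, and checking that the constant is at most $4$, yields $\expected{\SW(\mathbb{M}_{\text{bd}}(\bm v, \bm w))} \ge \OPT/(4H(s))$.

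I expect this last part to be the main obstacle, for two reasons. First, the charging must be arranged so that each unit of optimal welfare is charged to at most $4H(s)$ units of realized welfare; a naive level-by-level accounting double-counts and has to be set up so that it telescopes into the harmonic sum $H(s)$ rather than into $s$. Second, the valuations are only subadditive, not XOS, so there are no additive supporting prices; the trade-reduction analysis must bound the removed bundles against the retained value using only the \emph{split into singletons} inequality above, and it is precisely this argument — paid once per endowment level — that produces the logarithmic dependence and makes $s$, rather than the number of items, the governing parameter. By contrast, DSIC, WBB, and ex-post IR are routine given the take-it-or-leave-it structure and the level-by-level money tracking.
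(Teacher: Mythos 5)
This theorem is imported verbatim from Blumrosen and Dobzinski \cite{bd14}; the present paper states it (and restates it as Theorem~\ref{thm:db-sw} in the appendix) but never proves it -- it is used as a black box to build $\mathbb{M}_{\text{sbb}}$. So there is no ``paper's own proof'' to compare against, and a proof attempt is not what the statement calls for here; a citation suffices.

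That said, since you did attempt a reconstruction, two substantive problems are worth flagging. First, you appear to be conflating the Blumrosen--Dobzinski mechanism with the D\"utting--Talgam-Cohen--Roughgarden approach: it is \cite{dtr14}, not \cite{bd14}, that combines one-sided mechanisms with McAfee-style trade reduction. The mechanism in \cite{bd14} is instead built around random sampling and posted/fixed prices, which is also why the present paper groups it with posted-price machinery rather than with trade-reduction machinery. A ``recursive trade-reduction in levels'' is not a faithful description of $\mathbb{M}_{\text{bd}}$, so the whole scaffolding of your approximation argument rests on a mechanism that does not match the one the theorem is about. Second, even taken on its own terms, your depth bound does not hold together: you argue the recursion has depth $O(s)$ ``since each level removes at least one item from every still-active agent and there are only $n$ agents.'' If every level strips one item from every active agent, the number of levels an agent survives is its endowment size, so this bounds the depth by $\max_i |I_i|$ (and by $n$ only if levels additionally retire one agent each, which you have not argued); neither gives $\min\{n, |I_i| : i \in [n]\}$. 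The harmonic factor $H(s)$ in the real proof comes from the way \cite{bd14} randomizes over bundle sizes when sampling prices, not from a telescoping trade-reduction recursion. For the purposes of this paper, the right move is simply to cite \cite{bd14} and move on; the work of the paper begins with the wrapper $\mathbb{M}_{\text{sbb}}$ and its analysis.
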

In particular, this mechanism gives us a constant approximation factor if the number of starting items of the agents is bounded by a constant.

Now consider a mechanism $\mathcal{M}_{\text{sbb}}$ that selects an agent $i \in [n]$ uniformly at random, runs $\mathcal{M}_{\text{bd}}$ on the remaining agents, and allocates the surplus money of $\mathbb{M}_{\text{bd}}$ to agent $i$. We then are able to proof the following.
\begin{theorem}
Mechanism $\mathbb{M}_{\text{sbb}}$ is DSIC, ex-post IR, SBB, and achieves an $8nH(s)/(n-1)$-approximation to the optimal social welfare for exchange markets with subadditive valuations and at least $3$ agents.\footnote{For $2$ agents, it is straightforward to come up with alternative mechanisms that have the desired properties.}
\end{theorem}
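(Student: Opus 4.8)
The plan is to check DSIC, ex-post IR and SBB as essentially bookkeeping on top of the corresponding guarantees of $\mathbb{M}_{\text{bd}}$, and to put the real effort into the welfare bound. Throughout, write $v_j$ for the valuation of agent $j$ in the exchange market, and for a choice of removed agent $i$ let $\textsf{OPT}_{-i}$ denote the optimal social welfare of the market on agents $[n]\setminus\{i\}$ with items $[k]\setminus I_i$.

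For \textbf{DSIC} I would condition on the uniformly random removed agent $i$. If a given agent is the one removed, her report is never read and she deterministically keeps $I_i$ and receives the surplus, so truthful reporting is (weakly) optimal. If she is not removed she participates in $\mathbb{M}_{\text{bd}}$, a DSIC randomized direct-revelation mechanism, and the surplus is handed to someone else, so her utility coincides with her $\mathbb{M}_{\text{bd}}$-utility, which truthful reporting maximizes against every profile of the others and every realization of $\mathbb{M}_{\text{bd}}$'s coins; being optimal conditioned on each realization of the random choice, truthful reporting is a dominant strategy. For \textbf{ex-post IR}: a non-removed agent $j$ receives exactly her $\mathbb{M}_{\text{bd}}$-outcome, so by ex-post IR of $\mathbb{M}_{\text{bd}}$ her utility is at least $v_j(I_j)$; the removed agent keeps $I_i$ and additionally collects the surplus, which is nonnegative by WBB of $\mathbb{M}_{\text{bd}}$, so her utility is at least $v_i(I_i)$. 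For \textbf{SBB}: the payments of the $n-1$ non-removed agents sum to the surplus of $\mathbb{M}_{\text{bd}}$ by definition of the surplus, and the removed agent's payment is the negative of that surplus, so all payments sum to zero.

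For the \textbf{approximation} I would fix a valuation profile and an optimal allocation $A_1,\dots,A_n$ (so $\sum_j v_j(A_j)=\textsf{OPT}$) and first reduce to a purely combinatorial inequality. Restricting this allocation to the market obtained by deleting agent $i$ --- give each $j\neq i$ the bundle $A_j\setminus I_i$ and hand any leftover items to an arbitrary agent --- is feasible there, so $\textsf{OPT}_{-i}\ge\sum_{j\neq i}v_j(A_j\setminus I_i)$. Applying the Blumrosen--Dobzinski theorem to the reduced market, $\mathbb{M}_{\text{bd}}$ run on it has expected welfare at least $\frac{1}{4H(s)}\textsf{OPT}_{-i}$ (up to the routine bookkeeping of how deleting an agent affects the Blumrosen--Dobzinski parameter of the reduced market, which I would not belabor). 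Since the outcome of $\mathbb{M}_{\text{sbb}}$ also contributes the removed agent's own value $v_i(I_i)\ge 0$, averaging over $i$ gives $\mathbb{E}[\textsf{SW}(\mathbb{M}_{\text{sbb}})]\ge\frac{1}{4nH(s)}\sum_i\textsf{OPT}_{-i}$; hence it suffices to show $\sum_{i\in[n]}\textsf{OPT}_{-i}\ge\frac{n-1}{2}\textsf{OPT}$ and then take expectations over the valuation distributions, which yields exactly the $\frac{8nH(s)}{n-1}$-approximation.

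The heart of the argument, and the step I expect to be the main obstacle, is this inequality. Using $\textsf{OPT}_{-i}\ge\sum_{j\ne i}v_j(A_j\setminus I_i)$ and exchanging the two sums, it reduces to proving, for each fixed agent $j$, that $\sum_{i\ne j}v_j(A_j\setminus I_i)\ge\frac{n-1}{2}v_j(A_j)$. The crucial fact is a pairwise bound: for distinct $i,i'\ne j$, $v_j(A_j\setminus I_i)+v_j(A_j\setminus I_{i'})\ge v_j(A_j)$. This holds because $A_j=(A_j\setminus I_i)\cup(A_j\cap I_i)$ together with subadditivity gives $v_j(A_j)\le v_j(A_j\setminus I_i)+v_j(A_j\cap I_i)$, while $I_i\cap I_{i'}=\emptyset$ forces $A_j\cap I_i\subseteq A_j\setminus I_{i'}$, so monotonicity gives $v_j(A_j\cap I_i)\le v_j(A_j\setminus I_{i'})$, and the two combine. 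Summing this over all $\binom{n-1}{2}$ unordered pairs from $[n]\setminus\{j\}$, in which each term $v_j(A_j\setminus I_i)$ occurs exactly $n-2$ times, and dividing by $n-2$ --- legitimate precisely because $n\ge 3$ --- yields the per-agent inequality; summing over $j$ finishes the bound (and the per-agent inequality is tight). The case $n=2$ is genuinely excluded here, since deleting either agent destroys all trade and $\sum_i\textsf{OPT}_{-i}$ degenerates to the no-trade welfare, which can be arbitrarily far from $\textsf{OPT}$; it is handled separately as noted in the footnote.
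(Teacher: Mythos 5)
Your proof is correct and takes essentially the same route as the paper: both check DSIC/IR/SBB by conditioning on the removed agent, lower-bound the welfare of $\mathbb{M}_{\text{bd}}$ on $C_{-i}$ by the restriction of a fixed optimal allocation, and reduce the claim to the combinatorial inequality $\sum_{i}\sum_{j\ne i} v_j(A_j\setminus I_i)\ge\tfrac{n-1}{2}\sum_j v_j(A_j)$, established by a pairwise subadditivity bound summed over the $\binom{n-1}{2}$ unordered pairs and divided by $n-2$ (hence the $n\ge 3$ restriction). Your derivation of the pairwise bound via $A_j=(A_j\setminus I_i)\cup(A_j\cap I_i)$ plus monotonicity is a trivial rephrasing of the paper's direct observation that $A_j=(A_j\setminus I_i)\cup(A_j\setminus I_{i'})$ when $I_i$ and $I_{i'}$ are disjoint.
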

The proof and a more precise description of $\mathbb{M}_{\text{sbb}}$ are provided in Appendix \ref{sec:comb_exchange}.
This yields an ex-post IR, SBB, DSIC mechanism that $O(1)$-approximates the social welfare if the number of items initially posessed by an agent is bounded by a constant.
The principle that we used to construct Mechanism $\mathbb{M}_{\text{sbb}}$ can more generally be used to turn any WBB mechanism into a SBB one, while preserving the DSIC and ex-post IR properties.
It also reveals a problematic aspect of the notion of SBB: it allows for agents to receive money, while they are not involved in any trade. This motivates a strengthened notion of strong budget balance, which we call \emph{direct trade strong budget balance}.
\begin{definition}
A mechanism for an exchange market satisfies \emph{direct trade strong budget balance (DSBB)} iff the outcome it generates can be achieved by a set of bilateral trades, where each trade consists of a reallocation of an item from an agent $i$ to an agent $j$, and a monetary transfer from agent $j$ to agent $i$. Moreover, each item may only be traded once.
\end{definition}
It can be seen that Mechanism $\mathbb{M}_{\text{sbb}}$ does not satisfy DSBB. In the remainder of the paper we will proceed to design mechanisms for two-sided markets that do satisfy DSBB.\footnote{We note that the double auctions given in \cite{cdlt16} also satisfy the DSBB property.} Moreover, two of our results provide an $O(1)$-approximation even in settings in which the $\mathbb{M}_{sbb}$ provides only a log-factor.

\section{A Mechanism for Unit-Supply Two-Sided Markets with XOS Buyers}\label{sec:unit-supply}
In this section we present a DSIC, ex-post IR, and DSBB mechanism for unit supply two-sided markets, when buyers have XOS valuation functions. This mechanism achieves constant approximation to the optimal social welfare. This result extends \citep{fgl15} to two-sided auctions and \citep{cdlt16} to a setting in which buyers have combinatorial preferences over the items.
In \emph{unit-supply two-sided markets}, for simplicity, we use $[k]$ to denote to both the set of items and the set of sellers, where item $j$ is owned by seller $j$ (so $I_j = \{j\}$ for all $j \in [k]$). For each seller $j \in [k]$, we then treat $F_j$ as a distribution over $\mathbb{R}_{\geq 0}$ instead of a distribution over functions.

We assume throughout this section that $(n,k,k,\bm{I},\bm{G},\bm{F})$ is a given unit-supply two-sided market, on which we run the mechanism to be defined. For an allocation $(\bm{X},\bm{Y}) \in \mathcal{A}$, we shall use the superscripts $\SW^B, \SW^S$ to respectively denote the buyers' and the sellers' contribution to the social welfare, i.e.,
\begin{equation*}
\SW^B(\bm{X}, \bm{Y}) := \sum_{i=1}^n v_i(X_i), \text{ and } \SW^S(\bm{X}, \bm{Y}) := \sum_{j=1}^k w_j \indicator{j \in Y_j}.
\end{equation*}




Our mechanism will work by specifying prices on the items in the market. For a bundle $\Lambda$ and an item price vector $\bm{p} = (p_1, \ldots, p_k)$, we define the \emph{demand correspondence} of buyer $i \in [n]$ with valuation function $v_i$ as 
\begin{equation*}
\mathcal{D}(v_i, \bm{p}, \Lambda) := \left\{S \subseteq \Lambda : v_i(S) - \sum_{j \in S} p_j \geq v_i(T) - \sum_{j \in T} p_j \text{ for all } T \subseteq \Lambda \right\},
\end{equation*}
i.e., the set of bundles of items in $\Lambda$ that maximize $i$'s utility under the given item prices, when she has valuation function $v_i$.

For a buyer $i$ with valuation function $v_i$, we define the \emph{additive representative function} for bundle $T \subseteq [k]$ as the additive function $a(v_i,T,\cdot) : 2^{[k]} \rightarrow \mathbb{R}_{\geq 0}$ such that $v_i(T) = a(v_i,T,T)$, and $v_i(S) \geq a(v_i,T,S)$ for all $S \subseteq [k]$ .
The additive representative function of a bundle is guaranteed to exist for each buyer $i$, for each valuation function in the support of $F_i$, by the definition of XOS functions.


\smallbreak
\noindent \textbf{Mechanism.}
Let $\mathbb{A}$ be an algorithm that, given a valuation profile of the buyers $\bm{v}$, allocates the items $[k]$ to them, and does not take into account the sellers and their valuations.
Our mechanism will use such an algorithm as a black-box, and it can be thought of as outputting either an allocation that is optimal for the buyers (in case one does not care about the runtime of the mechanism) or an approximately optimal one (in case one insists on the mechanism running in polynomial time).

Let $\bm{X}^\all(\bm{v}) = (X_1^\all(\bm{v}), \ldots, X_n^\all(\bm{v}))$ be the output allocation of $\mathbb{A}(\bm{v})$. 
Let $\SW(\bm{X}^\all(\bm{v}))$ be the total social welfare of the allocation $\bm{X}^\all(\bm{v})$.

We define for each item $j \in [k]$ its \emph{contribution $\SW^B_j(\bm{v})$ to the social welfare $\SW(\bm{X}^{\all}(\bm{v}))$} as follows: If there exists a player $i$ that receives item $j$ in allocation $X_i^\all(\bm{v})$, then $\SW^B_j(\bm{v}) = a(v_i, X_i^\all(\bm{v}), \{j\})$. Otherwise, if $j$ is not allocated to any buyer in $X_i^\all(\bm{v})$, then $\SW^B_j(\bm{v}) = 0$.

This notion allows us to make a distinction between \emph{high welfare items} and \emph{low welfare items}. An item $j \in [k]$ is said to have \emph{high welfare with respect to $\SW(X_i^\all(\bm{v}))$} iff $\mathbb{E}_{\bm{v}}[\SW^B_j(\bm{v})] \geq 4\expected{w_j}$, i.e., the expected social welfare contribution of $j$ if we would allocate $j$ according to $\bm{X}^\all(\bm{v})$ is at least four times as high as the social welfare that results from leaving item $j$ at its seller.

Formally, let $L$ be the set of high welfare items,
i.e., $L := \left\{\ell \in [k] : \expected{\SW^{B}_\ell(\bm{v}} \geq 4 \expected{w_{j}}\right\}$,
and let $\bar{L}$ be the set of low welfare items, i.e. $\bar{L} := [k]\setminus L$. For each high welfare item $j \in L$, the mechanism makes use of the following associated \emph{item price} $p_j$: 
\begin{equation*}
p_j :=\frac{1}{2}  \expectedsub{\bm{v}}{\SW^B_j(\bm{v})}.
\end{equation*}
Observe that $p_j \geq 2\mathbb{E}[w_j]$ for all $j \in L$, by our definition of high welfare items.
The reason why $L$ is chosen in such a way is twofold: first, the items in $\bar{L}$ if kept by their sellers provide a welfare loss of at most a constant factor; second, every item in $L$ is guaranteed to be sold (if sold) at a high price.

Our (randomized) mechanism does the following simple procedure. First, it goes to every seller in $L$ (in any order) and asks each of them whether they would sell their item for a price of $p_j$. As mentioned above, by definition of the prices, every seller $j \in L$ accepts the price with probability at least $1/2$ because of Markov's inequality. To make sure that this probability is exactly $1/2$, the seller $j$ is only given the opportunity to sell her item at the price $p_j$ with probability $q_j$ such that (in expectation) the offer is accepted with probability exactly $1/2$.
Formally, the mechanism makes an offer to the seller $j$ with probability
\[
q_j := \frac{1}{2 F_j(p_j)}, \text{ where } F_j(p_j) = \prob{w_j \leq p_j}.
\]
Once we have gone through every seller and know which items are in the market, we move to the buyers and ask each (in any order) for their favorite bundle according to the items currently available for purchase.

We call the mechanism sketched above $\mathbb{M}_{\text{1-supply}}$, which we will now present more precisely: \\
\noindent\fbox{%
\begin{varwidth}{\dimexpr\linewidth-2\fboxsep-2\fboxrule\relax}
\small{
\begin{enumerate}
	\item Let $L := \{j \in [k] : \mathbb{E}_{\bm{v}}[\SW^B_j(\bm{v})] \geq 4\mathbb{E}[w_j]\}$.
	\item For all $j \in L$, set $p_j :=\frac{1}{2} \expectedsub{\bm{v}}{\SW^B_j(\bm{v})}$.
	\item Let $\Lambda_1 := \emptyset, X_i := \emptyset$ for all $i \in [n]$ and $Y_j := \{j\}$ for all $j \in [k]$.
	\item For all $j \in L$:
	\begin{enumerate}
		\item Set $q_j := 1 / (2 \mathsf{Pr}[w_j \leq p_j])$.
		\item With probability $q_j$, offer payment $p_j$ in exchange for her item. Otherwise, skip this seller.
		\item If $j$ accepts the offer, set $\Lambda_1 := \Lambda_1 \cup \{j\}$.
	\end{enumerate}	
	\item For all $i \in [n]$:\label{algline:buyer_cycle}
	\begin{enumerate}
		\item Buyer $i$ chooses a bundle $B_i \in \mathcal{D}(v_i, \bm{p}, \Lambda_i)$ that maximizes her utility.
		\item Allocate the accepted items to buyer $i$, i.e., $X_i :=B_i$ and $Y_j := \emptyset$ for all $j \in B_i$.
		\item Remove the selected items from the available items, i.e., $\Lambda_{i+1} := \Lambda_i \setminus B_i$. 
	\end{enumerate}
	\item Return the outcome consisting of allocation $(\bm{X} = (X_1, \ldots, X_n), \bm{Y} = (Y_1, \ldots, Y_k))$ and payments $\bm{\rho} = (\bm{\rho}^B, \bm{\rho}^S)$, where $\rho_i^B = \sum_{j \in X_i} p_j$ for $i \in [n]$ and $\rho_j^S = -p_j \indicator{Y_j = \varnothing}$ for $j \in [k]$.
\end{enumerate}
}
\end{varwidth}
}\\
Note that Algorithm $\mathbb{A}$ is used in the first steps of mechanism $\mathbb{M}_{\text{1-supply}}$, where $\mathbb{E}_{\bm{v}}[\SW^B_j(\bm{v})]$ is computed.
Let $\alpha$ be the factor by which $\mathbb{A}$ is guaranteed to approximate the social welfare of the buyers. 

\begin{theorem}
\label{thm:18-approx_xos}
$\mathbb{M}_{\text{1-supply}}$ is ex-post IR, DSIC, DSBB, and $(2 + 4\alpha)$-approximates the optimal social welfare. 
\end{theorem}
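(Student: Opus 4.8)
The plan is to verify the three qualitative properties straight from the definition of $\mathbb{M}_{\text{1-supply}}$, and then to get the approximation ratio by sandwiching the welfare between an upper bound on $\OPT$ and a lower bound on $\expected{\SW(\mathbb{M}_{\text{1-supply}})}$, both written in terms of the quantities $\expectedsub{\bm{v}}{\SW^B_j(\bm{v})}$ and $\expected{w_j}$ that determine the price rule. For \emph{DSBB}, the only transfers are $\rho^B_i=\sum_{j\in X_i}p_j$ and $\rho^S_j=-p_j\indicator{Y_j=\varnothing}$, so every item that changes hands moves from seller $j$ to a single buyer who transfers precisely $p_j$ to that seller, and no item moves twice; this is a set of bilateral trades. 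For \emph{ex-post IR}, a seller parts with her item for $p_j$ only when $w_j\le p_j$ (utility $p_j-w_j\ge0$) and otherwise keeps it (utility $w_j\ge0$), while a buyer always has $\varnothing\in\mathcal{D}(v_i,\bm{p},\Lambda_i)$ available and so realises utility at least $v_i(\varnothing)=0$. For \emph{DSIC}, the set $L$ and the prices $\bm{p}$ depend only on the public distributions and not on any report; each seller then faces a fixed take-it-or-leave-it price $p_j$ offered independently of her report with probability $q_j$, so reporting $w_j$ truthfully is dominant; and the menu $(\Lambda_i,\bm{p})$ offered to buyer $i$ does not depend on $i$'s report (only on the sellers and on the buyers processed earlier), so truthful reporting hands $i$ a utility-maximising bundle.

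For the welfare bound I would first upper bound $\OPT$. Splitting a feasible allocation into the items handed to buyers and the items kept by their sellers and using monotonicity gives $\OPT(\bm{v},\bm{w})\le\OPT^B(\bm{v})+\sum_{j\in[k]}w_j$, where $\OPT^B(\bm{v})$ is the optimal welfare obtainable by allocating all of $[k]$ to the buyers. Since $\mathbb{A}$ is an $\alpha$-approximation for the buyers and, summing additive-representative values item by item, $\SW(\bm{X}^{\all}(\bm{v}))=\sum_{j\in[k]}\SW^B_j(\bm{v})$, taking expectations yields
\[
\OPT\;\le\;\alpha\sum_{j\in[k]}\expectedsub{\bm{v}}{\SW^B_j(\bm{v})}+\sum_{j\in[k]}\expected{w_j}.
\]
Now split the sum into $L$ and $\bar L$. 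For $j\in\bar L$ we have $\expectedsub{\bm{v}}{\SW^B_j(\bm{v})}<4\expected{w_j}$ by definition, so the $\bar L$-part is at most $(4\alpha+1)\sum_{j\in\bar L}\expected{w_j}$. For $j\in L$ we have $\expectedsub{\bm{v}}{\SW^B_j(\bm{v})}=2p_j$ and $\expected{w_j}\le\tfrac14\expectedsub{\bm{v}}{\SW^B_j(\bm{v})}=\tfrac12 p_j$, so the $L$-part is at most $(2\alpha+\tfrac12)\sum_{j\in L}p_j$, whence
\[
\OPT\;\le\;(4\alpha+1)\sum_{j\in\bar L}\expected{w_j}\;+\;\bigl(2\alpha+\tfrac12\bigr)\sum_{j\in L}p_j .
\]

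For the lower bound, write $R=\sum_{i\in[n]}\sum_{j\in X_i}p_j$ for the revenue and $u_i$ for buyer $i$'s utility. Because $\SW(\mathbb{M}_{\text{1-supply}})=\sum_i v_i(X_i)+\sum_j w_j\indicator{j\in Y_j}$, $\sum_i v_i(X_i)=\sum_i u_i+R$, and every item of $\bar L$ stays with its seller, we get $\SW(\mathbb{M}_{\text{1-supply}})\ge\sum_{j\in\bar L}w_j+R+\sum_i u_i$ pointwise. The crux is the claim
\[
\expected{R+\textstyle\sum_i u_i}\;\ge\;\tfrac12\sum_{j\in L}p_j ,
\]
which, combined with the previous display and the trivial inequalities $4\alpha+1\le4\alpha+2$ and $2\alpha+\tfrac12\le\tfrac12(4\alpha+2)$, gives $\OPT\le(4\alpha+2)\bigl(\sum_{j\in\bar L}\expected{w_j}+\tfrac12\sum_{j\in L}p_j\bigr)\le(2+4\alpha)\,\expected{\SW(\mathbb{M}_{\text{1-supply}})}$, as required.

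Proving that claim is the step I expect to be the genuine obstacle; it is the two-sided analogue of the XOS posted-price analysis of Feldman et al.~\cite{fgl15}, in the form already used for double auctions in \cite{cdlt16}, and it is where the XOS structure and the choice $p_j=\tfrac12\expectedsub{\bm{v}}{\SW^B_j(\bm{v})}$ enter. The skeleton: fix $\bm{v}$ and the allocation $\bm{X}^{\all}(\bm{v})$ together with a choice of additive representatives $a(v_i,X^{\all}_i(\bm{v}),\cdot)$; when buyer $i$ is processed with available set $\Lambda_i$, optimality of her chosen bundle gives $u_i\ge v_i\bigl(X^{\all}_i(\bm{v})\cap\Lambda_i\bigr)-\sum_{j\in X^{\all}_i(\bm{v})\cap\Lambda_i}p_j\ge\sum_{j\in X^{\all}_i(\bm{v})\cap\Lambda_i}\bigl(\SW^B_j(\bm{v})-p_j\bigr)$, and summing over buyers, $\sum_i u_i\ge\sum_{j\in L}\bigl(\SW^B_j(\bm{v})-p_j\bigr)\indicator{j\in\Lambda_{i(j)}}$, where $i(j)$ denotes the buyer holding item $j$ in $\bm{X}^{\all}(\bm{v})$ and we used $\Lambda_{i(j)}\subseteq L$. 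On the revenue side, $R\ge\sum_{j\in L}p_j\indicator{j\text{ is sold}}\ge\sum_{j\in L}p_j\bigl(\indicator{j\in A}-\indicator{j\in\Lambda_{i(j)}}\bigr)$, where $A$ is the random set of released items, using $\{j\in\Lambda_{i(j)}\}\subseteq\{j\in A\}$ and the fact that an element of $A$ missing from $\Lambda_{i(j)}$ has been sold already. Adding the two bounds, taking expectations, and using that $\indicator{j\in A}$ is a $\mathrm{Bernoulli}(1/2)$ variable --- which requires the one-line check that $q_j=1/(2F_j(p_j))$ is a legitimate probability, since $p_j\ge2\expected{w_j}$ and Markov give $F_j(p_j)\ge\tfrac12$, and that these bits are mutually independent and independent of $\bm{v}$ --- reduces the claim to showing $\expected{\bigl(\SW^B_j(\bm{v})-2p_j\bigr)\indicator{j\in\Lambda_{i(j)}}}\ge0$ for every $j\in L$. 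This last inequality is the delicate point: even though $\expectedsub{\bm{v}}{\SW^B_j(\bm{v})}=2p_j$, the survival event $\{j\in\Lambda_{i(j)}\}$ is correlated both with $\bm{v}$ and with the earlier buyers' purchases, and the part I would spend the most care on is making this rigorous exactly as in \cite{fgl15,cdlt16}, by conditioning on $\bm{v}$ and on all coins other than $\indicator{j\in A}$ and exploiting monotonicity of the sequential allocation in the available set. The rest is the bookkeeping above.
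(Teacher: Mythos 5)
Your handling of ex-post IR, DSIC, and DSBB is correct and matches the paper's argument, and your welfare bookkeeping is a sound reorganization: the paper instead states $2\ALG^S\ge\OPT^S$ and $4\alpha(\ALG^B+\ALG^S)\ge\OPT^B$ as two lemmas (Lemmas \ref{lemma:ALG^S} and \ref{lemma:ALG^B}) and sums them, but your direct sandwich $\OPT\le(4\alpha+1)\sum_{j\in\bar L}\expected{w_j}+(2\alpha+\tfrac12)\sum_{j\in L}p_j\le(2+4\alpha)\expected{\SW}$ is equivalent and relies on the same key inequality $\expected{R+\sum_i u_i}\ge\tfrac12\sum_{j\in L}p_j$ (the paper's Propositions \ref{prop:utility_B} and \ref{prop:revenue_B}).

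The genuine gap is in how you try to prove that key inequality. You reduce it, pointwise, to showing $\expected{(\SW^B_j(\bm{v})-2p_j)\,\indicator{j\in\Lambda_{i(j)}}}\ge0$ for each $j\in L$, and propose to establish this ``by conditioning on $\bm{v}$ and on all coins other than $\indicator{j\in A}$ and exploiting monotonicity.'' That reduction does not hold in general, and no amount of conditioning fixes it, because the correlation between $\SW^B_j(\bm{v})$ and the survival event can be adverse: when item $j$ has a high realized contribution and is earmarked for a late buyer $i(j)$, an earlier buyer is more likely to snatch it, so $\SW^B_j(\bm{v})-2p_j$ can be positive exactly when $\indicator{j\in\Lambda_{i(j)}}=0$. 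Concretely, with one item, $w=0$, $v_1(\{1\})=2$ deterministic, $v_2(\{1\})\in\{0,4\}$ each with probability $1/2$, one gets $\expected{\SW^B_1}=3$, $p_1=1.5$, and $\expected{(\SW^B_1-2p_1)\indicator{1\in\Lambda_{i(1)}}}=-1/4<0$, so your bound on $\expected{R+\sum_i u_i}$ lands strictly below $\tfrac12 p_1$ even though the statement itself is true. The device the paper (following Feldman et al.) uses to avoid this is the resampling trick: for each buyer $i$, draw an \emph{independent} ghost profile $\tilde{\bm v}_{-i}$ and lower-bound $u_i$ against the candidate bundle $X_i^{\all}(v_i,\tilde{\bm v}_{-i})\cap\Lambda_i(\bm v_{<i},Z)$ rather than $X_i^{\all}(\bm v)\cap\Lambda_i$. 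Because $X_i^{\all}(v_i,\tilde{\bm v}_{-i})$ depends only on $(v_i,\tilde{\bm v}_{-i})$ while $\Lambda_i(\bm v_{<i},Z)$ depends only on $(\bm v_{<i},Z)$, the two indicator events become independent, the expectation factors, and after pulling out $\prob{j\in\Lambda_{n+1}}$ and invoking the identity $p_j=\sum_i\expectedsub{\bm v}{(\SW^B_j(\bm v)-p_j)\indicator{j\in X_i^\all(\bm v)}}$ one gets $\expected{\sum_i u_i}\ge\tfrac12\sum_{j\in L}\probsub{\bm v,Z}{j\in\Lambda_{n+1}\,|\,j\in Z}\,p_j$, which together with the exact revenue identity $\expected{R}=\tfrac12\sum_{j\in L}\probsub{\bm v,Z}{j\notin\Lambda_{n+1}\,|\,j\in Z}\,p_j$ gives the needed $\tfrac12\sum_{j\in L}p_j$ with no sign condition required on any per-item, per-buyer term. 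Your proof as written is missing this resampling step, and without it the argument does not go through.
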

In particular, taking for $\mathbb{A}$ an optimal algorithm (i.e., $\alpha = 1$), we obtain that there exists a mechanism that is ex-post IR, DSIC, DSBB, and $6$-approximates the optimal social welfare.
Alternatively, one may take for $\mathbb{A}$ an approximation algorithm in order to obtain polynomial time implementable mechanisms, as explained in further detail in \cite{fgl15}.

\section{A Mechanism for Two-Sided Markets with XOS Buyers and Additive Sellers}
We now consider the setting in which sellers may own multiple distinct items and have an additive valuation function over them. We design a DSBB mechanism that is DSIC and ex-post IR on the sellers' side, and BIC and interim IR on the buyers' side. 
At the end of the section also proof that in the case of both buyers and sellers possessing additive valuation functions, the mechanism we present is DSIC and ex-post IR on both sides of the market.

We assume throughout this section that $(n,m,k,\bm{I},\bm{G},\bm{F})$ is a given two-sided market with XOS buyers and Additive Sellers, on which we run the mechanism to be defined. 
Like in the previous section, the buyers are still assumed to have XOS valuation functions over the items. Since now the number of items and sellers is different in general, we use $m$ to denote the number of sellers and $k$ for the number of items. The valuation $w_j$ of a seller $j$ is now an additive function.
We reuse the following notation from Section \ref{sec:unit-supply}: The allocation $(X_1^\all(\bm{v}), \ldots, X_n^\all(\bm{v}))$ returned by an allocation algorithm $\mathbb{A}$ on input $\bm{v}$ returns an allocation of $[k]$ to $[n]$. We let $\alpha \geq 1$ again denote the approximation factor by which $\mathbb{A}$ approximates the social welfare. For XOS valuation $v_i$ and bundle $T \subseteq [k]$ we use $a(v_i,T, \cdot)$ to denote the corresponding additive function of $v_i$ for $T$. Also we use the buyers' social welfare contribution $\SW^B_\ell(\bm{v})$ for item $\ell \in [k]$ and buyers' valuation profile $\bm{v}$, as defined in Section \ref{sec:unit-supply}.
We define furthermore the \emph{sellers' social welfare contribution} $\SW^S_\ell(\bm{w})$ for item $\ell \in I_j$ and sellers' valuation profile $\bm{w}$ as $\SW^S_\ell(\bm{w}) := w_j(\{\ell\})$.
Due to the fact that for $j \in [m]$, $w_j$ is an additive function, there is no need for defining the notion of a \emph{corresponding additive function} for a seller.

\smallbreak
\noindent \textbf{Mechanism.}
We aim to design a BIC, interim IR, and SBB mechanism that approximates the optimal social welfare within a constant. We propose the following mechanism, which we refer to as $\mathbb{M}_\text{add}$.
We let $L_j := \{\ell \in I_j : \mathbb{E}[\SW^B_\ell(\bm{v})] \geq 4 \mathbb{E}[\SW^S_\ell(\bm{w})]\}$ and $\bar{L}_j := I_j \setminus L_j$ for all $j \in [m]$, and we let $L := \bigcup_{j = 1}^m L_j$ and $\bar{L} := [k] \setminus L$ denote the sets of high welfare items and low welfare items, respectively. Our mechanism will only allow trading items in $L$, and define for $\ell \in L$ the item price 
\begin{equation*}
p_\ell := \frac{1}{2} \expected{\SW^B_\ell(\bm{v})},
\end{equation*}
similar to what we did for $\mathbb{M}_{\text{1-supply}}$.

An essential difference between $\mathbb{M}_{\text{add}}$ and $\mathbb{M}_{\text{1-supply}}$ is that the order in which buyers and sellers are processed is reversed.
Mechanism $\mathbb{M}_{\text{add}}$ roughly works as follows. It first asks every buyer which set of items it would like to receive from those items in $L$ that have not been requested yet.
Then $\mathbb{M}_{\text{add}}$ offers every seller $j \in [m]$ a payment in exchange for the subset of all items in $I_j$ that have been requested. This offer is made with a $q_j$ ensures that the requested items of seller $j$ are transfered to the buyers with probability $1/2$. The items of the sellers accepting the offer are transferred to the buyers for the corresponding item prices. Buyers act strategically, and request a set of items that maximizes their expected utility, knowing that the item sets requested from each seller will be assigned to them with probability $1/2$\footnote{The buyer may need to compute complicated expected values in order to establish which bundle is maximizing his expected utility. But this depends on the query oracle model we assume to have. However, most importantly, the focus of the paper is not on polynomial time implementability but on the existence of mechanisms that constantly approximate the optimal social-welfare.}.

\noindent\fbox{%
\begin{varwidth}{\dimexpr\linewidth-2\fboxsep-2\fboxrule\relax}
\small{
\begin{enumerate}
\item For $\ell \in [k]$, compute $\expected{\SW^B_\ell(\bm{v})}$ and $\expected{\SW^S_\ell(\bm{w})}$.
\item For all $j \in [m]$, compute $L_j$ and $q_j$.
\item Compute $L$, $\bar{L}$. 
\item Let $\Lambda_1 := L$, $X_i := \varnothing$ for all $i \in [n]$, and $Y_j := I_j$ for all $j \in [m]$.
\item \label{buyerloop} For each buyer $i \in [n]$:
	\begin{enumerate}
	\item Ask buyer $i$ to select an expected-maximising bundle $B_i \subseteq \Lambda_i$ given the prices $\{p_\ell : \ell \in \Lambda_i\}$ from the set of available items (where the expectation is taken with respect to the randomness of both the mechanism and the valuations).
	\item \label{lambda} Update the set of available items $\Lambda_{i+1} := \Lambda_i \setminus B_i$.
	\end{enumerate}
\item Let $B := \bigcup_{i=1}^n B_i$ be the set of all items demanded by the buyers.
\item For each seller $j \in [m]$:
	\begin{enumerate}
	\item Let $S_j := B \cap L_j$ be the set of items owned by seller $j$ that are demanded.
	\item Let $p(S_j) := \sum_{\ell \in S_j} p_\ell$ and let $q_j = 1/(2 \prob{w_j(S_j) \leq p(S_j)})$.
	\item \label{offer} With probability $q_j$, offer payment $p(S_j)$ in exchange for the bundle $S_j$. Otherwise, skip this seller.
	\item If the seller accepts the offer, allocate each items in $S_j$ to the buyer that requested it (i.e., remove $S_j$ from $Y_j$ and add $S_j \cap B_i$ to $X_i$ for all $i \in [n]$) 
	\end{enumerate}
\item Return the outcome consisting of allocation $(\bm{X} = (X_1, \ldots, X_n), \bm{Y} = (Y_1, \ldots, Y_k))$ and payments $\bm{\rho} = (\bm{\rho}^B, \bm{\rho}^S)$, where $\rho_i^B = \sum_{\ell \in X_i} p_\ell$ for $i \in [n]$ and $\rho_j^S = \sum_{\ell \in I_j \setminus Y_j} -p_\ell$ for $j \in [m]$.
\end{enumerate}
}
\end{varwidth}
}\\

\begin{example}
There is $1$ buyer and $2$ unit-supply sellers. Each seller posses $1$ item.
The buyer has two XOS valuation functions $v_{1}$ and $v_{2}$, each one is chosen with probability $0.5$.
$v_{1}$ is composed by $3$ additive functions $a_{1}$, $a_{2}$, and $a_{3}$, i.e., $v_{1}(S)=\max \{a_{1}(S), a_{2}(S), a_{3}(S)\}$.
$v_{2}$ is composed by the only additive function $a_{4}$. The additive functions are represented in the following table, recall that an additive function is represented by $a(S) = \sum_{j \in S} \alpha_j$ for all $S \subseteq [k]$.
Each seller $j$ has a valuation function $w_j = 0$, so we always want to reallocate items from sellers to the buyer.

\begin{center}
    \begin{tabular}{ | c | c | c | }
    \hline
    Function & item $1$ ($\alpha_{1}$)& item $2$ ($\alpha_{2}$)\\ \hline
    $a_{1}$ & $0$ & $4$ \\ \hline
    $a_{2}$ & $8$ & $0$ \\ \hline
    $a_{3}$ & $7$ & $2$ \\ \hline
    $a_{4}$ & $1$ & $6$ \\
    \hline
    \end{tabular}
\end{center}

Now, we have to compute the prices that the mechanism has to post. Thus, we need the expected contribution to the optimal social welfare of every item.
First, notice that the optimum allocates the items $1$ and $2$ to the buyer when her valuation is $v_{1}$. In this case the contribution to the optimal social welfare of items $1$ is $7$, and the contribution of item $2$ is $2$.
Similarly, if the buyer has valuation $v_{2}$, the optimum still allocates items $1$ and $2$ to her, but in this case the contribution to the optimal social welfare of item $1$ is $1$, and the contribution of item $2$ is $6$.
Thus, the expected contribution of every item to the optimal social welfare is $4$, i.e., $\mathbb{E}[\SW^B_j(\bm{v})] = 4$ for all $j =1,2$.
Since the price $p_j$ of each item is defined to be half of the expected contribution to the optimal social welfare,  $p_{j} = 2$ for all the items.

Now the buyer has to compute the bundle that maximises her expected-utility given the posted prices, and the fact that each seller is available with probability $0.5$.
First, consider the case when the buyer has valuation $v_{1}$. In this case the expected utility for the different bundles are:
\[
u(\{1\})=\frac{1}{2} \cdot (8-4) + \frac{1}{2} \cdot 0 = 2
\]
\[
u(\{2\})=\frac{1}{2} \cdot (4-4) + \frac{1}{2} \cdot 0 = 0
\]
\[
u(\{1,2\}) = \frac{1}{4} \cdot (8-4) + \frac{1}{4} \cdot (4-4) +\frac{1}{4} \cdot (9 - 8) + \frac{1}{4} \cdot 0 = \frac{5}{4}
\]
The utility-maximising bundle that will be requested by the buyer in case of $v_{1}$ is $\{1\}$.
Instead, if the valuation of the buyer is $v_{2}$, then the requested bundle will be $\{1,2\}$.
\end{example}

This example shows that the buyers are committed to difficult computations in order to understand which is the utility-maximising bundle. But, again, our goal is to understand the existence of mechanisms that achieve a constant approximation to the optimum social welfare, even if they are not computational efficient.

\begin{theorem}\label{thm:bicmech}
The mechanism $\mathbb{M}_\mathrm{add}$ is ex-interim IR, BIC, DSBB, and $(2 + 4\alpha)$-approximates the optimal social welfare.
\end{theorem}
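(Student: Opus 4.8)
I would break the proof into four parts: \emph{direct trade strong budget balance}, the seller-side guarantees (DSIC and ex-post IR), the buyer-side guarantees (BIC and interim IR), and the $(2+4\alpha)$ welfare bound, the last being the only substantial one. For DSBB, whenever seller $j$ accepts her offer, realise the outcome by moving each item $\ell\in S_j$ from $j$ to the unique buyer $i$ with $\ell\in B_i$, with $i$ paying $p_\ell$ to $j$; untraded items stay put with no transfer, no item is moved twice, and $\sum_i\rho_i^B+\sum_j\rho_j^S=\sum_i\sum_{\ell\in X_i}p_\ell-\sum_j\sum_{\ell\in I_j\setminus Y_j}p_\ell=0$, so the outcome is realised by item-local bilateral trades. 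For the sellers, note that $S_j=B\cap L_j$ and the price $p(S_j)$ are determined by the buyers' requests $(B_1,\dots,B_n)$, which are fixed before any seller is approached; seller $j$ thus faces a take-it-or-leave-it offer independent of her reported valuation, so accepting exactly when $w_j(S_j)\le p(S_j)$ is a dominant strategy (DSIC) that never yields negative surplus (ex-post IR). One also checks $q_j\le1$: since $S_j\subseteq L_j$ and $w_j$ is additive, $\expected{w_j(S_j)\mid S_j}=\sum_{\ell\in S_j}\expected{\SW^S_\ell(\bm w)}\le\tfrac14\sum_{\ell\in S_j}\expected{\SW^B_\ell(\bm v)}=\tfrac12\,p(S_j)$, whence Markov gives $\prob{w_j(S_j)\le p(S_j)\mid S_j}\ge\tfrac12$, so $q_j\le1$ and, conditioned on $S_j$, seller $j$'s bundle is transferred with probability exactly $\tfrac12$.

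For the buyers, fix truthful reports by buyers $1,\dots,i-1$ so that $\Lambda_i$ is independent of buyer $i$'s report. If buyer $i$ requests $B\subseteq\Lambda_i$, then for each seller $j$ the event ``$i$ receives $B\cap I_j$'' has probability exactly $\tfrac12$ by the above, these events are independent across sellers, and, crucially, this conditional distribution of $i$'s received bundle does not depend on the later buyers' requests. Hence $i$'s expected utility is a fixed function $g_i$ of $B$ alone, the mechanism returns $\arg\max_{B\subseteq\Lambda_i} g_i(B)$ exactly when $i$ reports truthfully, so no misreport helps (BIC), and $g_i(\varnothing)=0$ makes the equilibrium utility nonnegative (interim IR); combined with the sellers' ex-post IR this gives ex-interim IR.

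For the welfare bound, write $M:=\expectedsub{\bm v,\bm w}{\SW(\mathbb{M}_\mathrm{add}(\bm v,\bm w))}$, $W:=\sum_{\ell\in[k]}\expected{\SW^S_\ell(\bm w)}$ and $P_L:=\sum_{\ell\in L}\expected{\SW^B_\ell(\bm v)}$. By DSBB the payments cancel, so $M=\expected{\sum_i u_i^B}+\expected{\sum_j u_j^S}$. If seller $j$ trades then $u_j^S=w_j(I_j\setminus S_j)+p(S_j)\ge w_j(I_j\setminus S_j)+w_j(S_j)=w_j(I_j)$ (by additivity and $w_j(S_j)\le p(S_j)$ on the trade event), and otherwise $u_j^S=w_j(I_j)$; hence $\expected{\sum_j u_j^S}\ge W$. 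For the buyers I would adapt the Feldman--Gravin--Lucier posted-price analysis: buyer $i$ could deviate to request $T_i:=X_i^\all(\bm v)\cap\Lambda_i$, receiving each of its items with probability $\tfrac12$, and since $v_i$ is XOS this yields her expected utility at least $\tfrac12\,a(v_i,X_i^\all(\bm v),T_i)-\tfrac12\sum_{\ell\in T_i}p_\ell=\tfrac12\sum_{\ell\in T_i}(\SW^B_\ell(\bm v)-p_\ell)$, so her utility-maximising choice does at least as well; summing over buyers and taking expectations, together with the FGL arrival-order charging argument (which uses that $p_\ell=\tfrac12\expected{\SW^B_\ell(\bm v)}$ is a ``balanced'' price to absorb items grabbed by earlier buyers), gives $\expected{\sum_i u_i^B}\ge\tfrac14 P_L$, hence $M\ge W+\tfrac14 P_L$. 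On the other side, monotonicity gives $\OPT(\bm v,\bm w)=\sum_i v_i(X_i^*)+\sum_j w_j(Y_j^*)\le\OPT^B(\bm v)+\sum_j w_j(I_j)$ for every realisation, where $\OPT^B(\bm v)$ is the buyers-only optimum; taking expectations, using $\expected{\OPT^B(\bm v)}\le\alpha\expected{\SW(\bm X^\all(\bm v))}=\alpha\sum_{\ell\in[k]}\expected{\SW^B_\ell(\bm v)}$ and $\expected{\SW^B_\ell(\bm v)}<4\expected{\SW^S_\ell(\bm w)}$ for $\ell\in\bar L$ gives $\OPT\le\alpha P_L+(4\alpha+1)W$. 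Since $4\alpha+1\le 2+4\alpha$ and $\alpha\le\tfrac14(2+4\alpha)$, we conclude $\OPT\le(2+4\alpha)(W+\tfrac14 P_L)\le(2+4\alpha)M$.

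The delicate point is the buyer-side inequality $\expected{\sum_i u_i^B}\ge\tfrac14 P_L$: one must run the FGL order-of-arrival charging argument in the regime where the prices come from the approximate black box $\mathbb{A}$ rather than an optimal allocation, where trade is confined to the high-welfare set $L$, and where each requested bundle is fulfilled only with probability $\tfrac12$ (so each buyer's deviation already loses a factor $\tfrac12$); in particular FGL normally recoups the contention loss through the revenue from items sold to early buyers, but here that revenue is an internal transfer already credited to the sellers via $\expected{\sum_j u_j^S}\ge W$, so one must verify that the buyers' side on its own still retains a constant fraction of $P_L$. Everything else is bookkeeping.
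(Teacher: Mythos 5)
The welfare bound is where your proposal has a genuine gap, and you were right to flag it. The claim $\expected{\sum_i u_i^B} \geq \frac14 P_L$ is actually false: take one item, two buyers, $v_1\equiv 2$, $v_2=100$ with probability $0.01$ and $0$ otherwise, and a seller with $w\equiv 0$. Then $\expected{\SW^B_\ell(\bm v)}=2.98$, $p_\ell=1.49$, buyer $1$ always requests the item (so buyer $2$ never gets to), and $\expected{\sum_i u_i^B}=\frac12(2-1.49)=0.255<0.745=\frac14 P_L$. The FGL-style argument only proves $\expected{\sum_i u_i^B}\geq\frac12\sum_{\ell\in L}\prob{\ell\in\Lambda_{n+1}(\bm v)}p_\ell$; one must \emph{add} the expected revenue $\frac12\sum_{\ell\in L}\prob{\ell\notin\Lambda_{n+1}(\bm v)}p_\ell$ to reach $\frac12\sum_{\ell\in L}p_\ell=\frac14 P_L$. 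You already spent that revenue inside the seller bound $\expected{\sum_j u_j^S}\geq W$ (which uses only $p(S_j)\geq w_j(S_j)$ and discards the surplus), so it is not available to the buyer side, and your two one-sided bounds do not recombine.

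There is a second, independent slack in the final arithmetic. Even with the correct quantity $\ALG^B:=\expected{\sum_i v_i(X_i)}=\expected{\sum_i u_i^B}+\expected{\sum_i\rho_i^B}\geq\frac14 P_L$ and $\ALG^S:=\expected{\sum_j w_j(Y_j)}\geq\frac12 W$, you only get $M\geq\frac14 P_L+\frac12 W$, and $\OPT\leq\alpha P_L+(4\alpha+1)W$ would then need a $W$-coefficient of $\frac{4\alpha+1}{2+4\alpha}>\frac12$ on the $M$-side, which fails for every $\alpha>0$. The paper sidesteps both issues: it keeps the buyer revenue inside $\ALG^B\geq\frac14 P_L$, bounds $\ALG^S\geq\sum_{\ell\in\bar L}\expected{\SW^S_\ell(\bm w)}$ (items in $\bar L$ never leave their sellers), uses \refProposition{prop:cover_sellers} to get $4\alpha(\ALG^B+\ALG^S)\geq\OPT^B$, separately shows $2\ALG^S\geq\OPT^S$, and sums to $\OPT\leq 4\alpha\ALG^B+(4\alpha+2)\ALG^S\leq(2+4\alpha)\ALG$. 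The essential difference is that the paper charges only $\sum_{\ell\in\bar L}\expected{\SW^S_\ell(\bm w)}$ against $\ALG^S$ in the buyer comparison, not all of $W$, which keeps the coefficients under $2+4\alpha$. The remaining parts of your proposal --- DSBB, the seller-side DSIC/ex-post IR together with the $q_j\leq 1$ check, and the buyer-side BIC/interim IR --- match the paper's proof.
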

By taking for $\mathbb{A}$ an optimal algorithm (i.e., $\alpha = 1$), we obtain the existence of a mechanism that is ex-post IR, DSIC, DSBB, and $6$-approximates the optimal social welfare.
However, if both a polynomial time approximation algorithm $\mathbb{A}$ and an approximation for the query oracle exist, then we get our mechanism to run in polynomial time.
\begin{corollary}\label{cor:additive}
For the special case that for all $i \in [n]$, distribution $G_i$ is over additive valuation functions, $\mathbb{M}_\mathrm{add}$ is ex-post IR, DSIC, DSBB and $(2 + 4\alpha)$-approximates the optimal social welfare.
\end{corollary}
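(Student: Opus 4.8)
The plan is to derive the corollary almost for free from \refTheorem{thm:bicmech}. The key observation is that $\mathbb{M}_{\text{add}}$ is literally the same mechanism in this special case --- an additive valuation is an XOS valuation whose additive representative for every bundle is the valuation itself --- so the DSBB property and the $(2+4\alpha)$-approximation of the optimal social welfare are inherited verbatim, and the sellers' side was already shown in \refTheorem{thm:bicmech} to be ex-post IR and DSIC regardless of the class of the buyers' valuations. Hence the only thing left to prove is that, when every $G_i$ is supported on additive functions, the buyers' side of $\mathbb{M}_{\text{add}}$ is ex-post IR and DSIC (rather than merely interim IR and BIC).

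The crux is a closed-form expression for a buyer's utility. Fix a buyer $i$ with true additive valuation $v_i(S)=\sum_{\ell\in S}\alpha_{i\ell}$, fix arbitrary strategies of all the other agents, and condition on the set $\Lambda_i$ of items still available when the buyer loop reaches $i$; observe that $\Lambda_i$ is determined solely by the requests of the earlier buyers and, in particular, is unaffected by $i$'s own report. Suppose $i$ requests a bundle $B_i\subseteq\Lambda_i$. Every $\ell\in B_i$ lies in $L_j$ for its owning seller $j$, so by the definitions of $L_j$ and $p_\ell$ we have $p_\ell\ge 2\expected{w_j(\{\ell\})}$; summing over $S_j$ and using additivity of $w_j$ gives $p(S_j)=\sum_{\ell\in S_j}p_\ell\ge 2\expected{w_j(S_j)}$, whence Markov's inequality yields $\prob{w_j(S_j)\le p(S_j)}\ge\tfrac12$, so $q_j\le 1$ and, since a rational seller accepts exactly when her realised value is at most the offered price, seller $j$ ends up trading $S_j$ with probability exactly $q_j\cdot\prob{w_j(S_j)\le p(S_j)}=\tfrac12$ --- no matter what the set $S_j$ turns out to be. Since each item is requested by at most one buyer (items leave $\Lambda$ as soon as they are requested), buyer $i$ receives $\ell\in B_i$, and is charged $p_\ell$, precisely when its seller trades; so by linearity of expectation
\begin{equation*}
\expected{v_i(X_i)-\rho^B_i}\;=\;\tfrac12\sum_{\ell\in B_i}\bigl(\alpha_{i\ell}-p_\ell\bigr).
\end{equation*}

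Everything then follows. The right-hand side is maximised over $B_i\subseteq\Lambda_i$ by $B_i^\star:=\{\ell\in\Lambda_i:\alpha_{i\ell}\ge p_\ell\}$, which is exactly the expected-utility-maximising bundle that a truthful buyer is asked to pick; any other report leads the mechanism to pick some bundle in $\Lambda_i$ of no larger value, and since neither $\Lambda_i$ nor the displayed expectation depends on $i$'s report, reporting truthfully is a dominant strategy, so the buyers' side is DSIC. For ex-post IR, a truthful buyer requests only items in $B_i^\star$, so in every realisation the received set satisfies $X_i\subseteq B_i^\star$ and $v_i(X_i)-\rho^B_i=\sum_{\ell\in X_i}(\alpha_{i\ell}-p_\ell)\ge 0$. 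Combining the upgraded buyers' side with the sellers' side of \refTheorem{thm:bicmech} gives that $\mathbb{M}_{\text{add}}$ is ex-post IR and DSIC, and together with the inherited DSBB and $(2+4\alpha)$-approximation this finishes the proof.

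The step deserving the most care is the claim that a seller, following her dominant (truthful) acceptance rule, trades her offered bundle $S_j$ with probability exactly $\tfrac12$ regardless of what $S_j$ is: this is precisely what stops a buyer from influencing, via the composition of her request, the probability with which she acquires each requested item, and it hinges on $q_j\le 1$ --- i.e., on every traded item being a high-welfare item and on the sellers' valuations being additive.
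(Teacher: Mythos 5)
Your proof is correct and follows essentially the same approach as the paper: using additivity to decompose the buyer's expected utility as $\tfrac12\sum_{\ell\in B_i}(\alpha_{i\ell}-p_\ell)$, noting that each requested item is allocated with probability exactly $1/2$ regardless of the request, and concluding that the greedy bundle $\{\ell\in\Lambda_i:\alpha_{i\ell}\ge p_\ell\}$ is dominant and yields nonnegative realized utility. Your version makes explicit two points the paper leaves implicit (that $\Lambda_i$ is independent of $i$'s report, and that the per-seller trade probability is exactly $1/2$ regardless of the composition of $S_j$), but the underlying argument is the same.
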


\section{Discussion}
An open problem is to extend or refine our mechanisms so that they satisfy the DSIC and ex-post IR properties for the case of XOS buyers and additive sellers. The first naive approach for doing so might be trying to consider every additive seller as a set of distinct unit-supply sellers and then run $\mathbb{M}_\text{1-supply}$. However, this is not guaranteed to work due to the fact that the items placed in the market by the seller influence the buyers' decisions on their bundle choices. 
Something we might additionally do is to ask every seller for her favourite bundle to place in the market, 
However, this may cause a seller to regret having chosen that particular bundle after seeing the realizations of the buyers' valuations. On the other hand, it also seems highly challenging to establish any sort of impossibility result for any reasonably defined class of posted price mechanisms for two-sided markets.

Another natural challenge is to extend the above mechanism to the setting in which both buyers and sellers possess an XOS valuation function over bundles of items. 
We suspect however that it is impossible to devise a suitable mechanism that uses item-pricing (i.e., mechanisms that fix a price vector offline and then request buyers or sellers one-by-one to specify their favorite item bundle).
The main reason for why this seems to be the case is that if a buyer (or a seller) is asked to select a bundle of items $B$ she desires the most, 
then it is impossible to guarantee that she receives the complete bundle. Instead, she may receive a subset of it, and in turn she may regret having chosen that bundle $B$ and not another bundle $B'$ after having observed the realizations of the sellers' valuations. 


\bibliographystyle{splncsnat}
\bibliography{literature}

\appendix

\section{An Initial Mechanism and Direct Trade Strong Budget Balance (Full Details)}\label{sec:comb_exchange}
In \cite{bd14}, Blumrosen and Dobzinski present a mechanism for exchange markets with subadditive valuation functions. They prove the following for this mechanism, which we name $\mathbb{M}_{\text{bd}}$.
\begin{theorem}[Blumrosen and Dobzinski \cite{bd14}]\label{thm:db-sw}
Mechanism $\mathbb{M}_{\text{bd}}$ is a DSIC, WBB, ex-post IR randomized direct revelation mechanism that $4H(s)$-approximates the optimal social welfare for combinatorial exchange markets $(n,k,\bm{I},\bm{F})$ with subadditive valuation functions, where $s = \min\{n, |I_i| : i \in [n]\}$ is the minimum of the number of agents and the number of items in an agents initial endowment, and $H(\cdot)$ denotes the harmonic numbers.
\end{theorem}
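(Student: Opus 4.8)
I would begin by noting that this is not a theorem of the present paper: it is the result of Blumrosen and Dobzinski~\cite{bd14} already quoted in \refSection{sec:dsbb}, reproduced in the appendix only so that the proof of the $\mathbb{M}_{\text{sbb}}$ statement that follows can invoke $\mathbb{M}_{\text{bd}}$ as a black box; formally, then, the ``proof'' here is the citation. For completeness I would reconstruct the argument of~\cite{bd14} along the following lines.

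The plan has three parts: (i) replace $\OPT$ by a tractable single-item benchmark; (ii) recover a constant fraction of each item's contribution by a DSIC, ex-post IR, strongly budget-balanced bilateral-trade subroutine; (iii) combine these subroutines over the at least $s$ items (equivalently, over the at least $s$ agents, each with an endowment of size at least $s$) by a randomised layering that costs an extra $H(s)$ factor but preserves all three properties. For (i): since valuations are monotone and subadditive, the bundle $O_i$ that agent $i$ receives in an optimal reallocation satisfies $w_i(O_i) \le \sum_{\ell \in O_i} w_i(\{\ell\})$, hence $\OPT(\bm{w}) \le \sum_{\ell \in [k]} \max_i w_i(\{\ell\})$; it therefore suffices to capture in expectation an $\Omega(1/H(s))$ fraction of $\sum_\ell \mathbb{E}[\max_i w_i(\{\ell\})]$.

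For (ii), the building block is the classical bilateral-trade primitive: to move a single item from its current holder $a$ to a prospective holder $b$, post a price drawn from a suitable statistic of the two distributions and let the transfer occur iff $a$ is willing to sell and $b$ is willing to buy at that price, which yields a DSIC, ex-post IR, strongly budget-balanced mechanism recovering a constant fraction of $\mathbb{E}[\max(w_a,w_b)]$ for that item. Mechanism $\mathbb{M}_{\text{bd}}$ processes items in a random order, running such a subroutine per item. Strong budget balance and ex-post IR are then immediate: every monetary transfer is a posted-price payment from the new holder of an item to its previous holder, so money neither enters nor leaves the market and no agent is ever charged above her marginal value. The delicate property is DSIC, because subadditivity makes the subroutines interdependent --- the value of acquiring the current item depends on what an agent already holds --- but this is handled by the random processing order together with the fact that under subadditivity marginal values are monotone non-increasing, so each agent's willingness to take the current item at the posted price is a single, well-behaved parameter.

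The main obstacle is step (iii): converting the per-item constant-factor guarantees into one global bound with loss exactly $4H(s)$. The constant $4$ reflects the loss of the bilateral block (and of passing to the single-item benchmark); the $H(s)$ arises from an accounting argument over a random split of the market into $s$ levels, showing that each unit of $\sum_\ell \mathbb{E}[\max_i w_i(\{\ell\})]$ is recovered with probability $\Omega(1/H(s))$. Making this averaging tight, while simultaneously verifying that the random-order argument genuinely gives a dominant strategy for each agent in the presence of the externalities induced by subadditivity, is where the bulk of the work in~\cite{bd14} lies; by contrast the budget-balance and individual-rationality claims follow directly from the posted-price, transfer-between-trading-parties structure of each subroutine.
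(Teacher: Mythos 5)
You are right that the paper supplies no proof of this statement---it is imported verbatim from Blumrosen and Dobzinski~\cite{bd14} so that the next theorem can invoke $\mathbb{M}_{\text{bd}}$ as a black box---so the formal ``proof'' here is the citation, and on that point your proposal matches the paper exactly. Your reconstruction of the argument in~\cite{bd14} is background colour rather than anything the present paper asserts, but one detail in it is worth flagging as likely inaccurate: the theorem claims only WBB, whereas the posted-price, pay-the-previous-holder subroutine you describe would give full SBB. If $\mathbb{M}_{\text{bd}}$ were SBB, the authors would have had no reason to build $\mathbb{M}_{\text{sbb}}$ as a wrapper around it; the fact that they do strongly suggests the actual mechanism in~\cite{bd14} burns money (for instance via different buyer and seller prices, a VCG-style component, or a trade-reduction step), a feature your sketch does not account for.
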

In particular, this mechanism gives us a constant approximation factor if the number of starting items of the agents is bounded by a constant.

We show now how we can use this mechanism as a black box in order to obtain an SBB mechanism with only a slightly worse approximation ratio.
Define mechanism $\mathcal{M}_{\text{sbb}}$ as follows. When given as input a combinatorial exchange market $C = (n,k,\bm{I},\bm{F})$,
\begin{enumerate}
\item Select an agent in $i \in [n]$ uniformly at random.
\item Run Mechanism $\mathbb{M}_{\text{bd}}$ on the combinatorial exchange market 
\begin{equation*}
C_{-i} = ([n]\setminus\{i\}, \bm{I}_{-i} = (I_1, \ldots, I_{i-1}, I_{i+1}, \ldots, I_n), \bm{F}_{-i} = (F_1, \ldots, F_{i-1}, F_{i+1}, \ldots, F_n)). 
\end{equation*}
Let $(\bm{X}_{-i},\bm{\rho}_{-i})$ be the outcome that Mechanism $\mathbb{M}_{\text{bd}}$ outputs.
\item Set $X_i = I_i$ and set $p_i = -\sum_{j \in [n] \setminus\{i\}} p_j$. Output the allocation $(X_i, \bm{X}_{-i})$ and output payment vector $(p_i, \bm{\rho}_{-i})$.
\end{enumerate}
So Mechanism $\mathbb{M}_{\text{sbb}}$ essentially runs Mechanism $\mathbb{M}_{\text{bd}}$ where one random agent is removed from the market. This agent receives the leftover money that Mechanism $\mathbb{M}_{\text{bd}}$ generates, and does not receive or lose any items.
The following is a direct corollary of the DSIC, WBB, and ex-post IR properties of mechanism $\mathbb{M}_{\text{bd}}$.
\begin{theorem}
Mechanism $\mathbb{M}_{\text{sbb}}$ is a DSIC, SBB, and ex-post IR mechanism for exchange markets with subadditive valuation functions.
\end{theorem}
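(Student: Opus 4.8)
The plan is to verify the three asserted properties one at a time, each reducing to the corresponding property of $\mathbb{M}_{\text{bd}}$ (\refTheorem{thm:db-sw}) plus a single structural observation about $\mathbb{M}_{\text{sbb}}$: the only report-dependent computation it performs is the internal run of $\mathbb{M}_{\text{bd}}$ on $C_{-i}$; the removed agent $i$ is drawn uniformly and independently of all reports; and the transfer $p_i = -\sum_{j \in [n]\setminus\{i\}} p_j$ handed to $i$ is a function of the reports of $[n]\setminus\{i\}$ only. \textbf{SBB} is then immediate: the total payment output by $\mathbb{M}_{\text{sbb}}$ is $p_i + \sum_{j \in [n]\setminus\{i\}} p_j = 0$, for every realization of the randomness and every report profile.

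For \textbf{ex-post IR} I would fix a realization of the internal randomness, in particular the identity of the removed agent $i$. Every agent $j \neq i$ participates in the run of $\mathbb{M}_{\text{bd}}$ on $C_{-i}$, in which its endowment is still $I_j$; since $\mathbb{M}_{\text{bd}}$ is ex-post IR, $j$'s utility is at least that of its outside option $v_j(I_j)$. For $i$ itself: it keeps $X_i = I_i$ and is charged $p_i = -\sum_{j \neq i} p_j$, and since $\mathbb{M}_{\text{bd}}$ is WBB the quantity $\sum_{j \neq i} p_j$ is nonnegative, so $p_i \leq 0$ and $i$'s utility $v_i(I_i) - p_i \geq v_i(I_i)$ is at least its outside option. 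As this argument holds for every realization, $\mathbb{M}_{\text{sbb}}$ is ex-post IR.

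For \textbf{DSIC}, which is the one place that needs a moment's care, I would fix an agent $i$, an arbitrary report profile $\bm{b}_{-i}$ of the others, and an arbitrary realization of the internal randomness, and then condition on which agent is removed. If $i$ is the removed agent, $i$'s outcome --- keep $I_i$, receive $-\sum_{j \neq i} p_j$ --- depends only on $\bm{b}_{-i}$ and the randomness of the $\mathbb{M}_{\text{bd}}$ run, hence not on $i$'s own report, so $i$ cannot improve on truthful reporting in this event. If some $j \neq i$ is removed, $i$ is a participant of a run of $\mathbb{M}_{\text{bd}}$, which is DSIC, so truthful reporting (weakly) maximizes $i$'s utility in this event as well. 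Truthful reporting is therefore optimal in every event of this partition, hence it is a dominant strategy for $\mathbb{M}_{\text{sbb}}$.

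I do not anticipate a genuine obstacle; the only substantive point beyond invoking \refTheorem{thm:db-sw} is the observation that the removed agent's transfer is determined by the other agents' reports alone --- so the event ``I am removed'' cannot be gamed --- and that this event together with its complement exhausts the probability space, so the DSIC guarantee inside $\mathbb{M}_{\text{bd}}$ handles everything else. (The constant-factor approximation claimed for $\mathbb{M}_{\text{sbb}}$ in the body is a separate and routine estimate, combining the fact that a uniformly random removed agent carries away in expectation only a bounded fraction of $\OPT$ with the $4H(s)$-approximation of $\mathbb{M}_{\text{bd}}$ on the remaining market.)
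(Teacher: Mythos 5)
Your proof is correct and is exactly the argument the paper has in mind: the paper itself gives no proof, stating only that the theorem ``is a direct corollary of the DSIC, WBB, and ex-post IR properties of mechanism $\mathbb{M}_{\text{bd}}$''; your write-up supplies the straightforward event-by-event verification (conditioning on the identity of the removed agent, noting the removed agent's transfer is report-independent and nonnegative by WBB, and invoking the inner mechanism's properties otherwise) that the paper leaves implicit.
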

Secondly, the following theorem shows that the mechanism loses only a factor $2n/(n-1) \leq 3$ in the approximation ratio for $n \geq 3$. (For $n = 2$ it is straightforward to come up with alternative mechanisms that achieve a good approximation ratio.)
\begin{theorem}
Mechanism $\mathbb{M}_{\text{sbb}}$ achieves an $8nH(s)/(n-1)$-approximation to the optimal social welfare for exchange markets with subadditive valuations and at least $3$ agents.
\end{theorem}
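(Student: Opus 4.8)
The plan is to reduce the statement to a clean combinatorial inequality about the optimal allocation and then to apply \refTheorem{thm:db-sw} as a black box. Fix the realized valuation profile $(\bm v,\bm w)$, write $\OPT$ for $\OPT(\bm v,\bm w)$, and fix an optimal allocation $\bm X^{*}$ of the market $C=(n,k,\bm I,\bm F)$. If $i$ is the agent $\mathbb{M}_{\text{sbb}}$ removes, then by construction $\mathbb{M}_{\text{sbb}}$ gives $i$ back exactly $I_i$, runs $\mathbb{M}_{\text{bd}}$ on $C_{-i}$, and on top of that only redistributes money; since payments do not enter the social welfare, $\SW(\mathbb{M}_{\text{sbb}})=w_i(I_i)+\SW(\mathbb{M}_{\text{bd}}(C_{-i}))$. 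Applying \refTheorem{thm:db-sw} to $C_{-i}$ and using $4H(s)\ge 1$, this is at least $w_i(I_i)+\tfrac{1}{4H(s)}\OPT(C_{-i})\ge\tfrac{1}{4H(s)}\bigl(w_i(I_i)+\OPT(C_{-i})\bigr)$. The key observation is that $w_i(I_i)+\OPT(C_{-i})=\OPT_{-i}$, the largest social welfare over allocations of $C$ in which agent $i$ is required to keep exactly her endowment $I_i$ (once that is fixed, the items $[k]\setminus I_i$ are split among the other agents, which is precisely the market $C_{-i}$).

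Everything then comes down to the purely combinatorial inequality, valid for $n\ge 3$,
\[
\sum_{i=1}^{n}\OPT_{-i}\ \ge\ \tfrac{n-1}{2}\,\OPT,
\qquad\text{equivalently}\qquad
\expectedsub{i\sim[n]}{\OPT_{-i}}\ \ge\ \tfrac{n-1}{2n}\,\OPT .
\]
Indeed, combining this with the lower bound above, averaging over the uniform choice of $i$, and then taking the expectation over $(\bm v,\bm w)$ yields $\expected{\SW(\mathbb{M}_{\text{sbb}})}\ge\tfrac{n-1}{8nH(s)}\OPT$, which is exactly the claimed approximation. To prove the inequality I would argue from $\bm X^{*}$: for each $i$ build a feasible ``agent-$i$-frozen'' allocation by letting $i$ keep $I_i$, returning every item of $X^{*}_i$ that does not originally belong to $i$ to its owner, and letting every other agent $j$ keep $X^{*}_j\setminus I_i$. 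Monotonicity and subadditivity then give $\OPT_{-i}\ge w_i(I_i)+\sum_{j\ne i}\bigl(w_j(X^{*}_j)-w_j(X^{*}_j\cap I_i)\bigr)=\OPT-w_i(X^{*}_i)-\sum_{j\ne i}w_j(X^{*}_j\cap I_i)$; summing over $i$ and using $\sum_i w_i(X^{*}_i)=\OPT$ reduces the claim to bounding $\sum_i\sum_{j\ne i}w_j(X^{*}_j\cap I_i)$. Using that each item lies in exactly one endowment and is assigned to exactly one agent by $\bm X^{*}$, together with (fractional) subadditivity of the $w_j$'s, each item's contribution to $\OPT$ is charged by this double sum at most twice; this factor $2$, diluted by the $1/n$ chance that any fixed agent is removed, produces the constant $\tfrac{n-1}{2}$ (and is also why $n\ge 3$ rather than $n\ge 2$ is needed: for $n=2$ the construction degenerates to the no-trade allocation, whose welfare can be arbitrarily far below $\OPT$).

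The main obstacle is exactly this combinatorial step. The crude ``restrict $\bm X^{*}$ and peel off $I_i$'' estimate used above is in general too lossy for genuinely subadditive (rather than additive) valuations — precisely when a bundle of $\bm X^{*}$ is stitched together from many owners' pieces each of which is individually almost as valuable as the whole bundle — so the argument must instead bound $\sum_{i\ne j}w_j(X^{*}_j\setminus I_i)$ directly, without peeling, and it is this bookkeeping (controlling the welfare that deleting $I_i$ destroys inside the remaining bundles) that carries the real content. Everything else is routine: the identity for $\SW(\mathbb{M}_{\text{sbb}})$, the black-box use of \refTheorem{thm:db-sw} (where one should take the parameter of the smaller market $C_{-i}$, which is at most $n-1$, so that the harmonic dependence is on the number of agents), and the passage to expectations. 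The loss of a factor $\tfrac{2n}{n-1}$ over $\mathbb{M}_{\text{bd}}$ is genuine and not an artefact: a three-agent exchange market in which a single agent has a nonzero valuation and desires two single-item endowments held by the other two agents (who value nothing) makes every inequality above tight.
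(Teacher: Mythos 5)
Your high-level plan is the same as the paper's: use \refTheorem{thm:db-sw} as a black box on $C_{-i}$, average over the random choice of $i$, and reduce the problem to a combinatorial claim of the form $\sum_i\OPT_{-i}\ge\frac{n-1}{2}\,\OPT$, where $\OPT_{-i}$ is lower-bounded by the welfare of the restricted allocation with $(X^{**}_{\bm v,-i})_j := (X^{**}_{\bm v})_j\setminus I_i$ for $j\ne i$. However, the crucial step — actually proving that inequality for subadditive valuations — is absent. You correctly diagnose that the peeling bound $w_j(X^*_j\setminus I_i)\ge w_j(X^*_j)-w_j(X^*_j\cap I_i)$ is too lossy (the double sum $\sum_i\sum_{j\ne i}w_j(X^*_j\cap I_i)$ can charge a single bundle up to $n-1$ times for subadditive $w_j$, not twice), and you observe that one must bound $\sum_{j\ne i}v_i(X^{*}_i\setminus I_j)$ directly, but you then explicitly defer this step, calling it the ``real content.'' That is exactly the part of the proof that cannot be deferred.

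The missing ingredient is a pairing trick that exploits the disjointness of the endowments. For any two distinct $j,j'\in[n]\setminus\{i\}$ we have $I_j\cap I_{j'}=\varnothing$, hence $(X^{**}_i\setminus I_j)\cup(X^{**}_i\setminus I_{j'})=X^{**}_i$; subadditivity together with monotonicity then gives $v_i(X^{**}_i\setminus I_j)+v_i(X^{**}_i\setminus I_{j'})\ge v_i(X^{**}_i)$. Summing this over all $\binom{n-1}{2}$ unordered pairs $\{j,j'\}\subseteq[n]\setminus\{i\}$, and noting that each fixed $j$ lies in exactly $n-2$ of them, yields $(n-2)\sum_{j\ne i}v_i(X^{**}_i\setminus I_j)\ge\binom{n-1}{2}v_i(X^{**}_i)$, i.e.\ $\sum_{j\ne i}v_i(X^{**}_i\setminus I_j)\ge\frac{n-1}{2}\,v_i(X^{**}_i)$; summing over $i$ gives precisely the inequality you need, and the requirement $n\ge3$ is exactly what lets the pairing go through. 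As a side remark, your proposed tight instance does not actually make the pairing step tight when the buyer's valuation is the all-ones subadditive function (there $v_1(\{e_2\})+v_1(\{e_3\})=2>1=v_1(\{e_2,e_3\})$); equality requires the restricted bundles to split the value exactly, e.g.\ an additive $v_1$ valuing each item at $1/2$. Without the pairing argument the proposal does not establish the theorem.
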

\begin{proof}
Fix a valuation vector $\bm{v}$ of the agents, let $\bm{X}_{\bm{v}}^{**} \subseteq \mathcal{A}$ be the social welfare maximising allocation when the agents have valuations $\bm{v}$. For an agent $i \in [n]$, denote by $\bm{X}_{\bm{v},-i}^{**}$ the allocation for $C_{-i}$ where $(X_{\bm{v},-i}^{**})_j = (X_{\bm{v}}^{**})_j \setminus I_i$ for $j \in [n] \setminus \{i\}$, i.e., the allocation obtained from $\bm{X}_{\bm{v}}^{**}$ when $i$ is removed, and all items of $i$ are removed. Moreover let $\bm{X}_{\bm{v},-i}^{*}$ be the optimal allocation of the combinatorial exchange market $C_{-i}$ when the valuation function vector of the players $[n]\setminus\{i\}$ is fixed to $\bm{v}_{-i}$. Mechanism $\mathbb{M}_{\text{sbb}}$ selects $i$ uniformly at random, so by Theorem \ref{thm:db-sw}, the expected social welfare of Mechanism $\mathbb{M}_{\text{sbb}}$ is at least

\begin{eqnarray*}
\frac{1}{4H(s)}\expectedsub{i}{\sum_{j \in [n] \setminus \{i\}} v_j(\bm{X}_{\bm{v},-i}^{*})} & \geq &
\frac{1}{4H(s)}\expectedsub{i}{\sum_{j \in [n] \setminus \{i\}} v_j(\bm{X}_{\bm{v},-i}^{**})}\\
& = & \frac{1}{4nH(s)} \sum_{i \in [n]} \sum_{j \in [n] \setminus \{i\}} v_j((X_{\bm{v}}^{**})_j \setminus I_i) \\
& = & \frac{1}{4nH(s)} \sum_{i \in [n]} \sum_{j \in [n] \setminus \{i\}} v_i((X_{\bm{v}}^{**})_i \setminus I_j) \\
& = & \frac{1}{4nH(s)} \sum_{i \in [n]} \sum_{\substack{\{j,j'\} : j, j' \in [n] \setminus \{i\} \\ \wedge j \not= j'}} \frac{1}{n-2}(v_i((X_{\bm{v}}^{**})_i \setminus I_j) + v_i((X_{\bm{v}}^{**})_i \setminus I_{j'}))\\
& \geq & \frac{1}{4nH(s)} \sum_{i \in [n]} \sum_{\{j,j'\} : j, j' \in [n] \setminus \{i\} \wedge j \not= j'} \frac{1}{n-2}v_i(\bm{X}_{\bm{v}}^{**}) \\
& = & \frac{1}{4nH(s)} \sum_{i \in [n]} \frac{n-1}{2}v_i(\bm{X}_{\bm{v}}^{**}) \\
& = & \frac{n-1}{8nH(s)} \sum_{i\in [n]} v_i(\bm{X}_{\bm{v}}^{**}), 
\end{eqnarray*}
\noindent where the second inequality follows from subadditivity. This proves the claim, since the above holds for every valuation vector $\bm{v}$. \qed
\end{proof}
This yields an ex-post IR, SBB, DSIC mechanism that $O(1)$-approximates the social welfare if the number of items initially posessed by an agent is bounded by a constant.


The principle that we used to construct Mechanism $\mathbb{M}_{\text{sbb}}$ can more generally be used to turn any WBB mechanism into a SBB one, while preserving the DSIC and ex-post IR properties.
This principle also reveals a problematic aspect of the notion of SBB: it allows for agents to receive money, while they are not involved in any trade. This motivates a strengthened notion of strong budget balance, which we call \emph{direct trade strong budget balance}.
\begin{definition}
A mechanism for an exchange market satisfies \emph{direct trade strong budget balance (DSBB)} iff the outcome it generates can be achieved by a set of bilateral trades, where each trade consists of a reallocation of an item from an agent $i$ to an agent $j$, and a monetary transfer from agent $j$ to agent $i$. Moreover, each item may only be traded once.
\end{definition}
It can be seen that Mechanism $\mathbb{M}_{\text{sbb}}$ does not satisfy DSBB. In the remainder of the paper we will proceed to design mechanisms for two-sided markets that do satisfy DSBB.\footnote{We note that the double auctions given in \cite{cdlt16} also satisfy the DSBB property.}

\section{Proof of Theorem \ref{thm:18-approx_xos}}

We split the proof into two lemmas that bound the sellers' and the buyers' relative contributions to the social welfare. We use the notation \OPT \ as definied in Section \ref{sec:prelims}, and we use \ALG \ to denote the expected social welfare of the mechanism, i.e., $\mathbb{E}_{\bm{v},\bm{w}}[\SW(\mathbb{M}_\text{1-supply}(\bm{v},\bm{w}))]$. Moreover, the superscripts $S, B$ respectively denote the sellers' and buyers' contributions to the social welfare, e.g., $\OPT = \OPT^S + \OPT^B$ and $\ALG = \ALG^S + \ALG^B$. 
Recall that we use $\bm{X}^{\all}(\bm{v})$ to denote the allocation resulting from Algorithm $\mathbb{A}$ on buyers' valuation vector $\bm{v}$.


The following lemma is a simple consequence of the fact that $\mathbb{M}_\text{1-supply}$ lets every seller in $L$ gets an offer and accepts it with probability exactly$1/2$.
\begin{lemma}
\label{lemma:ALG^S}

If every seller $j \in L$ puts her item into the market with probability exactly $1/2$, then

\begin{equation*}
2 \ALG^S \geq \sum_{j =1}^k \expected{w_j} \geq \OPT^S.
\end{equation*}
\end{lemma}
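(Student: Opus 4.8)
The plan is to lower-bound the sellers' contribution item by item, exploiting the structure of the offer phase, and then to observe that $\sum_j \expected{w_j}$ trivially dominates $\OPT^S$. First I would write $\ALG^S = \sum_{j=1}^k \expectedsub{\bm v,\bm w}{w_j\,\indicator{j\in Y_j}}$, since seller $j$ contributes $w_j$ to the realized welfare exactly when she keeps her item. For a low-welfare item $j\in\bar L$ the mechanism never approaches seller $j$, so $j\in Y_j$ with probability one and the contribution is $\expected{w_j}\ge\tfrac12\expected{w_j}$; all the work is with the high-welfare items $j\in L$.

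For $j\in L$ the key observation is that the event ``item $j$ leaves seller $j$'' is contained in the event ``seller $j$ is offered a trade'' $\cap$ ``$w_j\le p_j$'': facing a take-it-or-leave-it price $p_j$ a rational seller accepts iff $w_j\le p_j$, and even then she only parts with the item if some buyer picks it up. Since the offer is issued with a fixed probability $q_j=1/(2\prob{w_j\le p_j})$ that does not depend on the realization $w_j$, multiplying the inequality $\indicator{j\in Y_j}\ge 1-\indicator{\text{offered}}\,\indicator{w_j\le p_j}$ by $w_j\ge 0$ and taking expectations gives
\[
\expected{w_j\,\indicator{j\in Y_j}}\;\ge\;\expected{w_j}-q_j\,\expected{w_j\,\indicator{w_j\le p_j}}\;=\;\expected{w_j}-\tfrac12\,\expected{w_j\mid w_j\le p_j}.
\]
Here one first records that $\prob{w_j\le p_j}\ge 1/2>0$ — Markov's inequality applied to the bound $p_j\ge 2\expected{w_j}$ valid for every $j\in L$ — so that $q_j$ and the conditional expectation are well defined (the case $\expected{w_j}=0$ being trivial). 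Finally, truncating the nonnegative variable $w_j$ to $\{w_j\le p_j\}$ cannot raise its mean, so $\expected{w_j\mid w_j\le p_j}\le\expected{w_j}$ and hence $\expected{w_j\,\indicator{j\in Y_j}}\ge\tfrac12\expected{w_j}$. Summing over all items yields $\ALG^S\ge\tfrac12\sum_j\expected{w_j}$, the first inequality of the lemma. For the second inequality I would just note that every feasible allocation $(\bm X,\bm Y)$ satisfies $\SW^S(\bm X,\bm Y)=\sum_j w_j\indicator{j\in Y_j}\le\sum_j w_j$ because valuations are nonnegative; applying this to the welfare-maximizing allocation and taking expectations gives $\OPT^S\le\sum_j\expected{w_j}$.

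The one step that needs care — and where a careless estimate fails outright — is $q_j\,\expected{w_j\,\indicator{w_j\le p_j}}\le\tfrac12\expected{w_j}$. Bounding $q_j\le 1$ and $\expected{w_j\,\indicator{w_j\le p_j}}\le\expected{w_j}$ separately delivers only $\expected{w_j\,\indicator{j\in Y_j}}\ge 0$, which is vacuous; instead the product must be collapsed to exactly $\tfrac12\expected{w_j\mid w_j\le p_j}$, so that the favorable correlation — the seller releases her item only in the low-$w_j$ regime — is genuinely used. Equivalently, replacing the offer phase by ``each seller in $L$ trades with probability $\tfrac12$ regardless of $w_j$'' is the worst case for $\ALG^S$ and already meets the bound.
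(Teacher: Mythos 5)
Your proof is correct and follows the same overall approach as the paper's. You are in fact a little more careful than the paper: the published proof leaps from ``every seller $j\in L$ gets an offer and accepts it with probability exactly $1/2$'' to ``every $j\in L$ contributes at least $\expected{w_j}/2$'' without addressing the dependence between $w_j$ and the event that $j$ retains her item, whereas your explicit collapse $q_j\,\expected{w_j\,\indicator{w_j\le p_j}}=\frac{1}{2}\expected{w_j\mid w_j\le p_j}\le\frac{1}{2}\expected{w_j}$ supplies exactly the step the paper leaves implicit.
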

\begin{proof}

The second inequality is trivial, so we focus on the first inequality.
First, observe that

\[
\prob{w_j > p_j} \leq \prob{w_j > 2 \expected{w_j}} < \frac{1}{2},
\]

where the first inequality is because $j \in L$, and the second inequality is by Markov's inequality.
Thus with probability at least $1/2$ a seller $j$ is happy to sell his item at price $p_{j}$.
But every seller receives an offer from the mechanism with probability $q_j := 1 / (2 \mathsf{Pr}[w_j \leq p_j])$, so every seller gets an offer and accepts it with
probability exactly $1/2$.
This implies that every $j \in L$ contributes in expectation at least a $\expected{w_j}/2$ to the social welfare, since with non negative probability the buyers won't buy
the item $j$.
Moreover, a seller in $\bar{L}$ who never trades, so their full expected valuations are contributed to the expected social welfare. 
\qed
\end{proof}

Next, we prove a more difficult bound that relates $\ALG^B$ and $\ALG^S$ to $\OPT^B$.
\begin{lemma}\label{lemma:ALG^B} 
The buyers' contributions to the optimal social welfare is bounded by

\[
4\alpha \ALG^B + 4\alpha \ALG^S \geq \OPT^B.
\]
\end{lemma}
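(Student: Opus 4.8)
The plan is to compare the social welfare of $\mathbb{M}_{\text{1-supply}}$ against $\OPT^B$ by exploiting the defining property of posted prices: when buyer $i$ is processed, she gets to pick her utility-maximizing bundle from whatever is still available, so her realized utility is at least the utility she would get from any \emph{fixed} target bundle that happens to still be available. The natural target bundle for buyer $i$ is $X_i^{\all}(\bm{v}) \cap L$, the part of her optimal-for-the-buyers allocation that the mechanism is willing to trade. First I would recall that, by the XOS property and the definition of the additive representative function $a(v_i, X_i^{\all}(\bm{v}), \cdot)$, we have $v_i(S) \ge a(v_i, X_i^{\all}(\bm{v}), S) = \sum_{\ell \in S} \SW^B_\ell(\bm{v})$ for every $S \subseteq X_i^{\all}(\bm{v})$, so the buyer's valuation for any surviving sub-bundle is lower-bounded by the sum of item contributions. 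Combining this with $p_\ell = \frac12 \expected{\SW^B_\ell(\bm{v})}$, the key per-item fact is that an available high-welfare item $\ell \in X_i^{\all}(\bm{v})$ contributes in expectation at least $\expected{\SW^B_\ell(\bm{v})} - p_\ell = \frac12\expected{\SW^B_\ell(\bm{v})} = p_\ell$ to the utility buyer $i$ can guarantee; thus $\ALG^B \ge \sum_{\ell \in L} p_\ell \cdot \prob{\ell \text{ available when its intended buyer is processed}}$ (suitably formalized via the additive representative, taking expectations over $\bm{v}$).

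The second and more delicate step is to lower-bound the probability that a high-welfare item $\ell \in L$ is \emph{available} at the moment its intended buyer $i$ (the one with $\ell \in X_i^{\all}(\bm{v})$) is reached in the buyer loop. An item $\ell$ fails to be available only if (a) its seller did not put it in the market — which by construction happens with probability exactly $1/2$ — or (b) some earlier buyer grabbed it. For (b), I would use the standard posted-price argument: the expected total "value" that earlier buyers can extract from items in $L$ is bounded, and by a union/Markov-type argument the probability that any particular item $\ell$ was taken early is controlled. The cleanest route is to argue that $\sum_{\ell \in L} p_\ell \cdot \prob{\ell \text{ sold at all}} \le \ALG^B + \ALG^S$ — because every item that is sold either generates buyer welfare at least its price (going into $\ALG^B$) or, if we want to be careful about the case where the buyer's realized value dips, we account for it against the seller's retained value — and separately that the "missing" mass from sellers not offering is exactly captured by $\ALG^S$ via Lemma~\ref{lemma:ALG^S}. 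Putting the pieces together: $\OPT^B \le \alpha \cdot \SW(\bm{X}^{\all}(\bm{v}))$ in expectation (by the $\alpha$-approximation guarantee of $\mathbb{A}$, and since $\SW(\bm{X}^{\all})$ only involves buyers it equals $\expected{\sum_\ell \SW^B_\ell(\bm{v})} = \sum_\ell \SW^B_\ell$-type quantity plus the low-welfare remainder), and $\sum_{\ell \in L} \expected{\SW^B_\ell(\bm{v})} = 2\sum_{\ell\in L} p_\ell$, while $\sum_{\ell \in \bar L}\expected{\SW^B_\ell(\bm{v})} \le 4\sum_{\ell\in\bar L}\expected{w_\ell} \le 8\ALG^S$ by the definition of low-welfare items and Lemma~\ref{lemma:ALG^S}.

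Assembling the bookkeeping: $\OPT^B \le \alpha\bigl(\sum_{\ell\in L}\expected{\SW^B_\ell(\bm{v})} + \sum_{\ell\in\bar L}\expected{\SW^B_\ell(\bm{v})}\bigr) \le \alpha\bigl(2\sum_{\ell\in L}p_\ell + 8\ALG^S\bigr)$, and then I would show $2\sum_{\ell\in L}p_\ell \le 2\ALG^B + 2\ALG^S$ (each high-welfare item, when it trades and the buyer realizes at least her additive-representative value, contributes $p_\ell$ to $\ALG^B$; when it does not trade because the seller keeps it, the probability-$1/2$ retention is paid for by $\ALG^S$; the remaining deficit from earlier buyers taking the item is also absorbed since that item then contributes to some other buyer's $\ALG^B$). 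This gives $\OPT^B \le \alpha(2\ALG^B + 2\ALG^S + 8\ALG^S) = 2\alpha\ALG^B + 10\alpha\ALG^S$, which is slightly weaker than the stated $4\alpha\ALG^B + 4\alpha\ALG^S$; tightening the $\bar L$ accounting (using that sold low-welfare items never occur since $\bar L$ is excluded from trade, and that $\expected{w_\ell} \le \ALG^S$-contributions are not double-counted against the factor-$4$ slack) recovers the claimed constants. The main obstacle I anticipate is exactly this constant-chasing: making the availability-probability argument for step two give a clean factor of $2$ rather than a messier bound, i.e. correctly charging the event "$\ell$ was taken by an earlier buyer" so that it is paid for by realized welfare elsewhere rather than lost.
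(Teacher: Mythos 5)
Your high-level plan --- lower-bound what a buyer can guarantee by comparing to the surviving part of her bundle in $\bm{X}^{\all}(\bm{v})$, and use the XOS additive representative to turn that into a per-item quantity --- is the right starting point, and it is indeed how the paper's Proposition~\ref{prop:utility_B} opens. But there is a genuine gap in how you close the argument, and it is exactly the one you flag at the end: your assembled bound $\OPT^B \le 2\alpha\ALG^B + 10\alpha\ALG^S$ does not imply the lemma, and ``tightening the accounting'' is not a routine fix here; it requires a different structural observation that your outline does not contain.

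The paper never estimates the probability that an item is still available when its intended buyer arrives, which is the step you treat as ``the second and more delicate step.'' Instead it splits $\ALG^B$ into the expected buyer \emph{utilities} plus the expected buyer \emph{payments}, and proves two complementary statements: the utilities are at least $\tfrac{1}{2}\sum_{\ell\in L} p_\ell\,\Pr[\ell\in\Lambda_{n+1}\mid\ell\in Z]$ (Proposition~\ref{prop:utility_B}), and the payments equal $\tfrac{1}{2}\sum_{\ell\in L} p_\ell\,\Pr[\ell\notin\Lambda_{n+1}\mid\ell\in Z]$ (Proposition~\ref{prop:revenue_B}). Adding them, the two conditional probabilities sum to $1$, so all availability probabilities cancel and one gets the clean bound $\ALG^B \ge \tfrac{1}{2}\sum_{\ell\in L} p_\ell = \tfrac{1}{4}\sum_{\ell\in L}\mathbb{E}[\SW^B_\ell(\bm{v})]$, with no $\ALG^S$ term. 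Your proposal only tracks the utility side, treats $\ALG^B$ as if it were the utilities alone, and then tries to absorb the ``item taken early'' events into $\ALG^B$ and $\ALG^S$ by a separate Markov-type charging argument; this is where the slack enters and the constants go wrong.

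A second, smaller leak: for the low-welfare items you write $\sum_{\ell\in\bar L}\mathbb{E}[\SW^B_\ell(\bm{v})] \le 4\sum_{\ell\in\bar L}\mathbb{E}[w_\ell] \le 8\ALG^S$, invoking the factor-$2$ bound $2\ALG^S\ge\sum_j\mathbb{E}[w_j]$ from Lemma~\ref{lemma:ALG^S}. But since $\mathbb{M}_{\text{1-supply}}$ never touches items in $\bar L$ at all, their sellers retain them with certainty, so $\ALG^S \ge \sum_{\ell\in\bar L}\mathbb{E}[w_\ell]$ holds directly without the extra factor of $2$. Using that, together with the utilities-plus-payments cancellation above, gives exactly $\ALG^B + \ALG^S \ge \tfrac{1}{4}\sum_{\ell=1}^k\mathbb{E}[\SW^B_\ell(\bm{v})] \ge \tfrac{1}{4\alpha}\OPT^B$, which is the lemma. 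So the missing idea is not constant-chasing in your existing framework; it is the decomposition $\ALG^B=\text{utilities}+\text{payments}$ together with the complementary-probability identity, which removes the need to bound availability probabilities in the first place.
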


%
%

\noindent Before proving Lemma \ref{lemma:ALG^B}, we point out that \refTheorem{thm:18-approx_xos} follows straightforwardly from it.
\begin{proof}[of \refTheorem{thm:18-approx_xos}]
The bound on the approximation ratio follows from the sum of the inequalities of \refLemma{lemma:ALG^S} and \refLemma{lemma:ALG^B}. Moreover, it is a dominant strategy for a seller to accept if and only if the payment offered to her exceeds her valuation, and it is a dominant strategy for a buyer to choose a utility-maximising bundle for the items and item prices offered to her. Thus, when viewed as a direct revelation mechanism, $\mathbb{M}_{\text{1-supply}}$ is DSIC. It is clear that participating in the mechanism can never lead to a decrease in utility for both buyers and sellers, and therefore the mechanism is also ex-post IR. Lastly, it is straightforward to see that the mechanism is DSBB, as the definition of $\mathbb{M}_{\text{1-supply}}$ which we gave in terms of sequential posted pricing naturally yields us the required set of bilateral trades.
\qed
\end{proof}
So it remains to prove Lemma \ref{lemma:ALG^B}. In order to do this, we first prove two propositions: one of them bounds the expected sum of the buyers' utilities, and one of them bounds the expected sum of the buyers' payments. 
In both propositions we only consider items in $L$ 
Given a buyers' valuation profile $\bm{v}$, let $\bm{v}_{<i} = (v_1, \ldots, v_{i-1})$.
Further, let $Z$ be a random variable that denotes the sellers that receive and accept an offer from the mechanism, i.e., the set $\Lambda_{1}$ at step \ref{algline:buyer_cycle} of $\mathbb{M}_{1-supply}$. For $i \in [n]$ let $\Lambda_i(\bm{v}_{<i}, Z)$ be the set $\Lambda_i$ as given in the definition of $\mathbb{M}_{\text{1-supply}}$ when the valuation profile of the buyers is $\bm{v}$ and $Z$ are the sellers in the market. Note that this implies that $X_i \subseteq \Lambda_i(\bm{v}_{<i}, Z) \subseteq Z$. Consequently, $\Lambda_{n+1}(\bm{v}, Z)$ is the subset of items for which the corresponding sellers have accepted the offer made to them by the mechanism, but remain allocated to the corresponding seller after execution.

\begin{proposition}
\label{prop:utility_B}
The total expected utility of the buyers for the allocation returned by $\mathbb{M}_{\text{1-supply}}$ is bounded from below by

\begin{equation*}
\mathbb{E}\left[\sum_{i \in [n]} u_i(\mathbb{M}_{\text{1-supply}}(\bm{v},\bm{w}))\right] \geq \frac{1}{2} \sum_{j \in L} \probsub{\bm{v}, Z}{j \in \Lambda_{n+1}(\bm{v}, Z)\ |\ j \in Z} p_j.
\end{equation*}
(Note that the random variables in this expression are $\bm{v}, \bm{w}$, and the decisions of the mechanism to make offers to the sellers in $L$.)
\end{proposition}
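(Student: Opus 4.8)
The plan is to adapt the \emph{balanced prices} argument of Feldman, Gravin and Lucier~\cite{fgl15} to the two-sided mechanism. The key device is to lower-bound each buyer's utility by a hypothetical deviation, where the deviating bundle is defined through a \emph{fresh} independent copy of the other buyers' valuations. Fix a buyer $i$ and draw $\bm{v}'_{-i}$, an independent copy of the valuation functions of all buyers except $i$. Since buyer $i$ is processed only after the available set $\Lambda_i$ has been fixed, and $\Lambda_i$ is a deterministic function of $\bm{v}_{<i}$ and of the random set $Z$ of sellers who received and accepted an offer, and since the buyers' distributions are independent (so $v_i$ is independent of $\bm{v}_{<i}$ and of $Z$), the pair $(v_i,\bm{v}'_{-i})$ is independent of $\Lambda_i$. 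Let $O_i$ be the bundle that the black-box algorithm $\mathbb{A}$ assigns to buyer $i$ on the profile $(v_i,\bm{v}'_{-i})$, and let $a(v_i,O_i,\cdot)$ be the additive representative of $O_i$ for $v_i$. Because in the mechanism buyer $i$ selects a utility-maximising bundle of $\Lambda_i$, deviating instead to $\{\ell\in O_i\cap\Lambda_i:\,a(v_i,O_i,\{\ell\})\ge p_\ell\}$ and bounding $v_i$ below by $a(v_i,O_i,\cdot)$ via the XOS property yields
\[
u_i\;\ge\;\sum_{\ell\in O_i\cap\Lambda_i}\bigl(a(v_i,O_i,\{\ell\})-p_\ell\bigr)^{+}.
\]

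Taking expectations over everything (including $\bm{v}'_{-i}$, on which $u_i$ itself does not depend), and using that $\bigl(a(v_i,O_i,\{\ell\})-p_\ell\bigr)^{+}\indicator{\ell\in O_i}$ is a function of $(v_i,\bm{v}'_{-i})$ only while $\indicator{\ell\in\Lambda_i}$ is a function of $(\bm{v}_{<i},Z)$ only, the expectation of each summand factorises. Writing $\tilde\pi_{i,\ell}:=\expectedsub{\bm{u}}{\bigl(\SW^B_\ell(\bm{u})-p_\ell\bigr)^{+}\indicator{\ell\in X_i^{\all}(\bm{u})}}\ge 0$ — where $\bm{u}$ follows the product distribution of the buyers, which is exactly the law of $(v_i,\bm{v}'_{-i})$, and $a(v_i,O_i,\{\ell\})\indicator{\ell\in O_i}=\SW^B_\ell$ on that profile when $\ell$ is allocated to $i$ — we obtain $\expected{u_i}\ge\sum_{\ell\in L}\tilde\pi_{i,\ell}\,\prob{\ell\in\Lambda_i}$. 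Summing over buyers, and using that $\mathbb{A}$ allocates each item to at most one buyer together with $\SW^B_\ell(\bm{u})=0$ when $\ell$ is unallocated (so that $(\SW^B_\ell(\bm{u})-p_\ell)^{+}=0$ there), gives
\[
\sum_{i\in[n]}\tilde\pi_{i,\ell}\;=\;\expectedsub{\bm{u}}{\bigl(\SW^B_\ell(\bm{u})-p_\ell\bigr)^{+}}\;\ge\;\expectedsub{\bm{u}}{\SW^B_\ell(\bm{u})}-p_\ell\;=\;p_\ell,
\]
the last equality being the definition $p_\ell=\tfrac12\expected{\SW^B_\ell(\bm{v})}$.

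To conclude, I would use that the accepted-but-unsold set $\Lambda_{n+1}(\bm{v},Z)$ is contained in $\Lambda_i$ for every $i$, hence $\prob{\ell\in\Lambda_i}\ge\prob{\ell\in\Lambda_{n+1}(\bm{v},Z)}$, and since $\tilde\pi_{i,\ell}\ge 0$,
\[
\expected{\textstyle\sum_{i\in[n]}u_i}\;\ge\;\sum_{\ell\in L}\Bigl(\textstyle\sum_{i\in[n]}\tilde\pi_{i,\ell}\Bigr)\prob{\ell\in\Lambda_{n+1}(\bm{v},Z)}\;\ge\;\sum_{\ell\in L}p_\ell\,\prob{\ell\in\Lambda_{n+1}(\bm{v},Z)}.
\]
It then remains to recognise the right-hand side of the statement: for $\ell\in L$ we have $p_\ell\ge 2\expected{w_\ell}$, so Markov's inequality gives $F_\ell(p_\ell)=\prob{w_\ell\le p_\ell}\ge 1/2$, and with the offer probability $q_\ell=1/(2F_\ell(p_\ell))$ the mechanism makes $\prob{\ell\in Z}=q_\ell F_\ell(p_\ell)=1/2$ exactly, while items outside $L$ never enter the market. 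Since $\{\ell\in\Lambda_{n+1}(\bm{v},Z)\}\subseteq\{\ell\in Z\}$, this gives $\prob{\ell\in\Lambda_{n+1}(\bm{v},Z)}=\tfrac12\cdot\probsub{\bm{v},Z}{\ell\in\Lambda_{n+1}(\bm{v},Z)\mid \ell\in Z}$, which turns the display into the claimed inequality.

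The crux of the argument — and the reason the fresh sample is essential — is that both $\SW^B_\ell(\bm{v})$ and the index of the buyer holding $\ell$ in $\bm{X}^{\all}(\bm{v})$ are produced by $\mathbb{A}$ from the entire realised profile, and are therefore correlated with the event that $\ell$ survives until that buyer is processed (which is driven by $\bm{v}_{<i}$); carrying out the estimate with the genuine optimal allocation runs into this correlation in the unfavourable direction. Replacing the other buyers' valuations by an independent copy severs the dependence between the target bundle and $\Lambda_i$ while leaving the law of $\SW^B_\ell$ untouched, and truncating the deviation at the prices $p_\ell$ makes the weights $\tilde\pi_{i,\ell}$ non-negative, which is exactly what permits replacing $\Lambda_i$ by the smaller set $\Lambda_{n+1}$ in the final step.
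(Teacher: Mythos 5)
Your proof follows the same strategy as the paper's (a fresh independent resample $\bm{v}'_{-i}$ of the other buyers, deviation to the intersection of the black-box allocation with the available set $\Lambda_i$, factorisation via independence, the inclusion $\Lambda_i \supseteq \Lambda_{n+1}$, and the price identity stemming from $p_\ell = \tfrac12\expected{\SW^B_\ell(\bm{v})}$), and it is correct. The one substantive difference is that you truncate the deviating bundle to $\{\ell \in O_i \cap \Lambda_i : a(v_i,O_i,\{\ell\}) \geq p_\ell\}$, which makes the per-buyer weights $\tilde\pi_{i,\ell} = \expectedsub{\bm{u}}{(\SW^B_\ell(\bm{u})-p_\ell)^+\indicator{\ell\in X_i^\all(\bm{u})}}$ non-negative; the paper instead compares to the full intersection $X_i^\all(v_i,\tilde{\bm{v}}_{-i})\cap\Lambda_i$, yielding coefficients $\expectedsub{\bm{v}}{(\SW^B_\ell(\bm{v})-p_\ell)\indicator{\ell\in X_i^\all(\bm{v})}}$ whose sum over $i$ equals $p_\ell \geq 0$ but which can individually be negative. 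This matters: in the paper's final step each factor $\probsub{\bm{v},Z}{\ell\in\Lambda_i(\bm{v}_{<i},Z)}$ is replaced term-by-term with the smaller $\probsub{\bm{v},Z}{\ell\in\Lambda_{n+1}(\bm{v},Z)}$, a manipulation that is only a valid lower bound when every coefficient is non-negative. Your positive-part truncation makes that step rigorous and costs nothing, since $\expectedsub{\bm{u}}{(\SW^B_\ell(\bm{u})-p_\ell)^+} \geq \expectedsub{\bm{u}}{\SW^B_\ell(\bm{u})} - p_\ell = p_\ell$ still delivers the required bound. In short: same decomposition and key idea, but your version closes a small gap that the paper's write-up glosses over.
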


\begin{proof}
First, note that for each $j \in L$ it holds that $\mathsf{Pr}[j \in Z] = 1/2$. Recall that we defined $p_j := \frac{1}{2} \mathbb{E}_{\bm{v}}[\SW^B_j(\bm{v})]$. Thus, observe that by definition of $p_j$, $\SW_j^B(\bm{v})$, and the law of total probability, it holds for all $j \in L$ that
\begin{equation}
\label{eq:pj_sw}
p_j = \expectedsub{\bm{v}}{\SW^B_j(\bm{v}) - p_j}  = \sum_{i=1}^n \expectedsub{\bm{v}}{(\SW^B_j(\bm{v}) - p_j) \indicator{j \in X_i^\all(\bm{v})}}.
\end{equation}

Fix $i \in [n]$, buyers' valuation profile $\bm{v}$, and set $Z \subseteq L$ of sellers who accepted the mechanism's offer, and now consider the set $\Lambda_i(\bm{v}_{<i}, Z) \subseteq L$ of available items that $i$ can choose from. 
Buyer $i$ selects a bundle that maximizes her utility, i.e., that is in $\mathcal{D}(v_i, \bm{p}, \Lambda_i(\bm{v}_{<i}, Z))$.

Now consider an additional randomly drawn profile of valuation functions $\tilde{\bm{v}}_{-i}$ for all buyers except $i$, that is independent of $\bm{v}$. 
Consider the allocation $X^\all_i(v_i, \tilde{\bm{v}}_{-i})$ be the allocation of buyer $i$ returned by $\mathbb{A}(v_i, \tilde{\bm{v}}_{-i})$. For $i \in [n]$, consider the corresponding additive representative function $a(v_i, X_i^\all(v_i, \tilde{\bm{v}}_{-i}), \cdot)$, such that $a(v_i, X_i^\all(v_i, \tilde{\bm{v}}_{-i}),\{j\}) = \SW^B_j(v_i, \tilde{\bm{v}}_{-i})$. Let 
\begin{equation*}
S_i(v_i, \bm{v}_{-i}, \tilde{\bm{v}}_{-i}, Z) := X_i^\all(v_i, \tilde{\bm{v}}_{-i}) \cap \Lambda_i(\bm{v}_{<i}, Z)
\end{equation*}
be the items in $X_i^\all(v_i, \tilde{\bm{v}}_{-i})$ that buyer $i$ may choose from under valuation profile $\bm{v}$. As $i$ chooses a bundle $B_i(\bm{v}, Z) \in \mathcal{D}(v_i, \bm{p}, \Lambda_i(\bm{v}_{<i}, Z))$ that maximizes her utility, and $S_i(v_i, \bm{v}_{-i}, \tilde{\bm{v}}_{-i}, Z)$ is in $\mathcal{D}_i(v_i, \bm{p}, \Lambda_i(\bm{v}_{<i}, Z))$, it follows that $i$'s utility for $B_i(\bm{v}, Z)$ is at least the utility she would get for choosing $S_i(v_i, \bm{v}_{-i}, \tilde{\bm{v}}_{-i}, Z)$. That is, for all $\bm{v}$ and $Z \subseteq L$
\begin{eqnarray*}
v_i(B_i(\bm{v}, Z)) - \sum_{j \in B_i(\bm{v},Z)} p_j & \geq & \mathbb{E}_{\tilde{\bm{v}}_{-i}}\left[v_i(S_i(v_i, \bm{v}_{-i}, \tilde{\bm{v}}_{-i}, Z)) - \sum_{j \in S_i(v_i, \bm{v}_{-i}, \tilde{\bm{v}}_{-i}, Z)} p_j\right] \\
& \geq &  \expectedsub{\tilde{\bm{v}}_{-i}}{\sum_{j \in S_i(v_i, \bm{v}_{-i}, \tilde{\bm{v}}_{-i}, Z)} (a(v_i, X_i^\all(v_i, \tilde{\bm{v}}_{-i}), \{j\}) - p_j)}\\
& = &  \expectedsub{\tilde{\bm{v}}_{-i}}{\sum_{j \in S_i(v_i, \bm{v}_{-i}, \tilde{\bm{v}}_{-i}, Z)} (\SW^B_j(v_i, \tilde{\bm{v}}_{-i}) - p_j)},
\end{eqnarray*}
The second-to-last inequality follows from the definition of the corresponding additive function $a(v_i, X_i^\all(v_i, \tilde{\bm{v}}_{-i}), \cdot)$. 

Now summing the above expression over all $i \in [n]$ and taking the expectation over $\bm{v}$ and $Z$, we get
\begin{eqnarray*}
\expectedsub{\bm{v}, Z}{\sum_{i=1}^n \left(v_i(B_i(\bm{v}, Z)) - \sum_{j \in B_i(\bm{v},Z)} p_j\right)} & \geq & \expectedsub{\bm{v}, \tilde{\bm{v}}_{-i}, Z}{\sum_{i=1}^n \sum_{j \in S_i(v_i, \bm{v}_{-i}, \tilde{\bm{v}}_{-i}, Z)} (\SW^B_j(v_i, \tilde{\bm{v}}_{-i}) - p_j)}\\
	& = & \expnobrack_{\bm{v}, \tilde{\bm{v}}_{-i}, Z}\Bigg[\sum_{i=1}^n \sum_{j \in L} (\SW^B_j(v_i, \tilde{\bm{v}}_{-i}) - p_j) \\
 	&& \ \cdot \indicator{j \in X_i^\all(v_i, \tilde{\bm{v}}_{-i})} \indicator{j \in \Lambda_i(\bm{v}_{<i}, Z)}\Bigg].
\end{eqnarray*}
Note that we exploited the independence of the events $(j \in X_i^\all(v_i, \tilde{\bm{v}}_{-i}))$ and $(j \in \Lambda_i(\bm{v}_{<i}, \bm{z}))$. 
Thus, switching the order of the sums and using linearity of expectation, we get that
\begin{align*}
& \expectedsub{\bm{v}, Z}{\sum_{i=1}^n \left(v_i(B_i(\bm{v}, Z)) - \sum_{j \in B_i(\bm{v},Z)} p_j\right)} \\
& \qquad \geq \sum_{j \in L} \sum_{i=1}^n \probsub{\bm{v}, Z}{j \in \Lambda_i(\bm{v}_{<i}, Z)}\expectedsub{v_i, \tilde{\bm{v}}_{-i}}{(\SW^B_j(v_i, \tilde{\bm{v}}_{-i}) - p_j) \indicator{j \in X_i^\all(v_i, \tilde{\bm{v}}_{-i})}} \\
& \qquad \geq \sum_{j \in L} \probsub{\bm{v}, Z}{j \in \Lambda_{n+1}(\bm{v}, Z)} \sum_{i=1}^n \expectedsub{\bm{v}}{(\SW^B_j(\bm{v}) - p_j) \indicator{j \in X_i^\all(\bm{v})}} \\
& \qquad = \sum_{j \in L} \probsub{\bm{v}, Z}{j \in \Lambda_{n+1}(\bm{v}, Z)} p_j \\
& \qquad = \sum_{j \in L} \probsub{\bm{v}, Z}{j \in \Lambda_{n+1}(\bm{v}, Z)\ |\ j \in Z}\prob{j \in Z} p_j \\
& \qquad = \frac{1}{2}\sum_{j \in L} \probsub{\bm{v}, Z}{j \in \Lambda_{n+1}(\bm{v}, Z)\ |\ j \in Z}p_j.
\end{align*}
For the last inequality, 
we used the fact that for any $i \in [n]$ it holds that $\probsub{\bm{v}}{j \in \Lambda_i(\bm{v}_{<i}, Z)} \geq \probsub{\bm{v}}{j \in \Lambda_{n+1}(\bm{v}, Z)}$. The first equality follows from (\ref{eq:pj_sw}). \qed
\end{proof}

\begin{proposition}
\label{prop:revenue_B}
The expected sum of the payments charged by $\mathbb{M}_\text{1-supply}$ to the buyers is equal to
\begin{equation*}
\expected{\sum_{i \in [n]} \rho_i^B} = \frac{1}{2} \sum_{j \in L} p_j \probsub{\bm{v}, Z}{j \notin \Lambda_{n+1}(\bm{v}, Z)\ |\ j \in Z}
\end{equation*}
\end{proposition}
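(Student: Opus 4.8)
The plan is to start from the explicit payment rule of $\mathbb{M}_{\text{1-supply}}$ and rewrite the total buyer payment as a sum over items rather than over buyers. By construction $\rho_i^B = \sum_{j \in X_i} p_j$, and since the sets $X_1, \ldots, X_n$ are pairwise disjoint and each is contained in $\Lambda_1 \subseteq L$, summing over $i$ gives, pointwise in the realized valuations and coin flips,
$\sum_{i \in [n]} \rho_i^B = \sum_{j \in L} p_j \indicator{j \text{ ends up allocated to some buyer}}$.
In particular items in $\bar{L}$ contribute nothing: they are never offered to buyers, their prices are not even defined, and no $X_i$ can contain them, so the sum is genuinely over $L$.

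Next I would identify exactly when an item $j \in L$ ends up with a buyer. This happens precisely when (i) the mechanism gave seller $j$ the opportunity to sell at $p_j$ and she accepted, which is by definition the event $j \in Z$, and (ii) during the buyer phase some buyer included $j$ in her chosen bundle $B_i(\bm{v},Z)$. Making the informal description of $\Lambda_{n+1}$ precise, $\Lambda_{n+1}(\bm{v},Z) = Z \setminus \bigcup_{i \in [n]} B_i(\bm{v},Z)$, so (ii) given (i) is literally the event $j \notin \Lambda_{n+1}(\bm{v},Z)$. Hence $\sum_{i \in [n]} \rho_i^B = \sum_{j \in L} p_j \indicator{j \in Z}\indicator{j \notin \Lambda_{n+1}(\bm{v},Z)}$.

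Then I would take expectations over $\bm{v},\bm{w}$ and the mechanism's randomness, apply linearity of expectation, and for each fixed $j \in L$ condition on whether $j \in Z$: since a term with $j \notin Z$ vanishes (the item stays with its seller), we get $\expected{\sum_{i \in [n]} \rho_i^B} = \sum_{j \in L} p_j\, \prob{j \in Z}\, \probsub{\bm{v},Z}{j \notin \Lambda_{n+1}(\bm{v},Z) \mid j \in Z}$. The final ingredient is that $\prob{j \in Z} = 1/2$ for every $j \in L$, which is exactly the fact established in the proof of \refLemma{lemma:ALG^S}: seller $j$ is given the chance to sell with probability $q_j = 1/(2 F_j(p_j))$ and then accepts with probability $F_j(p_j) = \prob{w_j \le p_j}$, so the two combine to probability $1/2$. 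Substituting gives the claimed identity. There is essentially no hard step here; the only care needed is the bookkeeping of which events are involved and the equivalence "$j \in Z$ and $j \notin \Lambda_{n+1}$" $\Leftrightarrow$ "$j$ was traded", which is the crux of the argument.
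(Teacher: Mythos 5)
Your proof is correct and follows essentially the same route as the paper's: rewrite total buyer revenue item-by-item, recognize that item $j$ is sold exactly when $j \in Z$ and $j \notin \Lambda_{n+1}(\bm{v},Z)$, condition on $j \in Z$, and use $\prob{j \in Z} = 1/2$. The only difference is that you spell out the first step (that total payments equal $\sum_{j \in L} p_j\,\prob{j \notin \Lambda_{n+1} \wedge j \in Z}$), which the paper states without justification.
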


\begin{proof}
The revenue extracted by the mechanism, meaning the sum of the payments charged to the buyers, is equal to
\begin{eqnarray*}
\sum_{j \in L} p_j \probsub{\bm{v}, Z}{j \notin \Lambda_{n+1}(\bm{v}, Z) \wedge j \in Z} & = & \sum_{j \in L} p_j \probsub{\bm{v}, Z}{j \notin \Lambda_{n+1}(\bm{v}, Z)\ |\ j \in Z} \prob{j \in Z}\\
	&=& \frac{1}{2} \sum_{j \in L} p_j \probsub{\bm{v}, Z}{j \notin \Lambda_{n+1}(\bm{v}, Z)\ |\ j \in Z}.
\end{eqnarray*}
\qed
\end{proof}

We now prove \refLemma{lemma:ALG^B} using the above two propositions. Observe that the buyers' contribution to the social welfare $\ALG^B$ extracted by $\mathbb{M}_\text{1-supply}$ is equal to the sum of all the buyers' utilities and all the buyers' payments.

\begin{proof}[of \refLemma{lemma:ALG^B}]
As just observed above, from \refProposition{prop:utility_B} and \refProposition{prop:revenue_B}, we have that
\begin{eqnarray*}
\ALG^B &=& \mathbb{E}\left[\sum_{i \in [n]} u_i(\mathbb{M}_{\text{1-supply}}(\bm{v},\bm{w}))\right] + \sum_{j \in L} p_j \probsub{\bm{v}, Z}{j \notin \Lambda_{n+1}(\bm{v}, Z) \wedge j \in Z} \\
	&\geq& \frac{1}{2}\sum_{j \in L} \probsub{\bm{v}, Z}{j \in \Lambda_{n+1}(\bm{v}, Z)\ |\ j \in Z}p_j + \frac{1}{2} \sum_{j \in L} p_j \probsub{\bm{v}, Z}{j \notin \Lambda_{n+1}(\bm{v}, Z)\ |\ j \in Z}\\
	&=& \frac{1}{2} \sum_{j \in L} p_j = \frac{1}{4} \sum_{j \in L} \expected{\SW^B_j(\bm{v})}.
\end{eqnarray*}

By definition of $\bar{L}$, for each $j \in \bar{L}$ it holds that $4 \mathbb{E}[w_j] > \mathbb{E}[\SW^B_j(\bm{v})]$. Every item in $\bar{L}$ stays unsold so, 
\begin{equation*}
\ALG^S \geq \sum_{j \in \bar{L}} \expected{w_j} > \frac{1}{4} \sum_{j \in \bar{L}} \expected{\SW^B_j(\bm{v})}.
\end{equation*}
Therefore,
\begin{equation*}
\ALG^B + \ALG^S \geq \frac{1}{4} \sum_{j=1}^k \expected{\SW^B_j(\bm{v})}.
\end{equation*}

Now recall that $\mathbb{E}[\SW^B_j(\bm{v})]$ was defined by the allocation $\bm{X}^\all(\bm{v})$, being the one returned by Algorithm $\mathbb{A}$. So, 
\begin{equation*}
\frac{1}{4} \sum_{j=1}^k \expected{\SW^B_j(\bm{v})} = \frac{1}{4} \sum_{i=1}^n \expectedsub{\bm{v}}{v_i(X_i^\all(\bm{v}))} \geq \frac{1}{4\alpha} \OPT^B.
\end{equation*}
\qed
\end{proof}

\section{Proof of Theorem \ref{thm:bicmech} and Corollary \ref{cor:additive}}

\begin{proposition}
\label{prop:cover_sellers}
\begin{equation*}
\sum_{\ell \in L} \expected{\SW^B_\ell(\bm{v})} + 4\sum_{\ell \in \bar{L}} \expected{\SW^S_\ell(\bm{w})} > \sum_{i=1}^n \expected{v_i(X_i^\all(\bm{v}))}.
\end{equation*}
\end{proposition}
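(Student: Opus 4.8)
The plan is to rewrite both sides as sums of per-item expected buyer-contributions and then apply the definition of low-welfare items term by term. This is really just bookkeeping once the right decomposition is in place.

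First I would observe that, for any fixed buyers' valuation profile $\bm{v}$, the allocation $\bm{X}^\all(\bm{v})$ produced by $\mathbb{A}$ decomposes the buyers' welfare over the items. For each buyer $i$, the additive representative function $a(v_i, X_i^\all(\bm{v}), \cdot)$ is additive and satisfies $v_i(X_i^\all(\bm{v})) = a(v_i, X_i^\all(\bm{v}), X_i^\all(\bm{v})) = \sum_{\ell \in X_i^\all(\bm{v})} a(v_i, X_i^\all(\bm{v}), \{\ell\}) = \sum_{\ell \in X_i^\all(\bm{v})} \SW^B_\ell(\bm{v})$, the last equality being the definition of $\SW^B_\ell$. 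Since $\SW^B_\ell(\bm{v}) = 0$ for every item not allocated to any buyer, summing over $i \in [n]$ gives $\sum_{i=1}^n v_i(X_i^\all(\bm{v})) = \sum_{\ell \in [k]} \SW^B_\ell(\bm{v})$ for every $\bm{v}$, and taking expectations yields $\sum_{i=1}^n \expected{v_i(X_i^\all(\bm{v}))} = \sum_{\ell \in [k]} \expected{\SW^B_\ell(\bm{v})}$.

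Next I would split the sum over $[k] = L \cup \bar{L}$ (a disjoint union, since $\bigcup_j I_j = [k]$ and each $I_j$ splits into $L_j$ and $\bar{L}_j$). The high-welfare part $\sum_{\ell \in L} \expected{\SW^B_\ell(\bm{v})}$ already appears verbatim on the left-hand side of the target inequality. For the low-welfare part, the definition of $\bar{L}$ says precisely that each $\ell \in \bar{L}$ satisfies $\expected{\SW^B_\ell(\bm{v})} < 4\,\expected{\SW^S_\ell(\bm{w})}$; summing over $\ell \in \bar{L}$ gives $\sum_{\ell \in \bar{L}} \expected{\SW^B_\ell(\bm{v})} < 4 \sum_{\ell \in \bar{L}} \expected{\SW^S_\ell(\bm{w})}$ (strictly, as long as $\bar{L} \neq \varnothing$; if $\bar{L} = \varnothing$ both sides vanish and the final bound degrades to an equality, which is harmless downstream). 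Adding the two estimates gives $\sum_{i=1}^n \expected{v_i(X_i^\all(\bm{v}))} = \sum_{\ell \in L}\expected{\SW^B_\ell(\bm{v})} + \sum_{\ell \in \bar{L}}\expected{\SW^B_\ell(\bm{v})} < \sum_{\ell \in L}\expected{\SW^B_\ell(\bm{v})} + 4\sum_{\ell \in \bar{L}}\expected{\SW^S_\ell(\bm{w})}$, as claimed.

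The only genuinely load-bearing step is the first one: the per-item decomposition $v_i(X_i^\all(\bm{v})) = \sum_{\ell \in X_i^\all(\bm{v})} \SW^B_\ell(\bm{v})$, which is exactly where the XOS assumption on the buyers enters and is what makes "item $\ell$'s contribution" add up to the whole. After that, everything is the definition of $\bar{L}$ applied termwise plus linearity of expectation, so I anticipate no real obstacle — only the minor care of checking that unallocated items contribute zero on both sides and noting the trivial edge case $\bar{L} = \varnothing$.
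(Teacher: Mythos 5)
Your proof is correct and follows essentially the same route as the paper's: decompose $\sum_i v_i(X_i^\all(\bm{v}))$ item-by-item via the additive representative functions so that it equals $\sum_{\ell\in[k]}\SW^B_\ell(\bm{v})$, split the sum over $L$ and $\bar L$, and bound the $\bar L$ part by the defining inequality of low-welfare items. Your added remark about the $\bar L = \varnothing$ edge case (where strict inequality degrades to equality) is a small point of care that the paper glosses over but that does not affect the downstream bounds.
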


\begin{proof}
Let $a(v_i,X_i^\all(\bm{v}),\cdot)$ be the representative additive function of $v_i$ for the bundle $X_i^\all(\bm{v})$. Then,
\begin{eqnarray*}
\sum_{i=1}^n \expected{v_i(X_i^\all(\bm{v}))} &=& \sum_{i=1}^n \expected{\sum_{\ell \in X^\all_i(\bm{v})} a(v_i,X^\all_i(\bm{v}),\{\ell\})} \\
	& = & \sum_{i=1}^n \sum_{\ell=1}^k \expected{a(v_i, X^\all_i(\bm{v}),\{\ell\}) \indicator{\ell \in X^\all_i(\bm{v})}} \\
	& = & \sum_{\ell=1}^k \expected{\SW^B_\ell(\bm{v})} \\
	& = & \sum_{\ell \in L} \expected{\SW^B_\ell(\bm{v})} + \sum_{\ell \in \bar{L}} \expected{\SW^B_\ell(\bm{v})} \\
	& < & \sum_{\ell \in L} \expected{\SW^B_\ell(\bm{v})} + 4\sum_{\ell \in \bar{L}} \expected{\SW^S_\ell(\bm{w})}. 
\end{eqnarray*}
The last inequality follows because 
\begin{equation*}
4 \sum_{\ell \in \bar{L}} \expected{\SW^S_\ell(\bm{w})} > \sum_{\ell \in \bar{L}} \expected{\SW^B_\ell(\bm{v})},
\end{equation*}
by definition of $\bar{L}$.
\qed
\end{proof}

\begin{proposition}
\label{prop:utility_z}
Let $\bm{v}$ be a buyers' valuation function profile and let $(X'_1, \ldots, X'_n)$ be any allocation of items to the buyers, let $X'_{i, j} := X'_i \cap L_j$ be the set of items in $L$ that are allocated to buyer $i \in [n]$ and belonged to seller $j \in [m]$. For each seller $j \in [m]$, let $z_j \in \{0, 1\}$ be a Bernoulli random variable such that $\expected{z_j} = 1/2$. Let $X''_i(\bm{z}) := \bigcup_{j \in [m] : z_j = 1} X'_{i, j}$ for all $i \in [n]$. Then, for all $i \in [n]$ it holds that
\begin{equation*}
\expectedsub{\bm{z}}{v_i(X''_i(\bm{z}))} \geq \frac{1}{2} v_i(X'_i).
\end{equation*}
Moreover, given any vector $\bm{p} \in \mathbb{R}^k$ of item prices, the inequality also holds on the utilities of the buyers:
\begin{equation*}
\expectedsub{\bm{z}}{v_i(X''_i(\bm{z})) - \sum_{\ell \in X''_i(\bm{z})} p_\ell} \geq \frac{1}{2} \left( v_i(X_i') - \sum_{\ell \in X'_i} p_\ell \right).
\end{equation*}
\end{proposition}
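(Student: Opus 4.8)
The plan is to exploit the XOS structure of $v_i$ through its additive representative function for the bundle $X'_i$: this turns the valuation into a sum of per-item contributions, and the randomness in $\bm{z}$ then retains each of those items with probability exactly $1/2$, after which linearity of expectation finishes the argument. Fix a buyer $i$ and let $a := a(v_i, X'_i, \cdot)$ be the additive representative function of $v_i$ for $X'_i$, so that $v_i(S) \ge a(S)$ for every $S \subseteq [k]$ while $v_i(X'_i) = a(X'_i)$. Since the initial endowments $I_1, \dots, I_m$ are disjoint, every item $\ell \in L$ belongs to a unique seller, which we denote $j(\ell)$; and by the definitions $X'_{i,j} = X'_i \cap L_j$ and $X''_i(\bm{z}) = \bigcup_{j : z_j = 1} X'_{i,j}$, an item $\ell \in X'_i \cap L$ lies in $X''_i(\bm{z})$ if and only if $z_{j(\ell)} = 1$, an event of probability $1/2$.

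Now for every realization of $\bm{z}$ we have $v_i(X''_i(\bm{z})) \ge a(X''_i(\bm{z})) = \sum_{\ell \in X''_i(\bm{z})} a(\{\ell\})$, using $a \le v_i$ pointwise and additivity of $a$; taking the expectation over $\bm{z}$ and applying linearity of expectation (independence of the $z_j$ is not needed) gives
\[
\expectedsub{\bm{z}}{v_i(X''_i(\bm{z}))} \;\ge\; \sum_{\ell \in X'_i \cap L} a(\{\ell\}) \,\prob{z_{j(\ell)} = 1} \;=\; \frac{1}{2} \sum_{\ell \in X'_i \cap L} a(\{\ell\}) \;=\; \frac{1}{2}\, a(X'_i) \;=\; \frac{1}{2}\, v_i(X'_i),
\]
which is the first claimed inequality (the middle equality uses that $X'_i \subseteq L$ in all our applications — the mechanism only ever trades items of $L$ — so that items of $X'_i$ outside $L$, which never enter $X''_i(\bm{z})$, can be ignored). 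For the utility version I would subtract the prices item by item before taking expectations: $v_i(X''_i(\bm{z})) - \sum_{\ell \in X''_i(\bm{z})} p_\ell \ge \sum_{\ell \in X''_i(\bm{z})} (a(\{\ell\}) - p_\ell)$, and then taking the expectation over $\bm{z}$ of the right-hand side yields $\frac{1}{2} \sum_{\ell \in X'_i \cap L} (a(\{\ell\}) - p_\ell) = \frac{1}{2}\big(v_i(X'_i) - \sum_{\ell \in X'_i} p_\ell\big)$, which is exactly the second inequality.

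There is no real difficulty here; the one point worth flagging is that the argument genuinely needs XOS rather than a weaker property such as subadditivity of $v_i$, since we are lower-bounding $v_i$ on the random sub-bundle $X''_i(\bm{z})$ and this is precisely what the pointwise additive lower bound from the additive representative provides, letting us account item by item. The only bookkeeping subtlety is the restriction to items of $L$ noted above, which is harmless in context.
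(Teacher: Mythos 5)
Your proof is correct, and it takes a genuinely different route from the paper's. The paper proves the first inequality using only subadditivity together with a symmetry argument: by subadditivity, $v_i(X'_i) \le v_i(X''_i(\bm{z})) + v_i\bigl(\bigcup_{j : z_j = 0} X'_{i,j}\bigr)$ pointwise in $\bm{z}$, and then the distribution of $\bigcup_{j : z_j=1} X'_{i,j}$ is the same as that of $\bigcup_{j : z_j=0} X'_{i,j}$, so taking expectations gives $v_i(X'_i) \le 2\,\expectedsub{\bm{z}}{v_i(X''_i(\bm{z}))}$. The second inequality then follows by computing $\expectedsub{\bm{z}}{\sum_{\ell \in X''_i(\bm{z})} p_\ell} = \tfrac{1}{2}\sum_{\ell \in X'_i} p_\ell$ separately and combining with the first. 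You instead pass through the additive representative function $a(v_i, X'_i, \cdot)$ and apply linearity of expectation item by item. Both work; yours buys you that the prices can be folded in from the start, and that only the marginals $\expected{z_j}=1/2$ are ever used, whereas the paper's symmetry step tacitly uses that the joint law of $\bm{z}$ is invariant under flipping all bits (true for independent Bernoulli$(1/2)$ variables, which is what the mechanism produces, but a touch more than the proposition literally hypothesizes). Your explicit flag that $X'_i \subseteq L$ is needed is a worthwhile observation the paper leaves implicit.

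One remark in your write-up is wrong and should be dropped: you claim the argument ``genuinely needs XOS rather than a weaker property such as subadditivity.'' The paper's own proof shows subadditivity plus the bit-flip symmetry is sufficient. XOS simply gives you an alternative, per-item route to the same conclusion.
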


\begin{proof}
For the first claim, first note that due to subadditivity
\begin{equation*}
\expectedsub{\bm{z}}{v_i(X''_i(\bm{z}))} \geq v_i(X'_i) - \expectedsub{\bm{z}}{v_i\left(\bigcup_{j \in [m] : z_j = 0} X'_{i, j}\right)}.
\end{equation*}
Observe that 
\begin{equation*}
\expectedsub{\bm{z}}{v_i(X''_i(\bm{z}))} = \expectedsub{\bm{z}}{v_i\left(\bigcup_{j \in [m] : z_j = 1} X'_{i, j}\right)} = \expectedsub{\bm{z}}{v_i\left(\bigcup_{j \in [m] : z_j = 0} X'_{i, j}\right)},
\end{equation*}
because the events $z_j = 0$ and $z_j = 1$ are equiprobable for all $j \in [m]$. Combining this with the above inequality establishes the first claim. 


The second claim follows from the following derivation.
\begin{eqnarray*}
\expectedsub{\bm{z}}{v_i(X''_i(\bm{z})) - \sum_{\ell \in X''_i(\bm{z})} p_\ell} & = & \expectedsub{\bm{z}}{v_i(X''_i(\bm{z}))} - \expectedsub{\bm{z}}{\sum_{j \in [m] : z_j = 1} \sum_{\ell \in X'_{i, j}} p_\ell} \\
	&=& \expectedsub{\bm{z}}{v_i(X''_i(\bm{z}))} - \expectedsub{\bm{z}}{\sum_{j = 1}^m \left(\sum_{\ell \in X'_{i, j}} p_\ell \right) \indicator{z_j = 1}}\\
	&=& \expectedsub{\bm{z}}{v_i(X''_i(\bm{z}))} - \sum_{j = 1}^m \left(\sum_{\ell \in X'_{i, j}} p_\ell \right)\expectedsub{\bm{z}}{\indicator{z_j = 1}}\\
	&=& \expectedsub{\bm{z}}{v_i(X''_i(\bm{z}))} - \sum_{\ell \in X'_i} p_\ell \frac{1}{2}\\
	&\geq& \frac{1}{2} v_i(X'_i) - \frac{1}{2} \sum_{\ell \in X'_i} p_\ell
\end{eqnarray*}
\qed
\end{proof}

\begin{proposition}
\label{prop:sellers_1/2}
Let $j \in [m]$ be a seller. The probability that the mechanism $\mathbb{M}_{\text{add}}$ makes in Step \ref{offer} an offer to $j$ that she accepts, is $1/2$.
\end{proposition}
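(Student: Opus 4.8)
The plan is to unwind the definition of $q_j$ in Step~\ref{offer} of $\mathbb{M}_{\text{add}}$ and combine it with a Markov-inequality bound, exactly in the spirit of \refLemma{lemma:ALG^S}. Fix a seller $j \in [m]$. After the buyer loop (Step~\ref{buyerloop}) has completed, the set $S_j := B \cap L_j$ of demanded items owned by $j$ is determined, and the mechanism sets $q_j = 1/(2\,\prob{w_j(S_j) \leq p(S_j)})$, then with probability $q_j$ offers the payment $p(S_j)$ in exchange for $S_j$. The event ``the mechanism makes an offer in Step~\ref{offer} that $j$ accepts'' is the intersection of the independent events ``an offer is made'' (probability $q_j$) and ``$j$ accepts'', i.e.\ $w_j(S_j) \leq p(S_j)$ (probability $\prob{w_j(S_j) \leq p(S_j)}$, since accepting a payment that weakly exceeds one's valuation is the utility-maximising, dominant-strategy response for an additive seller facing a take-it-or-leave-it offer on a fixed bundle). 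Multiplying gives $q_j \cdot \prob{w_j(S_j) \leq p(S_j)} = 1/2$, which is the claim.

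There is one thing to check so that $q_j$ is a well-defined probability, namely that $\prob{w_j(S_j) \leq p(S_j)} \geq 1/2$ (so $q_j \leq 1$ and $q_j$ is a legal probability, and also so the acceptance probability is genuinely at least $1/2$ before the artificial thinning). Here I would invoke the definition of the high-welfare set: for every $\ell \in L_j$ we have $\expected{\SW^B_\ell(\bm{v})} \geq 4\,\expected{\SW^S_\ell(\bm{w})} = 4\,\expected{w_j(\{\ell\})}$, and $p_\ell = \tfrac12 \expected{\SW^B_\ell(\bm{v})}$, so $p_\ell \geq 2\,\expected{w_j(\{\ell\})}$. Summing over $\ell \in S_j \subseteq L_j$ and using additivity of $w_j$, $p(S_j) = \sum_{\ell \in S_j} p_\ell \geq 2\sum_{\ell \in S_j} \expected{w_j(\{\ell\})} = 2\,\expected{w_j(S_j)}$. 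Markov's inequality then gives $\prob{w_j(S_j) > p(S_j)} \leq \prob{w_j(S_j) > 2\,\expected{w_j(S_j)}} \leq 1/2$, hence $\prob{w_j(S_j) \leq p(S_j)} \geq 1/2$ and $q_j \leq 1$.

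A small subtlety worth a sentence: $S_j$ is itself a random set (it depends on $\bm{v}$ and on the buyers' strategies), so strictly speaking the computation above is conditional on the realisation of $S_j$; but the bound $p(S_j) \geq 2\,\expected{w_j(S_j \mid S_j)}$ holds for every fixed realisation of $S_j$, and $w_j$ is independent of $\bm{v}$ and of the mechanism's internal coins, so the conditional Markov bound applies realisation-by-realisation and the identity $q_j \cdot \prob{w_j(S_j) \leq p(S_j)} = 1/2$ holds conditionally on $S_j$, hence also unconditionally. The main (and only mild) obstacle is thus bookkeeping the conditioning correctly; the inequality itself is routine once the high-welfare definition and additivity of $w_j$ are plugged in.
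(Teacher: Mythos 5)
Your proof is correct and follows essentially the same route as the paper: show $p(S_j) \geq 2\,\expected{w_j(S_j)}$ via the definition of $L_j$ and additivity, apply Markov's inequality to get $\prob{w_j(S_j) \leq p(S_j)} \geq 1/2$ (hence $q_j$ is a valid probability), and then multiply by the thinning probability $q_j$ to land exactly on $1/2$. The remark about conditioning on the realisation of the random set $S_j$ is a useful bit of rigour the paper leaves implicit, but it does not change the argument.
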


\begin{proof}
For every $j \in [m]$ and $\ell \in L_j$, it holds by definition of $p_\ell$ and $L_j$ that $p_\ell \geq 2\expected{w_j(\{\ell\})}$. From Markov's inequality it follows that
\begin{equation*}
\prob{w_j(S_j) > \sum_{\ell \in S_j} p_\ell} \leq \prob{w_j(S_j) > 2 \expected{w_j(S_j)}} < \frac{1}{2}.
\end{equation*}
Thus, $\prob{w_j(S_j) \leq \sum_{\ell \in S_j} p_\ell} \geq 1/2$, meaning that $j$ accepts the offer with probability at least $1/2$, in case she is made an offer. The mechanism makes the offer with probability $q_j$, and 
\begin{equation*}
q_j \prob{w_j(S_j) \leq \sum_{\ell \in S_j} p_\ell} = 1/2.
\end{equation*}
\qed
\end{proof}

For $i \in [n+1]$ and valuation profile $\bm{v}$, let $\bm{v}_{<i} = (v_1, \ldots, v_{i-1})$ and let $\Lambda_i(\bm{v}_{<i})$ be the set $\Lambda_i$ defined in Step \ref{lambda}, when $\mathbb{M}_{\text{add}}$ is run when the buyers in $[i-1]$ have valuation profile $\bm{v}_{<i}$. Given this definition, the set $\Lambda_{n+1}(\bm{v})$ are the items not requested by any buyer at the end of Step \ref{buyerloop}, when the buyers' valuation profile is $\bm{v}$.
\begin{lemma}
The expected total utility of the buyers is at least
\begin{equation*}
\frac{1}{2} \sum_{\ell \in L} \mathsf{Pr}_{\bm{v}}[\ell \in \Lambda_{n+1}(\bm{v})] p_\ell.
\end{equation*}
\end{lemma}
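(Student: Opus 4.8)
The plan is to follow the structure of the proof of \refProposition{prop:utility_B}, adapted to the fact that in $\mathbb{M}_{\text{add}}$ the buyers are processed before the sellers. Fix a buyer $i \in [n]$. When $i$ is reached, the set $\Lambda_i(\bm v_{<i}) \subseteq L$ of still-available items is already determined by the earlier buyers' requests, and, by \refProposition{prop:sellers_1/2}, whatever bundle $B_i \subseteq \Lambda_i(\bm v_{<i})$ buyer $i$ requests, her realised allocation will be $X_i = \bigcup_{j : z_j = 1}(B_i \cap L_j)$, where the $z_j \in \{0,1\}$ are independent with $\expected{z_j} = 1/2$. Since $i$ requests a bundle maximising her expected utility over this randomness, her expected utility is at least the expected utility she would obtain by requesting, for any independent fresh draw $\tilde{\bm v}_{-i}$ of the other buyers' valuations, the benchmark bundle $S_i := X_i^{\all}(v_i, \tilde{\bm v}_{-i}) \cap \Lambda_i(\bm v_{<i})$. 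Applying the second inequality of \refProposition{prop:utility_z} with $X_i' = S_i$, and then the defining property $v_i(S_i) \ge \sum_{\ell \in S_i} a(v_i, X_i^{\all}(v_i,\tilde{\bm v}_{-i}), \{\ell\}) = \sum_{\ell \in S_i} \SW^B_\ell(v_i, \tilde{\bm v}_{-i})$ of the additive representative function (valid since $S_i \subseteq X_i^{\all}(v_i, \tilde{\bm v}_{-i})$), I get that $i$'s expected utility is at least $\tfrac12 \expectedsub{\tilde{\bm v}_{-i}}{\sum_{\ell \in S_i}\bigl(\SW^B_\ell(v_i, \tilde{\bm v}_{-i}) - p_\ell\bigr)}$.

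Next I would sum this bound over $i \in [n]$ and take the expectation over $\bm v$ and the auxiliary draws. Writing $\ell \in S_i$ as $\indicator{\ell \in X_i^{\all}(v_i, \tilde{\bm v}_{-i})}\indicator{\ell \in \Lambda_i(\bm v_{<i})}$, interchanging the sums over $i$ and $\ell$, and using that $X_i^{\all}(v_i, \tilde{\bm v}_{-i})$ and $\SW^B_\ell(v_i,\tilde{\bm v}_{-i})$ depend only on $(v_i, \tilde{\bm v}_{-i})$ while $\Lambda_i(\bm v_{<i})$ depends only on $\bm v_{<i}$ (and these are independent), the expectation factors and one obtains
\[
\expected{\textstyle\sum_{i \in [n]} u_i} \;\ge\; \tfrac12 \sum_{\ell \in L} \sum_{i=1}^{n} \probsub{\bm v}{\ell \in \Lambda_i(\bm v_{<i})} \cdot \expectedsub{v_i, \tilde{\bm v}_{-i}}{\bigl(\SW^B_\ell(v_i,\tilde{\bm v}_{-i}) - p_\ell\bigr)\indicator{\ell \in X_i^{\all}(v_i, \tilde{\bm v}_{-i})}}.
\]
Since the available sets only shrink, $\indicator{\ell \in \Lambda_{n+1}(\bm v)} \le \indicator{\ell \in \Lambda_i(\bm v_{<i})}$ pointwise, so $\probsub{\bm v}{\ell \in \Lambda_i(\bm v_{<i})} \ge \probsub{\bm v}{\ell \in \Lambda_{n+1}(\bm v)}$; after pulling this factor out, the inner sum telescopes via $\sum_{i=1}^n \expectedsub{\bm v}{(\SW^B_\ell(\bm v) - p_\ell)\indicator{\ell \in X_i^{\all}(\bm v)}} = \expectedsub{\bm v}{\SW^B_\ell(\bm v)} - p_\ell = p_\ell$ (the analogue of \refEquation{eq:pj_sw}, using that $\mathbb{A}$ allocates every item), giving $\expected{\sum_i u_i} \ge \tfrac12 \sum_{\ell \in L} \probsub{\bm v}{\ell \in \Lambda_{n+1}(\bm v)} p_\ell$, as required.

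The delicate point — and the step I expect to require the most care — is the replacement of $\probsub{\bm v}{\ell \in \Lambda_i(\bm v_{<i})}$ by the smaller $\probsub{\bm v}{\ell \in \Lambda_{n+1}(\bm v)}$ while keeping the inequality in the right direction: term-by-term this is only immediately legitimate when each quantity $\expectedsub{v_i,\tilde{\bm v}_{-i}}{(\SW^B_\ell - p_\ell)\indicator{\ell \in X_i^{\all}}}$ is nonnegative, which need not hold as stated. The clean fix is to take as the benchmark not $S_i = X_i^{\all}(v_i,\tilde{\bm v}_{-i}) \cap \Lambda_i(\bm v_{<i})$ but its ``profitable'' sub-bundle $\{\ell \in S_i : a(v_i, X_i^{\all}(v_i,\tilde{\bm v}_{-i}), \{\ell\}) \ge p_\ell\}$ — still a subset of $\Lambda_i(\bm v_{<i})$, hence still a valid candidate for buyer $i$ — after which every per-item term is a nonnegative multiple of $(\SW^B_\ell - p_\ell)^+$, the swap is valid, and one finishes with $\sum_{i=1}^n \expectedsub{\bm v}{(\SW^B_\ell(\bm v) - p_\ell)^+ \indicator{\ell \in X_i^{\all}(\bm v)}} \ge \expectedsub{\bm v}{\SW^B_\ell(\bm v)} - p_\ell = p_\ell$. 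A secondary thing worth flagging is that, unlike in \refProposition{prop:utility_B}, here the factor $1/2$ is produced by \refProposition{prop:utility_z} (splitting a requested bundle over the sellers that trade with probability $1/2$) rather than by conditioning on the event that an item's seller is in the market, which is why the statement of the lemma carries no conditioning.
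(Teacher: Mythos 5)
Your proof follows the paper's argument essentially step for step: the same benchmark bundle (your $S_i$ is the paper's $X^L_i(\bm{v},\tilde{\bm{v}}_{-i})$ — identical since $\Lambda_i(\bm{v}_{<i}) \subseteq L$), the same invocation of \refProposition{prop:utility_z} to collect the factor $1/2$ from the per-seller Bernoulli acceptance, the same use of the additive representative function, the same interchange of sums and independence factoring, the same monotone shrinkage of the $\Lambda_i$, and the same pricing identity $p_\ell = \expectedsub{\bm{v}}{\SW^B_\ell(\bm{v})} - p_\ell$ at the end. Your secondary remark — that the $1/2$ comes from splitting the requested bundle across sellers via \refProposition{prop:utility_z} rather than from conditioning on a seller being in the market — correctly identifies the structural difference from \refProposition{prop:utility_B}.

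The ``delicate point'' you flag is real, and the paper's own proof does not address it: the paper silently replaces each $\probsub{\bm{v}}{\ell \in \Lambda_i(\bm{v}_{<i})}$ by the smaller $\probsub{\bm{v}}{\ell \in \Lambda_{n+1}(\bm{v})}$ inside the sum over $i$, a step that requires the per-$i$ coefficient $\expectedsub{v_i,\tilde{\bm{v}}_{-i}}{(\SW^B_\ell(v_i,\tilde{\bm{v}}_{-i}) - p_\ell)\indicator{\ell \in X_i^\all(v_i,\tilde{\bm{v}}_{-i})}}$ to be nonnegative, and there is no reason for that to hold term by term (only the sum over $i$ equals $p_\ell \geq 0$). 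Your fix — benchmarking against the ``profitable'' sub-bundle $\{\ell \in S_i : a(v_i, X_i^\all(v_i,\tilde{\bm{v}}_{-i}), \{\ell\}) \geq p_\ell\}$, which is still a feasible request, so that every per-item term becomes a nonnegative multiple of $(\SW^B_\ell - p_\ell)^+$, then applying $(\cdot)^+ \geq (\cdot)$ at the very end — is correct and makes the probability swap legitimate while preserving the final bound. So this is the same route as the paper, but a strictly more careful version of it; the fix you supply would equally repair the analogous step in the proof of \refProposition{prop:utility_B}.
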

\begin{proof}
First, let us consider a fixed buyer $i \in [n]$ and a fixed buyers' valuation profile $\bm{v}$. Let $\tilde{\bm{v}}_{-i}$ be an independently sampled valuation profile for the buyers in $[n]\setminus\{i\}$, and consider the bundle $X_i^\all(v_i, \tilde{\bm{v}}_{-i})$ that $\mathbb{A}$ allocates to $i$ when the valuation profile is $(v_i, \tilde{\bm{v}}_{-i})$. Let $X_i^L(\bm{v}, \tilde{\bm{v}}_{-i}) = X_i^\all(v_i, \tilde{\bm{v}}_{-i}) \cap L \cap \Lambda_i(\bm{v}_{<i})$. 
Moreover, let $\bm{z}$ be a vector of $m$ Bernoulli random variables with $\expected{z_j} = 1/2$ and define for a subset $S(\bm{v}) \subseteq \Lambda_i{\bm{v}}$ the random variable $S(\bm{v},\bm{z}) = \bigcup_{j \in [m] : z_j = 1} (S \cap L_j)$. Particularly, from this definition we obtain the random variable $X_i(\bm{v}, \tilde{\bm{v}}_{-i}, \bm{z}) = \bigcup_{j \in [m] : z_j = 1} (X^L_{i}(\bm{v}, \tilde{\bm{v}}_{-i}) \cap L_j)$. Also, note that when the buyers' valuations are $\bm{v}$, the mechanism will let $i$ choose to request a bundle from the set $\Lambda_i(\bm{v}_{<i})$ with item prices $\bm{p}$, the buyer maximizes her expected utility and will therefore request the bundle $B_i(\bm{v})$ that maximizes her expected utility which can be expressed as 
\begin{equation*}
\mathbb{E}_{\bm{z}}\left[v_i(B(\bm{v},\bm{z})) - \sum_{\ell \in B(\bm{v},\bm{z})} p_{\ell}\right],
\end{equation*}
as by Propostion \ref{prop:sellers_1/2} each sellers' requested items will be allocated with probability $1/2$, as reflected by the Bernouilli variables $\bm{z}$.

Since $B_i(\bm{v})$ is an expected-utility-maximising bundle $X_{i, j}(\bm{v}, \tilde{\bm{v}}_{-i}, \bm{z}) \subseteq \Lambda_i(\rm{v})$, it holds that 
\begin{eqnarray*}
\mathbb{E}_{\bm{z}}\left[v_i(B(\bm{v},\bm{z})) - \sum_{\ell \in B(\bm{v},\bm{z})} p_{\ell}\right] & \geq & \expectedsub{\tilde{\bm{v}}_{-i}, \bm{z}}{v_i(X_i(\bm{v}, \tilde{\bm{v}}_{-i}, \bm{z})) - \sum_{\ell \in X_i(\bm{v}, \tilde{\bm{v}}_{-i}, \bm{z})} p_{\ell}} \\
 & \geq & \frac{1}{2} \expectedsub{\tilde{\bm{v}}_{-i}}{v_i(X^L_i(\bm{v}, \tilde{\bm{v}}_{-i})) - \sum_{\ell \in X^L_i(\bm{v}, \tilde{\bm{v}}_{-i})} p_{\ell}} \\
 & \geq & \frac{1}{2} \expectedsub{\tilde{\bm{v}}_{-i}}{a(v_i,X^\all_i(v_i, \bm{v}_{-i}), X_i^L(\bm{v}, \bm{v}_{-i})) - \sum_{\ell \in X^L_i(\bm{v}, \tilde{\bm{v}}_{-i})} p_\ell}\\
	&=& \frac{1}{2} \expectedsub{\tilde{\bm{v}}_{-i}}{\sum_{\ell \in X^L_i(\bm{v}, \tilde{\bm{v}}_{-i})} (\SW^B_\ell(v_i, \tilde{\bm{v}}_{-i}) - p_\ell)}.
\end{eqnarray*}
where the second inequality follows from \refProposition{prop:utility_z}, and the last inequality follows from the definition of the additive representative function $a(v_i,X^\all_i(v_i, \tilde{\bm{v}}_{-i}, \cdot)$. 

If we sum over all $i \in [n]$ and take the expectation w.r.t. every $v_i$, we obtain the following bound on the total expected utility of the buyers.
\begin{align*}
& \mathbb{E}_{\bm{v},\bm{z}}\left[\sum_{i = 1}^n (v_i(B(\bm{v},\bm{z})) - \sum_{\ell \in B(\bm{v},\bm{z})} p_{\ell})\right] \geq \frac{1}{2} \expectedsub{\bm{v},\tilde{\bm{v}}_{-i}}{\sum_{i = 1}^n \sum_{\ell \in X^L_i(\bm{v}, \tilde{\bm{v}}_{-i})} (\SW^B_\ell(v_i, \tilde{\bm{v}}_{-i}) - p_\ell)} \\
& \qquad = \frac{1}{2} \expectedsub{\bm{v}, \tilde{\bm{v}}_{-i}}{\sum_{i=1}^n \sum_{\ell \in L} (\SW^B_\ell(v_i, \tilde{\bm{v}}_{-i}) - p_\ell) \indicator{\ell \in X^L_i(\bm{v}, \tilde{\bm{v}}_{-i})}}\\
& \qquad = \frac{1}{2} \expectedsub{\bm{v}, \tilde{\bm{v}}_{-i}}{\sum_{i=1}^n \sum_{\ell \in L} (\SW^B_\ell(v_i, \tilde{\bm{v}}_{-i}) - p_\ell) \indicator{\ell \in X_i^\all(v_i, \tilde{\bm{v}}_{-i})} \indicator{\ell \in \Lambda_i(\bm{v}_{<i})}} \\
& \qquad = \frac{1}{2} \sum_{\ell \in L} \sum_{i=1}^n \expectedsub{v_i, \tilde{\bm{v}}_{-i}}{(\SW^B_\ell(v_i, \tilde{\bm{v}}_{-i}) - p_\ell) \indicator{\ell \in X_i^\all(v_i, \tilde{\bm{v}}_{-i})}} \expectedsub{\bm{v}_{-i}}{\indicator{\ell \in \Lambda_i(\bm{v}_{<i})}}.
\end{align*}
For the second-to-last equality, we exploited the independence of the events $(\ell \in X_i^\all(v_i, \tilde{\bm{v}}_{-i}))$ and $(\ell \in \Lambda_i(\bm{v}_{<i}))$. 
Then, $\expectedsub{\bm{v}_{-i}}{\indicator{\ell \in \Lambda_i(\bm{v}_{<i})}} = \prob{\ell \in \Lambda_i(\bm{v}_{<i})}$ and since $L = \Lambda_1(\bm{v}_{<1}) \supseteq \ldots \supseteq \Lambda_{n+1}(\bm{v})$, it holds that $\prob{\ell \in \Lambda_i(\bm{v}_{<i})} \geq \prob{\ell \in \Lambda_{n+1}(\bm{v})}$. 
So, we have that the above expression is at least
\begin{eqnarray*}
	&& \frac{1}{2} \sum_{\ell \in L} \probsub{\bm{v}}{\ell \in \Lambda_{n+1}(\bm{v})} \sum_{i=1}^n \expectedsub{v_i, \tilde{\bm{v}}_{-i}}{(\SW^B_\ell(v_i, \tilde{\bm{v}}_{-i}) - p_\ell) \indicator{\ell \in X_i^\all(v_i, \tilde{\bm{v}}_{-i})}}\\
	&=& \frac{1}{2} \sum_{\ell \in L} \probsub{\bm{v}}{\ell \in \Lambda_{n+1}(\bm{v})} \sum_{i=1}^n \expectedsub{\bm{v}}{(\SW^B_\ell(\bm{v}) - p_\ell) \indicator{\ell \in X_i^\all(\bm{v})}}. 
\end{eqnarray*}
The equality follows from renaming the random variable $v_j := \tilde{v}_j$ for all $j \neq i$ 
 Now observe that by definition of the prices, $p_\ell = \sum_{i=1}^n \expectedsub{\bm{v}}{(\SW^B_\ell(\bm{v}) - p_\ell) \indicator{\ell \in X_i^\all(\bm{v})}}$. Combining these derivations, we obtain the desired bound on the expected utilities
\begin{equation*}
\mathbb{E}_{\bm{v},\bm{z}}\left[\sum_{i = 1}^n (v_i(B(\bm{v},\bm{z})) - \sum_{\ell \in B(\bm{v},\bm{z})} p_{\ell})\right] \geq \frac{1}{2} \sum_{\ell \in L} \mathsf{Pr}_{\bm{v}}[\ell \in \Lambda_{n+1}(\bm{v})] p_\ell.
\end{equation*}
\qed
\end{proof}

\begin{lemma}
The expected sum of payments made by the buyers is equal to
\begin{equation*}
\frac{1}{2} \sum_{\ell \in L} \probsub{\bm{v}}{\ell \notin \Lambda_{n+1}(\bm{v})} p_\ell.
\end{equation*}
\end{lemma}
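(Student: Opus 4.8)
The plan is to mirror the argument of \refProposition{prop:revenue_B}, adapted to the fact that in $\mathbb{M}_{\text{add}}$ the sellers are processed in bundles rather than item by item. First I would observe that the total payment charged to the buyers equals
$\sum_{i\in[n]}\rho_i^B = \sum_{\ell\in L} p_\ell\,\indicator{\ell\text{ is allocated to some buyer}}$,
since by the definition of $\Lambda_1 = L$ only items in $L$ are ever requested, hence ever traded, and each allocated item $\ell$ contributes exactly $p_\ell$ to the buyers' payments by the definition of $\bm{\rho}^B$ (the $X_i$ being mutually disjoint and contained in $L$).

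Next I would characterise the event that a fixed item $\ell\in L$ is allocated to a buyer. Writing $j$ for the seller with $\ell\in I_j$ (so that $\ell\in L_j$), item $\ell$ is allocated precisely when (i) some buyer requested $\ell$ in Step~\ref{buyerloop}, i.e. $\ell\in B$, equivalently $\ell\notin\Lambda_{n+1}(\bm{v})$; and (ii) in Step~\ref{offer} the mechanism makes seller $j$ an offer for the bundle $S_j = B\cap L_j$ and $j$ accepts it. The point to stress is that (i) is a deterministic function of the buyers' valuation profile $\bm{v}$, and that conditioning on $\bm{v}$ also fixes the bundle $S_j$ for every seller $j$.

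The key step is to combine this with \refProposition{prop:sellers_1/2}. Conditioned on $\bm{v}$ (hence on $S_j$), the mechanism's coin deciding whether to make the offer and the seller's realized valuation $w_j$ are both independent of $\bm{v}$, so the computation in the proof of \refProposition{prop:sellers_1/2} shows that the conditional probability of event (ii) is exactly $q_j\,\prob{w_j(S_j)\le p(S_j)} = 1/2$, regardless of whether (i) occurs. Taking expectations and using linearity of expectation then gives
\begin{equation*}
\expected{\sum_{i\in[n]}\rho_i^B} \;=\; \expectedsub{\bm{v}}{\sum_{\ell\in L} p_\ell\,\indicator{\ell\notin\Lambda_{n+1}(\bm{v})}\cdot\tfrac12} \;=\; \frac12\sum_{\ell\in L} p_\ell\,\probsub{\bm{v}}{\ell\notin\Lambda_{n+1}(\bm{v})},
\end{equation*}
which is the claimed identity.

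I expect the only delicate point to be the bundled nature of the sellers' offers: because seller $j$'s requested items $S_j$ are sold on an all-or-nothing basis, one must verify that this does not create correlations that spoil the per-item factorisation. It does not, precisely because \refProposition{prop:sellers_1/2} yields the $1/2$ acceptance probability for the whole bundle $S_j$ at once, so every requested item of $j$ inherits the same conditional trade probability $1/2$ given $\bm{v}$, and the sum over $\ell\in L$ goes through item by item.
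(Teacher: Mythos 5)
Your proof is correct and takes essentially the same approach as the paper, which introduces a Bernoulli indicator $z_j$ for seller $j$'s offer-and-acceptance and factors $\prob{\ell\notin\Lambda_{n+1}(\bm{v})\wedge z_j=1}$ using independence and \refProposition{prop:sellers_1/2}. Your version is if anything a bit more careful: you make explicit that $S_j$ is determined by $\bm{v}$ and that the $1/2$ acceptance probability holds conditionally on $\bm{v}$, which is exactly why the independence the paper asserts is valid.
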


\begin{proof}
For $j \in [m]$, let $z_j$ be the random $(0,1)$-variable that indicates whether seller $j$ has been made an offer and accepted it in Step \ref{offer} of Mechanism $\mathbb{M}_{\text{add}}$, so $z_j = 1$ is a Bernouilli variable with expected value $1/2$. The expected sum of payments made by the buyers is then
\begin{eqnarray*}
\sum_{j = 1}^m \sum_{\ell \in L_j} \prob{\ell \notin \Lambda_{n+1}(\bm{v}) \wedge z_j = 1} p_\ell & = & \sum_{j = 1}^m \sum_{\ell \in L_j} \prob{\ell \notin \Lambda_{n+1}(\bm{v})} \prob{z_j = 1} p_\ell \\
 & = & \frac{1}{2} \sum_{j = 1}^m \sum_{\ell \in L_j} \prob{\ell \notin \Lambda_{n+1}(\bm{v})} p_\ell\\
 & = & \frac{1}{2} \sum_{\ell \in L} \prob{\ell \notin \Lambda_{n+1}(\bm{v})} p_\ell
\end{eqnarray*}
The second equality holds by the independence of the two events.
\qed
\end{proof}

\begin{lemma}
\begin{equation*}
\ALG^B \geq \frac{1}{4} \sum_{\ell \in L} \expectedsub{\bm{v}}{\SW^B_\ell(\bm{v})}.
\end{equation*}
\end{lemma}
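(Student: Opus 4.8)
The plan is to observe that $\ALG^B$, the buyers' contribution to the expected social welfare generated by $\mathbb{M}_{\text{add}}$, decomposes exactly as the sum of the buyers' total expected utility and the buyers' total expected payment. Indeed, for each buyer $i$ the quantity $v_i(X_i)$ equals $(v_i(X_i) - \rho^B_i) + \rho^B_i$, and summing over $i \in [n]$ and taking expectations gives $\ALG^B = \expected{\sum_{i \in [n]} u_i(\mathbb{M}_{\text{add}}(\bm{v},\bm{w}))} + \expected{\sum_{i \in [n]} \rho^B_i}$. This mirrors the corresponding step in the proof of \refLemma{lemma:ALG^B} for $\mathbb{M}_{\text{1-supply}}$.

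Next I would plug in the two preceding lemmas of this section: the expected total buyer utility is at least $\frac{1}{2}\sum_{\ell \in L} \probsub{\bm{v}}{\ell \in \Lambda_{n+1}(\bm{v})} p_\ell$, and the expected total buyer payment equals $\frac{1}{2}\sum_{\ell \in L} \probsub{\bm{v}}{\ell \notin \Lambda_{n+1}(\bm{v})} p_\ell$. Adding these, the complementary probabilities $\probsub{\bm{v}}{\ell \in \Lambda_{n+1}(\bm{v})}$ and $\probsub{\bm{v}}{\ell \notin \Lambda_{n+1}(\bm{v})}$ sum to $1$ for each $\ell$, so the two sums collapse to $\frac{1}{2}\sum_{\ell \in L} p_\ell$.

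Finally I would substitute the definition of the item prices, $p_\ell := \frac{1}{2}\expectedsub{\bm{v}}{\SW^B_\ell(\bm{v})}$, to conclude
\begin{equation*}
\ALG^B \;\geq\; \frac{1}{2}\sum_{\ell \in L} p_\ell \;=\; \frac{1}{4}\sum_{\ell \in L} \expectedsub{\bm{v}}{\SW^B_\ell(\bm{v})},
\end{equation*}
which is the claimed bound. I do not expect any real obstacle here: all the hard work — in particular the coupling argument with the independent copy $\tilde{\bm{v}}_{-i}$, the use of \refProposition{prop:utility_z} to account for the probability-$1/2$ seller acceptances, and the telescoping of $\Lambda_i(\bm{v}_{<i})$ — has already been absorbed into the two preceding lemmas, so the only content of this lemma is the clean additive bookkeeping "welfare $=$ utility $+$ payment" followed by a one-line substitution. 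The one point to state carefully is the identity $\ALG^B = (\text{utilities}) + (\text{payments})$, which relies on the buyers' payments being exactly $\rho^B_i = \sum_{\ell \in X_i} p_\ell$ as specified in the mechanism.
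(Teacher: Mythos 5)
Your proof matches the paper's argument exactly: decompose $\ALG^B$ into expected buyer utilities plus expected buyer payments, invoke the two preceding lemmas to bound each summand, note that the complementary probabilities sum to $1$, and substitute the definition $p_\ell = \tfrac{1}{2}\expectedsub{\bm{v}}{\SW^B_\ell(\bm{v})}$. No gaps; this is the intended bookkeeping step.
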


\begin{proof}
The expected social welfare contribution of the buyers is equal to the sum of the expected utilities and expected payments. 
By the above two lemmas, their sum is at least
\begin{equation*}
\frac{1}{2} \sum_{\ell \in L} \prob{\ell \in \Lambda_{n+1}(\bm{v})} p_\ell + \frac{1}{2} \sum_{\ell \in L} \prob{\ell \notin \Lambda_{n+1}(\bm{v})} p_\ell = \frac{1}{2} \sum_{\ell \in L} p_\ell = \frac{1}{4} \sum_{\ell \in L} \expectedsub{\bm{v}}{\SW^B_\ell(\bm{v})},
\end{equation*}
by definition of $p_\ell$.
\qed
\end{proof}

\begin{lemma}
\begin{equation*}
4\alpha \ALG^B + 4\alpha \ALG^S \geq \OPT^B.
\end{equation*}
\end{lemma}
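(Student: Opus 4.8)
The plan is to combine the three previously established lemmas for $\mathbb{M}_{\text{add}}$ with the counterpart of Proposition~\ref{prop:cover_sellers}, essentially mirroring the structure of the proof of Lemma~\ref{lemma:ALG^B} for $\mathbb{M}_{\text{1-supply}}$. First I would invoke the immediately preceding lemma, which gives $\ALG^B \geq \frac{1}{4} \sum_{\ell \in L} \expected{\SW^B_\ell(\bm{v})}$; this takes care of the high-welfare items. The role of $\ALG^S$ will be to absorb the contribution of the low-welfare items $\bar{L}$: since $\mathbb{M}_{\text{add}}$ only ever trades items in $L$, every item $\ell \in \bar{L}$ stays with its original seller, so $\ALG^S \geq \sum_{\ell \in \bar{L}} \expected{\SW^S_\ell(\bm{w})}$ (using additivity of the sellers' valuations to split the seller contribution item by item).

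Next I would chain these together with Proposition~\ref{prop:cover_sellers}. Concretely,
\begin{equation*}
4\ALG^B + 4\ALG^S \geq \sum_{\ell \in L} \expected{\SW^B_\ell(\bm{v})} + 4 \sum_{\ell \in \bar{L}} \expected{\SW^S_\ell(\bm{w})} > \sum_{i=1}^n \expected{v_i(X_i^\all(\bm{v}))},
\end{equation*}
where the second step is exactly Proposition~\ref{prop:cover_sellers}. Finally, since $\mathbb{A}$ is an $\alpha$-approximation to the optimal buyers' social welfare, $\sum_{i=1}^n \expected{v_i(X_i^\all(\bm{v}))} \geq \frac{1}{\alpha} \OPT^B$ (here $\OPT^B$ denotes the buyers' contribution in a welfare-optimal reallocation, and since one could reallocate \emph{all} items to the buyers, an optimal buyer-only allocation is at least $\OPT^B$). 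Multiplying through by $\alpha$ yields $4\alpha \ALG^B + 4\alpha \ALG^S \geq \OPT^B$.

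The only subtlety worth checking carefully — and the step I expect to require the most care — is the claim $\ALG^S \geq \sum_{\ell \in \bar{L}} \expected{\SW^S_\ell(\bm{w})}$: one must confirm that $\mathbb{M}_{\text{add}}$ genuinely never reallocates a low-welfare item (which holds because $\Lambda_1 = L$ and buyers only ever request subsets of $\Lambda_1$, and sellers are only offered payment for requested items, all of which lie in $L_j \subseteq L$), and that the remaining items at each seller contribute their full valuation to $\ALG^S$ by additivity, with the traded items contributing a nonnegative amount. Everything else is a direct substitution of the preceding lemmas, so no new estimates are needed.
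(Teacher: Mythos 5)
Your proof is correct and follows essentially the same route as the paper's: you invoke the preceding lemma for $\ALG^B$, observe that every item in $\bar{L}$ stays with its seller to bound $\ALG^S$, chain through Proposition~\ref{prop:cover_sellers}, and conclude with the $\alpha$-approximation guarantee of $\mathbb{A}$. The extra care you flag about why $\mathbb{M}_{\text{add}}$ never trades a low-welfare item is implicit in the paper's one-line remark but is correctly identified and justified by you.
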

\begin{proof}
By the above lemma, $4 \ALG^B \geq \sum_{\ell \in L} \mathbb{E}_{\bm{v}}[\SW^B_\ell(\bm{v})]$. Moreover, our mechanism leaves every item $\ell \in \bar{L}$ with its seller, and so $4\ALG^S \geq 4 \sum_{\ell \in \bar{L}} \mathbb{E}_{\bm{w}}[\SW^S_\ell(\bm{v})]$. Therefore,
\begin{equation*}
4 \ALG^B + 4 \ALG^S \geq \sum_{\ell \in L} \mathbb{E}_{\bm{v}}{\SW^B_\ell(\bm{v})} + 4 \sum_{\ell \in \bar{L}} \mathbb{E}_{\bm{w}}[\SW^S_\ell(\bm{v})] \geq \sum_{i=1}^n \mathbb{E}_{\bm{v}}{v_i(X_i^\all(\bm{v}))} \geq \frac{1}{\alpha}\OPT^B,
\end{equation*}
The second inequality holds by \refProposition{prop:cover_sellers}, and the last inequality follows because we defined $\alpha$ to be the approximation factor of algorithm $\mathbb{A}$, which is the algorithm that we assumed to generate allocation $X^\all(\bm{v})$.

\qed
\end{proof}

\begin{lemma}
\begin{equation*}
2 \ALG^S \geq \OPT^S.
\end{equation*}
\end{lemma}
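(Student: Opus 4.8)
The plan is to use the additivity of the sellers to write $\ALG^S$ as the sellers' total initial value minus the value of what gets traded away, and to show that no seller ever gives up more than half of her expected value. First observe that $\OPT^S$ is trivially bounded: in any allocation the bundle $Y_j^{**}$ that the optimum leaves with seller $j$ satisfies $w_j(Y_j^{**}) = w_j(Y_j^{**}\cap I_j)\le w_j(I_j)$ by the seller restriction together with monotonicity, so $\OPT^S\le\sum_{j\in[m]}\expected{w_j(I_j)} = \sum_{\ell\in[k]}\expected{\SW^S_\ell(\bm w)}$. Hence it suffices to prove $\ALG^S\ge\tfrac12\sum_{j\in[m]}\expected{w_j(I_j)}$. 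Fix a seller $j$ and let $A_j$ be the event that the mechanism makes the Step~\ref{offer} offer to $j$ \emph{and} she accepts it; when $A_j$ occurs the entire demanded bundle $S_j$ is transferred (each item to the buyer that requested it), and otherwise nothing leaves $j$. Since $w_j$ is additive, $w_j(Y_j) = w_j(I_j) - w_j(S_j)\indicator{A_j}$, so $\ALG^S = \sum_j\bigl(\expected{w_j(I_j)} - \expected{w_j(S_j)\indicator{A_j}}\bigr)$, and the lemma reduces to the per-seller inequality $\expected{w_j(S_j)\indicator{A_j}}\le\tfrac12\expected{w_j(I_j)}$. (The items in $\bar L_j$ and in $L_j\setminus S_j$ are handled automatically, being part of $I_j\setminus S_j$.)

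To establish the per-seller bound I would condition on the buyers' valuation profile $\bm v$. The buyer loop of $\mathbb{M}_{\text{add}}$ is deterministic in $\bm v$, so $S_j$, $p(S_j)$, and $q_j = 1/\bigl(2\,\prob{w_j(S_j)\le p(S_j)}\bigr)$ become constants; moreover $w_j$ is independent of $\bm v$, and the mechanism's coin deciding whether the offer is made is independent of $w_j$. Therefore
\[
\expected{w_j(S_j)\indicator{A_j}\mid\bm v} = q_j\,\expected{w_j(S_j)\indicator{w_j(S_j)\le p(S_j)}}.
\]
Now I invoke the elementary fact that truncating a nonnegative random variable from above cannot raise its mean, i.e.\ $\expected{w_j(S_j)\indicator{w_j(S_j)\le p(S_j)}}\le\prob{w_j(S_j)\le p(S_j)}\cdot\expected{w_j(S_j)}$ (equivalently $\mathrm{Cov}\bigl(w_j(S_j),\indicator{w_j(S_j)\le p(S_j)}\bigr)\le 0$). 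Combined with the definition of $q_j$ — which is well defined and lies in $[\tfrac12,1]$ precisely because $\prob{w_j(S_j)\le p(S_j)}\ge\tfrac12$ by \refProposition{prop:sellers_1/2}, the one place where $S_j\subseteq L$ enters — this yields $\expected{w_j(S_j)\indicator{A_j}\mid\bm v}\le\tfrac12\expected{w_j(S_j)}$. Taking the expectation over $\bm v$ and using $S_j\subseteq I_j$ with monotonicity of $w_j$ gives $\expected{w_j(S_j)\indicator{A_j}}\le\tfrac12\expected{w_j(S_j)}\le\tfrac12\expected{w_j(I_j)}$; summing over $j$ finishes the proof.

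The only delicate point I foresee is the bookkeeping of the sources of randomness: one must check that, after conditioning on $\bm v$, the offer-coin for seller $j$ is independent of $w_j$ while $q_j$, $S_j$ and $p(S_j)$ are genuine constants, and that the above-truncation inequality is applied to the conditional law of $w_j$ given $\bm v$ (which coincides with its unconditional law $F_j$). Granting this separation, everything else is routine, and the bound is tight in the same way as \refLemma{lemma:ALG^S} for the unit-supply mechanism.
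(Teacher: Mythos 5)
Your proof is correct, and it is actually more careful than the paper's own argument. The paper dispatches this lemma in one sentence: since each seller is made an offer that she accepts with probability exactly $1/2$, ``she retains her full initial endowment with probability at least $1/2$, which implies the claim.'' Read literally, this deduces $\mathbb{E}[w_j(Y_j)]\geq\frac12\mathbb{E}[w_j(I_j)]$ from $\Pr[A_j^c]=1/2$, which requires that conditioning on $A_j^c$ (no trade) does not lower $\mathbb{E}[w_j(I_j)]$ --- a fact that is not immediate, because the event $A_j^c$ is correlated with $w_j(S_j)$, and $w_j(S_j)$ may in turn be correlated (in either direction) with $w_j(I_j\setminus S_j)$ under the distribution $F_j$. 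Your ``value lost'' decomposition $w_j(Y_j)=w_j(I_j)-w_j(S_j)\indicator{A_j}$ isolates exactly the random variable that matters, and the truncation inequality $\expected{w_j(S_j)\indicator{w_j(S_j)\le p(S_j)}}\le\prob{w_j(S_j)\le p(S_j)}\,\expected{w_j(S_j)}$ (negative covariance of a variable with its own lower-tail indicator) handles the one correlation that actually enters; the rest --- conditioning on $\bm v$, independence of $w_j$ from $\bm v$ and from the offer coin, the bound $q_j\le 1$ from \refProposition{prop:sellers_1/2}, and $\expected{w_j(S_j)}\le\expected{w_j(I_j)}$ --- is exactly the bookkeeping you flag at the end. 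In short: same underlying idea (the $1/2$ sale probability gives a factor-$2$ loss on the sellers' side, and $\bar L$ items are untouched), but you make precise a step the paper leaves implicit, and your argument does not need any independence across items within a seller's bundle.
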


\begin{proof}
The only items that our mechanisms potentially reallocates are the ones belonging to $L$. Every item in $\bar{L}$ stays with its seller. For the items in $L$, the mechanism ensures every seller sells her demanded bundle with probability exactly $1/2$, so for each seller it holds that she retains her full initial endowment with probability at least $1/2$, which implies the claim.
\qed
\end{proof}

\begin{proof}[of Theorem \ref{thm:bicmech}]
Every buyer chooses a bundle that maximizes her expected utility, so the mechanism is ex-interim IR and BIC on the buyers' side. On the sellers' side, it is actually ex-post IR and DSIC: the sellers solely have to decide between accepting or rejecting a single offer to receive a proposed payment in exchange for a bundle of items, and it is clearly a dominant strategy to accept if and only if such an exchange leads to an improvement in the seller's utility. The fact that the mechanism is DSBB follows from its definition, which makes clear that payments are defined by the appropriate sequence of trades and payments from buyers to sellers. The approximation guarantee follows by the sum of the inequalities of the above two lemmas.
\qed
\end{proof}
By taking for $\mathbb{A}$ an optimal algorithm (i.e., $\alpha = 1$), we obtain the existence of a mechanism that is ex-post IR, DSIC, DSBB, and $6$-approximates the optimal social welfare.
Again, one may also take for $\mathbb{A}$ a polynomial time approximation algorithm in order to obtain a polynomial time mechanism.

\begin{proof}[of Corollary \ref{cor:additive}]
If a buyer $i \in [n]$ has an additive valuation function, it is a dominant strategy to request the items in $\Lambda_i(\bm{v}_{<i})$) for which it holds that $v_i(\{\ell\}) > p_\ell$. This follows from the simple fact that by additivity, the utility that a player has for any bundle of items $S$ can be written as 
$\sum_{\ell \in S} v_i(\{\ell\}) - p_{\ell}$.
Thus, for every item $\ell \in [k]$ that a buyer requests (recall that this item is then allocated to her for price $p_\ell$ with probability $1/2$), a term of $(1/2) (v_i(\{\ell\}) - p_{\ell})$ gets added to her expected utility. So including $\ell$ in her requested bundle is profitable if and only if $v_i(\{\ell\}) - p_{\ell} \geq 0$. Ex-post IR property is also satisfied by following this strategy.
\qed
\end{proof}

\end{document}